\lstdefinelanguage{json}
{
  basicstyle=\scriptsize,  	
  morestring=[b]",
	morestring=[b]',
	morestring=[b]"""    
commentstyle=\color{grey},
    showstringspaces=false,
}
\theoremstyle{plain}
\newtheorem{theorem}{Theorem}
\newtheorem{proposition}{Proposition}
\newtheorem{lemma}{Lemma}
\newtheorem{definition}{Definition}
\crefname{subsection}{section}{sections}
\crefname{subsubsection}{section}{sections}
\newcommand{\lang}{{\textsf{\textmd{CPL}}}\xspace}
\lstdefinelanguage{Scala}{
	morekeywords={},
	morekeywords=[2]{
		public, private, protected,
		abstract,case,catch,class,def,
		do,else,extends,false,final,finally,
		for,if,implicit,import,match,mixin,
		new,null,object,override,package,
		private,protected,requires,return,sealed,
		super,this,throw,trait,true,try,
		type,val,var,while,with,yield,
		lazy,evt,observable,imperative,
                after,before},
	otherkeywords={=>,<-,<\%,<:,>:,\#,@},
	sensitive=true,
	morecomment=[l]{//},
	morecomment=[n]{/*}{*/},
	morestring=[b]",
	morestring=[b]',
	morestring=[b]"""
}
\lstdefinelanguage{cpl}{
          morekeywords={
                  this,snap,repl,let,letk,case,
                  matchk,in,as,thunk,spwn,srv,inst,
                  img,par,if,then,else},
          morekeywords=[2]{Top,Unit,Int,Dbl,String,Pair,Map,List},
          sensitive=true,
          columns=fullflexible,
          keepspaces=true,
          morecomment=[l]{//},
          morecomment=[n]{/*}{*/},
          moredelim=[is][\sffamily]{!}{!},
          moredelim=[is][\fontfamily{cmr}\itshape]{'}{'},
          basicstyle=\fontfamily{cmtt}\small,
          keywordstyle=\fontfamily{cmr}\bfseries,
          keywordstyle=[2]\sffamily,
          escapeinside={(*@}{@*)},
          literate={\%bot}{{$\bot$}}1 {[}{{$[$}}1 {]}{{$]$}}1 {\%w}{{$\omega$}}1 {\%a}{{$\alpha$}}1 {\%All}{{$\forall$}}1 {\%b}{{$\beta$}}1 {\%l}{{$\lambda$}}1 {\%L}{{$\Lambda$}}1 {\%y}{{$\gamma$}}1 {\%k}{{$\kappa$}}1 {\%v}{{$\nu$}}1 {\%e}{{$\varepsilon$}}1 {||}{{$\parallel$}}1 {!=}{{$\neq$}}1 {\#}{{$\sharp$}}1 {,}{{$,$}}1 {.}{{$.$}}1  {(}{{$($}}1 {)}{{$)$}}1 {\{}{{$\{$}}1 {\}}{{$\}$}}1  {~>}{{$\rightsquigarrow$}}1  {<}{{$\langle$}}1 {<:}{{$\subtpe$}}1 {>}{{$\rangle$}}1 {<=}{{$\leq$}}1 {>=}{{$\geq$}}1 {:}{{$:$}}1 {:>}{{$\triangleright$}}1  {->}{{$\to$}}1
}
\let\origthelstnumber\thelstnumber
\newcommand*\Suppressnumber{%
  \lst@AddToHook{OnNewLine}{%
    \let\thelstnumber\relax%
     \advance\c@lstnumber-\@ne\relax%
    }%
}
\newcommand*\Reactivatenumber{%
  \lst@AddToHook{OnNewLine}{%
   \let\thelstnumber\origthelstnumber%
   \advance\c@lstnumber\@ne\relax}%
}
\newcommand{\cplinline}[1]{\lstinline[language=cpl,mathescape=true,basicstyle=\fontfamily{cmtt}\normalsize]|#1|}
\def\tecreport{}
\newcommand{\figfontsize}{\footnotesize}
\newcommand{\lt}{\mathligsoff < \mathligson}
\newcommand{\Set}[1]{\ensuremath{\left\{#1\right\}}} 
\newcommand{\dom}[1]{\ensuremath{\operatorname{dom}(#1)}} 
\newcommand{\mapadd}[2]{\ensuremath{#1[#2]}} 
\newcommand{\Nat}{\ensuremath{\mathbb{N}}}
\newcommand{\Names}{\ensuremath{\mathcal{N}}}
\newcommand{\Subst}[1]{\ensuremath{\Set{#1}}}
\newcommand{\tuple}[1]{\ensuremath{\overline{#1}}}
\newcommand{\repeatJdgS}[2]{\forall#1.\ \ #2}
\newcommand{\existsJdg}[2]{\exists#1.\ \ #2}
\newcommand{\repl}[2]{\ensuremath{\<repl>#1\;#2}}
\newcommand{\dead}{\ensuremath{\mathbf{0}}}
\newcommand{\srvt}[1]{\ensuremath{\<srv> #1}}
\newcommand{\svc}[2]{\ensuremath{#1\sharp#2}}
\newcommand{\joined}[0]{\ensuremath{\mathop{\&}}}
\newcommand{\prlll}[1]{\ensuremath{\<par> #1}}
\newcommand{\trr}{\ensuremath{\triangleright}}
\newcommand{\CPSname}{\ensuremath{\mathsf{T}}\xspace}
\newcommand{\CPS}[2]{\ensuremath{\CPSname(#1, #2)}}
\newcommand{\Tuni}[2]{\ensuremath{\forall #1.\; #2}}
\newcommand{\Tsvc}[1]{\ensuremath{<\,#1\,>}}
\newcommand{\Tsrv}[1]{\ensuremath{\<srv> #1 }}
\newcommand{\Tsrvinst}[1]{\ensuremath{\<inst> #1}}
\newcommand{\Type}[1]{\ensuremath{\mathsf{#1}}}
\newcommand{\Tapp}[2]{\ensuremath{#1\;[#2]}}
\newcommand{\Tabs}[2]{\ensuremath{\Lambda #1.\, #2}}
\newcommand{\Top}{\Type{Top}}
\newcommand{\Unit}{\Type{Unit}}
\newcommand{\Tfun}[2]{\ensuremath{#1\to #2}}
\newcommand{\tpe}{\colon}
\newcommand{\Timg}[1]{\ensuremath{\<img> #1}}
\newcommand{\Tjudge}[3]{\ensuremath{#1 \vdash\nobreak #2 : \nobreak #3}}
\newcommand{\ftv}[1]{\ensuremath{\functionsymbol{ftv}{#1}}}
\newcommand{\functionsymbol}[2]{\ensuremath{\operatorname{#1}\ifthenelse{\isempty{#2}}{}{(#2)}}}
\newcommand{\context}[2]{\ensuremath{#1\ifthenelse{\isempty{#2}}{}{[#2]}}}
\newcommand{\hole}{\ensuremath{[\cdot]}}
\newcommand{\matchp}[2]{\ensuremath{\functionsymbol{match}{#1}\Downarrow(#2)}}
\newcommand{\oto}[1]{\ensuremath{1 \ldots #1}}
\newcommand{\subdereq}{\ensuremath{\mathbin{\preceq}}}
\newcommand{\subtpe}{\ensuremath{\mathbin{\!\scalerel{{\lt}\!}{\colon}\!\!}}}
\newcommand{\Subjudge}[3]{\ensuremath{#1\vdash#2 \subtpe #3}}
\newcommand{\id}[1]{\ensuremath{\mathit{#1}}}
\DeclareMathSymbol{\mlq}{\mathord}{operators}{``}
\DeclareMathSymbol{\mrq}{\mathord}{operators}{`'}
\newcommand{\negphantom}{\v@true\h@true\negph@nt} 
\newcommand{\neghphantom}{\v@false\h@true\negph@nt} 
\newcommand{\negph@nt}{\ifmmode\expandafter\mathpalette 
  \expandafter\mathnegph@nt\else\expandafter\makenegph@nt\fi} 
\newcommand{\makenegph@nt}[1]{%
  \setbox\z@\hbox{\color@begingroup#1\color@endgroup}\finnegph@nt} 
\newcommand{\finnegph@nt}{%
  \setbox\tw@\null 
  \ifv@ \ht\tw@\ht\z@\dp\tw@\dp\z@\fi \ifh@\wd\tw@-\wd\z@\fi\box\tw@} 
\newcommand{\mathnegph@nt}[2]{%
  \setbox\z@\hbox{$\m@th #1{#2}$}\finnegph@nt} 
\newcommand{\implementation}{\url{https://github.com/seba--/djc-lang}}
\begin{document}
\setlength{\abovedisplayskip}{0pt}
\setlength{\belowdisplayskip}{0pt}
\setlength{\abovedisplayshortskip}{0pt}
\setlength{\belowdisplayshortskip}{0pt}

\title{\lang: A Core Language for Cloud Computing}
\ifdef{\tecreport}{\subtitle{Technical Report for the Conference Publication~\cite{BracevacConf}}}{}
\authorinfo
{
Oliver Bra\v{c}evac$^{1}$ \and
Sebastian Erdweg$^{1}$\and
Guido Salvaneschi$^{1}$\and
Mira Mezini$^{1,2}$}
{$^1$TU Darmstadt, Germany\quad $^2$Lancaster University, UK}

\maketitle

\begin{abstract}
Running distributed applications in the cloud involves deployment.
That is, distribution and configuration of application services and middleware infrastructure. 
The considerable complexity of these tasks resulted in 
the emergence of declarative JSON-based domain-specific \emph{deployment languages}
to develop \emph{deployment programs}.
However, existing deployment programs unsafely compose artifacts written in different languages, 
leading to bugs that are hard to detect before run time. Furthermore,
deployment languages do not provide extension points for custom implementations of existing cloud services such as application-specific load balancing policies.

To address these shortcomings, we propose 
\lang{} (\emph{Cloud Platform Language}), a statically-typed core 
language for programming both distributed applications as well as 
their deployment on a cloud platform. In \lang{}, application services and
deployment programs interact through statically typed, extensible interfaces,
and an application can trigger further deployment at run time.
We provide a formal semantics of \lang{}
and demonstrate that it enables type-safe, composable and extensible libraries
of \emph{service combinators}, such as load balancing and fault tolerance.


\end{abstract} 

\ifdef{\tecreport}{}{
\category{C.2.4}{Computer-Communication Networks}{Distributed Systems}
\category{D.1.3}{Programming Techniques}{Concurrent Programming}
\category{D.3.3}{Programming Languages}{Language Constructs and Features}

\terms Design, Languages, Theory

\keywords Cloud deployment; cloud computing; computation patterns; join calculus
}

\section{Introduction}
\label{sec:introduction-1}

Cloud computing~\cite{Vaquero:2008:BCT:1496091.1496100} has emerged as the reference
infrastructure for concurrent distributed services with high availability, resilience
and quick response times, providing access
to on-demand and location-transparent computing resources. Companies
develop and run distributed applications on specific cloud platforms, e.g., 
Amazon AWS\footnote{\url{https://aws.amazon.com}} or
Google Cloud Platform.\footnote{\url{https://cloud.google.com}}
Services are bought as needed from the cloud provider in order to
adapt to customer demand,

An important and challenging task in the development process of cloud applications is \emph{deployment}.
Especially, deployment involves the distribution, configuration and composition of
(1)  virtual machines that implement the application and its services,
and of
(2) virtual machines that provide middleware infrastructure such as load balancing, key-value stores, and MapReduce.
Deploying a cloud application can go wrong and cause the application to malfunction. 
Possible causes are software bugs in the  application itself, but also
wrong configurations, such as missing library dependencies or inappropriate 
permissions for a shell script.
Fixing mistakes \emph{after} deployment causes high costs and loss of reputation.
For example, in 2012, Knight Capital lost over \$440 Million over the course of 30 minutes due to 
a bug in its deployed trading software,\footnote{\url{http://www.bloomberg.com/bw/articles/2012-08-02/knight-shows-how-to-lose-440-million-in-30-minutes}.} causing the disappearance of the company from the market.

Considering that cloud applications can have deployment sizes in the hundreds or thousands of 
virtual machines, manual configuration is error-prone and does not scale. Cloud platforms address this issue with
domain-specific languages (DSLs) such as Amazon CloudFormation or Google Cloud Deployment Manager. 
The purpose of these DSLs is to write reusable \emph{deployment programs}, which instruct the cloud platform 
to perform deployment steps automatically. A typical deployment program
specifies the required virtual machines for the application infrastructure, how these virtual
machines are connected with each other, and how the application infrastructure connects to the
pre-existing or newly created middleware infrastructure of the cloud platform.

However, the modularity of current cloud deployment DSLs is insufficient (detailed discussion in Section~\ref{sec:motivation}):

{\bf Unsafe Composition:} Application services and deployment programs are written
  in different languages. Deployment
  DSLs configure application services by lexically expanding configuration
  parameters into application source code before its execution.
  This approach is similar to a lexical macro system and makes deployment
  programs unsafe because of unintentional code injection and lexical incompatibilities.

{\bf No Extensibility:} 
Middleware cloud services (e.g., elastic load balancing, which may
dynamically allocate new virtual machines) are pre-defined in the cloud platform
and only referenced by the deployment program through external interfaces. 
As such, there is no way to customize those services during deployment or extend them with additional features.

{\bf Stage Separation:} Current deployment DSLs finish their execution before
the application services are started. Therefore, it is impossible to change the
deployment once the application stage is active.  Thus,
applications cannot self-adjust their own deployment, e.g., to react to time-varying
customer demand.

We propose \lang{} (Cloud Platform Language), a statically-typed core language 
for programming cloud applications and deployments. 
\lang{} employs 
techniques from programming language design and type systems to
overcome the issues outlined above.
Most importantly, \lang{} unifies the programming of deployments and applications into a single language.
This avoids unsafe composition because deployments and applications can exchange values directly via statically typed interfaces.
For extensibility, \lang{} supports higher-order service combinators with statically typed interfaces using bounded polymorphism.
Finally, \lang{} programs run at a single stage where an application service can trigger further deployment.

To demonstrate how \lang{} solves the problems of deployment languages,
we implemented a number of case studies. First, we demonstrate type-safe composition 
through generically typed worker and thunk abstractions.
Second, on top of the worker abstraction, we define composable and reusable \emph{service combinators} in \lang{},
which add new features, such as elastic load balancing and fault tolerance.
Finally, we demonstrate how to model MapReduce as a deployment program in \lang{} and apply our
combinators, obtaining different MapReduce variants, which safely deploy at run time.

In summary, we make the following contributions:
\begin{itemize}[topsep=.1ex]
\item We analyze the problems with current cloud deployment DSLs.
\item We define the formal syntax and semantics of \lang{} to model
  cloud platforms as distributed, asynchronous message passing
  systems.  Our design is inspired by the Join
  Calculus~\cite{Fournet-Gonthier:popl96}.
\item We define the type system of \lang{} as a variant of System F with bounded quantification~\cite{pierce2002types}.
\item We formalize \lang{} in PLT Redex~\cite{FelleisenFF09Redex} and we
  provide a concurrent implementation in Scala.
\item We evaluated \lang{} with case studies, including a library of
  typed service combinators that model elastic load balancing and
  fault tolerance mechanisms. Also, we apply the combinators to a MapReduce
  deployment specification.
\end{itemize}

\noindent
The source code of the PLT Redex and Scala implementations and of all
case studies is available online:\linebreak \implementation.


\section{Motivation}
\label{sec:motivation}

In this section, we analyze the issues that programmers encounter with current
configuration and deployment languages on cloud platforms by a concrete example.

\begin{figure}
  \centering
\begin{codenv}
{ //... 
 "Parameters": { (*@\label{param0}@*)
  "InstanceType": { (*@\label{param}@*)
   "Description": "WebServer EC2 instance type", 
   "Type": "String",
   "Default": "t2.small",
   "AllowedValues": [ "t2.micro", "t2.small ]",
   "ConstraintDescription": "a valid EC2 instance type."
  }  //...
 },   (*@\label{paramn}@*)
 "Resources": {  (*@\label{resources0}@*)
   "WebServer": {  (*@\label{ws}@*)
     "Type": "AWS::EC2::Instance",  (*@\label{badtype}@*)
     "Properties": {
       "InstanceType": { "Ref" : "InstanceType" }  (*@\label{instype}@*),
       "UserData": { "Fn::Base64" : { "Fn::Join" : ["", [
             "#!/bin/bash -xe\n",  (*@\label{script0}@*)
             "yum update -y aws-cfn-bootstrap\n",

             "/opt/aws/bin/cfn-init -v ",
             "   --stack ", { "Ref" : "AWS::StackName" },
             "   --resource WebServer ",
             "   --configsets wordpress_install ",
             "   --region ", { "Ref" : "AWS::Region" }, "\n" (*@\label{scriptn}@*)
       ]]}}, //...
     },
   }
 },     (*@\label{resourcesn}@*)
 "Outputs": {   (*@\label{out0}@*)
   "WebsiteURL": {
     "Value": 
       { "Fn::Join" : 
        ["", ["http://", { "Fn::GetAtt" : 
         [ "WebServer", "PublicDnsName" ]}, "/wordpress" ]]},
     "Description" : "WordPress Website"
   }
 }   (*@\label{outn}@*)
} (*@\label{end}@*)
\end{codenv}
\caption{A deployment program in CloudFormation (details omitted, full version:
\scriptsize\url{https://s3.eu-central-1.amazonaws.com/cloudformation-templates-eu-central-1/WordPress_Multi_AZ.template}.}
\vspace{-2mm}
\label{fig:cloudformationexample}
\end{figure}

\subsection{State of the Art}

Figure~\ref{fig:cloudformationexample} shows an excerpt of a
deployment program in CloudFormation, a JSON-based DSL for Amazon AWS. 
The example is from the CloudFormation documentation.
We summarize the main characteristics of the deployment language
below.

\begin{itemize}

\item Input parameters capture varying details of a configuration
  (Lines~\ref{param0}-\ref{paramn}). For example, the program
  receives the virtual machine instance type that should host the web server for a user blog (Line~\ref{param}). 
  This enables reuse of the program with different parameters.

\item CloudFormation programs specify {\it named resources} to be created in the
  deployment (Lines~\ref{resources0}-\ref{resourcesn}), e.g., deployed
  virtual machines, database instances, load balancers and even other
  programs as modules. The program in Figure~\ref{fig:cloudformationexample}
  allocates a "WebServer" resource (Line~\ref{ws}), which is a virtual machine instance.
  The type of the virtual machine references a parameter (Line~\ref{instype}), that
the program declared earlier on (Line~\ref{param}). Resources can
  refer to each other, for example, configuration parameters of a web
  server may refer to tables in a database resource.

\item Certain configuration phases require to execute application code inside
  virtual machine instances after the deployment stage.
  Application code is often directly specified 
  in resource bodies (Lines~\ref{script0}-\ref{scriptn}). In the example, a
  bash script defines the list of software packages to install on the
  new machine instance (in our case a WordPress\footnote{\url{http://wordpress.org}} blog). 
  In principle, arbitrary programs in any language can be specified.

\item Deployment programs specify output parameters
  (Lines \ref{out0}-\ref{outn}), which may depend on input parameters
  and resources. Output parameters are returned to the caller after executing the
  deployment program. In this example, it is a URL pointing to the new 
  WordPress blog.

\item The deployment program is interpreted at run time by the cloud platform
  which performs the deployment steps according to the specification.

\end{itemize}

\subsection{Problems with Deployment Programs}

In the following we discuss the issues with the CloudFormation example described
above. 

\begin{description}

\item[Internal Safety] Type safety for deployment programs is
  limited. Developers define ``types'' for resources of the cloud
  platform, e.g., {\tt AWS::EC2::Instance} (Line~\ref{badtype})
  represents an Amazon EC2 instance. However, the typing system of
  current cloud deployment languages is  primitive and only
  relies on the JSON types. 

\item[Cross-language Safety] Even more problematic are issues caused
  by cross-language interaction between the deployment language and
  the language(s) of the deployed application services. For example, the {\tt AWS::Region}
  variable is passed from the JSON specification to
  the bash script (Line~\ref{scriptn}). However, the sharing mechanism
  is just syntactic replacement of the current value of {\tt AWS::Region}
  inside the script. Neither are there type-safety checks nor syntactic checks
  before the script is executed. More generally,
  there is no guarantee that the data types of the deployment language
  are compatible with the types of the application language nor that
  the resulting script is syntactically correct. This problem makes
  cloud applications susceptible to hygiene-related bugs and injection attacks~\cite{Bravenboer:2007aa}.

\item[Low Abstraction Level] Deployment languages
typically are Turing-complete but the abstractions are low-level and not deployment-specific.
For example, (1) deployment programs receive parameters and 
return values similar to procedures and (2) deployment programs can be
instantiated from inside other deployment programs, which resembles modules.
Since deployment is a complex engineering task, advanced language features
are desirable to facilitate programming in the large, e.g., 
higher-order combinators, rich data types and strong interfaces.

\item[Two-phase Staging]  Deployment programs in current DSLs execute before
the actual application services, that is, information flows from the deployment language
 to the deployed application services, but not the other way around. 
 As a result, an application service cannot reconfigure a deployment based on the
 run time state. Recent scenarios in reactive and big data computations
 demonstrate that this is a desirable feature~\cite{Fernandez:2013fb}.

\item[Lack of Extensibility] Resources and service references in
deployment programs refer to pre-defined abstractions of the cloud platform,
which have rigid interfaces. Cloud platforms determine the semantics
of the services. Programmers cannot implement their own variants of services that
plug into the deployment language with the same interfaces as the native services.

\item[Informal Specification] The behavior of JSON deployment scripts is
  only informally defined. The issue is exacerbated by the
  mix of different languages. As a result, it is hard
  to reason about properties of systems implemented using deployment
  programs.

\end{description}

The issues above demand for a radical
change in the way programmers deploy cloud applications and in the way
application and deployment configuration code relate to each other.


\section{The Cloud Platform Language}
\label{sec:cloud-calculus}

A solution to the problems identified in the previous section requires an holistic approach
where cloud abstractions are explicitly represented in the language.
Programmers should be able to modularly specify application behavior as well as
reconfiguration procedures. Run time failures should be prevented at
compilation time through type checking.

These requirements motivated the design of \lang{}. 
In this section, we present its syntax and the operational semantics.
\subsection{Language Features in a Nutshell}
\label{sec:design-principles}

\begin{description}
\item[\textbf{Simple Meta-Theory}:]  \lang{} should serve as the basis for
  investigating high-level language features and type systems designed for cloud
  computations. To this end, it is designed as a core language with a simple meta-theory.
  Established language features and modeling techniques, such as lexical scoping and a
  small-step operational semantics, form the basis of \lang{}.

\item[\textbf{Concurrency}:]  \lang{} targets distributed
   concurrent computations.
    To this end, 
    it allows the definition of independent computation units, which we call
  \emph{servers}. 

\item[\textbf{Asynchronous Communication}:] Servers can receive parameterized
  \emph{service requests} from other servers. To realistically model low-level
  communication within a cloud, the language only provides asynchronous
  end-to-end communication, where the success of a service request is not
  guaranteed. Other forms of communication, such as synchronous, multicast, or
  error-checking communication, can be defined on top of the
  asynchronous communication.  

\item[\textbf{Local Synchronization}:] Many useful concurrent and asynchronous applications
  require synchronization. We adopt \emph{join patterns} 
  from the Join Calculus~\cite{Fournet-Gonthier:popl96}. Join patterns 
  are \emph{declarative} synchronization primitives for machine-local synchronization.          
    
\item[\textbf{First-Class Server Images}:]
    Cloud platforms employ virtualization to spawn, suspend and duplicate 
  virtual machines. That is, virtual machines are data that can be stored and send as payload in messages. This is the core idea behind \lang{}'s design and enables
  programs to change their deployment at run time.
  Thus, \lang{} features servers as values, called \emph{first-class servers}.
  Active \emph{server instances} consist of an address, which points
  to a \emph{server image} (or \emph{snapshot}). The server image embodies
  the current run time state of a server and a description of the server's functionality, which we call \emph{server template}. At run
  time, a server instance may be overwritten by a new server image, thus changing the behavior for subsequent service requests to that instance.

\item[\textbf{Transparent Placement}:] Cloud platforms can reify new machines
  physically (on a new network node) or virtually (on an existing network
  node). Since this difference does not influence the semantics of a program but
  only its non-functional properties (such as performance), our semantics is
  transparent with respect to placement of servers. Thus, actual languages based
  on our core language can freely employ user-defined placement definitions and automatic
  placement strategies. Also, we do not require that \lang{}-based languages map servers to virtual machines, which may be inefficient for 
  short-lived servers. 
  Servers may as well represent local computations executing on a virtual machine.
\end{description}

\subsection{Core Syntax}
\label{sec:syntax}

Figure~\ref{fig:djcsyntax} displays the core syntax of \lang{}. An
expression $e$ is either a value or one of the following syntactic
forms:\footnote{We write $\tuple{a}$ to denote the finite sequence
  $a_1\!\ldots{}a_n$ and we write $\varepsilon$ to denote the empty sequence.}
\begin{itemize}
\item A \emph{variable} $x$ is from the countable set $\Names$. Variables
  identify services and  parameters of their
   requests.

\item A \emph{server template} $(\srvt{\tuple{r}})$ is a first-class value that describes the behavior of a
  server as a sequence of reaction rules $\tuple{r}$. A reaction rule takes
  the form $\tuple{p} \triangleright e$, where $\tuple{p}$ is a sequence of
  joined service patterns and $e$ is the body. A \emph{service pattern}
  $x_0<\tuple{x}>$ in $\tuple{p}$ declares a service named $x_0$ with parameters
  $\tuple{x}$ and a rule can only fire if all service patterns are matched
  simultaneously. The same service pattern can occur in multiple rules of a
  server.

\item A \emph{server spawn} $(\<spwn> e)$ creates a new running \emph{server
    instance} at a freshly allocated \emph{server address} $i$ from a 
  given \emph{server image} $(\srvt{\tuple{r}}, \tuple{m})$
  represented by $e$. A \emph{server image} is a description of a server behavior
  plus a server state -- a \emph{buffer} of pending messages.
  A real-world equivalent of server images are e.g., virtual machine snapshots.
    A special case of a server image is the value $\dead$,
  which describes an inactive or shut down server.

\item A fully qualified \emph{service reference} $\svc{e}{x}$, where $e$ denotes
  a server address and $x$ is the name of a service provided by the server instance at $e$. Service
  references to server instances are themselves values.

\item A \emph{self-reference} $\<this>$ refers to the address of the lexically
  enclosing server template, which, e.g., allows one service to call upon other
  services of the same server instance.

\item An asynchronous \emph{service request} $e_0<\tuple{e}>$, where  $e_0$
  represents a service reference and $\tuple{e}$ the arguments of the
  requested service.

\item A \emph{parallel expression} $(\prlll{\tuple{e}})$ of service requests
  $\tuple{e}$ to be executed independently. The empty parallel expression
  $(\prlll{\varepsilon})$ acts as a noop expression, unit value, or null process
  and is a value.

\item A \emph{snapshot} $\<snap> e$ yields an image of the server instance which resides at the address denoted by $e$.

\item A \emph{replacement} $\repl{e_{1}}{e_{2}}$ of the server instance at address
$e_{1}$ with the server image $e_{2}$.
\end{itemize}

\begin{figure*}[t]\centering
$\figfontsize\begin{array}{p{.5\textwidth}@{\hspace{0.8cm}}p{.5\textwidth}}
\begin{array}[t]{l@{\hspace{1ex}}l@{\hspace{1ex}}l@{\hspace{2em}}l}
e             & ::= &  v \mid  x \mid  \<this> \mid  \srvt{\tuple{r}} \mid  \<spwn> e \mid \svc{e}{x} \mid  e<\tuple{e}> \mid\prlll{\tuple{e}} & \text{(Expressions)} \\
              &     &  \phantom{v} \mid \<snap> e \mid \repl{e}{e} &  \\[2ex]
v             & ::= &  \srvt{\tuple{r}} \mid i \mid \svc{i}{x} \mid \prlll{\varepsilon} \mid (\srvt{\tuple{r}}, \tuple{m}) \mid \dead &  \text{(Values)} \\[2ex]
\context{E}{} & ::= &  \hole \mid  \<spwn> \context{E}{} \mid \svc{\context{E}{}}{x} \mid \context{E}{}<\tuple{e}> \mid  e<\tuple{e}\;\context{E}{}\;\tuple{e}>  & \text{(Evaluation Contexts)}\\
              &     & \phantom{\hole} \mid  \prlll{\tuple{e}\; \context{E}{}\; \tuple{e}} \mid \<snap> \context{E}{} \mid \repl{\context{E}{}}{e} \mid \repl{e}{\context{E}{}} & \\[2ex]
\multicolumn{3}{l}{x, y, z\in\Names}  & \text{(Variable Names)}
\end{array}
&
\begin{array}[t]{l@{\hspace{1ex}}l@{\hspace{1ex}}l@{\hspace{2em}}l}
\multicolumn{3}{l}{i\in\Nat}          & \text{(Server Addresses)} \\[1ex]
r &::=& \tuple{p} \triangleright e    & \text{(Reaction Rules)} \\[1ex]
p &::=& x<\tuple{x}>                  & \text{(Join Patterns)}  \\[1ex]
m &::=& x<\tuple{v}>                  & \text{(Message Values)} \\[1ex]
\mu &::=& \varnothing \mid \mu ; i\mapsto (\srvt{\tuple{r}}, \tuple{m}) & \text{(Routing Tables)} \\
    &    & \phantom{\varnothing} \mid \mu;i\mapsto \dead &  
\end{array}
\end{array}$
 \caption{Expression Syntax of \lang{}.}
\label{fig:djcsyntax}
\end{figure*}


\begin{figure*}[t]
\figfontsize
\let\oldtextsc\textsc
\let\oldafterrule\afterruleskip
\renewcommand{\rn}[1]{\scriptsize\textsc{#1}}
\renewcommand{\afterruleskip}{2ex}
\smallrulenames
\typicallabel{Match$_1$}
\centering
  \begin{multicols}{2}
  \infrule[Cong]{e\,{\mid}\,\mu \; --> \;
      e'\,{\mid}\,\mu'}{\context{E}{e}\,{\mid}\,\mu \; --> \; \context{E}{e'}\,{\mid}\,\mu'}
\vspace{\afterruleskip} 
    \infrule[Par]{}{\prlll{\tuple{e_1}\; ( \prlll{\tuple{e_2}})\;
        \tuple{e_3}}\,{\mid}\,\mu \; --> \; \prlll{ \tuple{e_1}\; \tuple{e_2}\;
        \tuple{e_3} }\,{\mid}\,\mu} 
\vspace{\afterruleskip} 
    \infrule[Rcv]{\mu(i) = (s, \tuple{m})}{ \svc{i}{x}<\tuple{v}>\,{\mid}\,
      \mu{} \; --> \; \prlll{\varepsilon{}}\,{\mid}\, \mu{}; i \mapsto
        (s, \tuple{m}\cdot x<\tuple{v}>)  }  
\vspace{\afterruleskip} 
    \infrule[React]{ \mu(i) = (s, \tuple{m}) \andalso
      s = \srvt{\tuple{r}_{1}\; (\tuple{p}\triangleright{} e_{b})\; \tuple{r_{2}}} \\
      \matchp{\tuple{p}, \tuple{m}}{\tuple{m}', \sigma} \andalso
      \sigma_b = \sigma \cup \Subst{\<this> := i}} { \prlll{e} \,{\mid}\, \mu \; --> \;
      \prlll{e\ \sigma_b(e_b)} \,{\mid}\, \mu;{i \mapsto (s,
        \tuple{m}') } }
\vspace{\afterruleskip} 
    \infrule[Spwn]{i \notin\dom{\mu{}} \andalso (s = \dead \vee s =
      (\srvt{\tuple{r}}, \tuple{m})) }{\<spwn> s \,{\mid}\,{}\mu{} \; --> \; i
      \,{\mid}\, \mu;{i \mapsto s }}
\vspace{\afterruleskip} 
    \infrule[Snap]{\mu(i) = s \andalso (s = \dead \vee s =
      (\srvt{\tuple{r}}, \tuple{m}))}{\<snap> i \,{\mid}\, \mu \; --> \;
      s\,{\mid}\,\mu{} }
\vspace{\afterruleskip} 
    \infrule[Repl]{i \in{} \dom{\mu{}} \andalso (s = \dead \vee s =
      (\srvt{\tuple{r}}, \tuple{m})) }{ \repl{i}{s} \,{\mid}\,{}\mu{} \; --> \;
      \prlll{\varepsilon}\,{\mid}\,\mu;{i \mapsto s} }
\vspace{\afterruleskip}\vspace{\afterruleskip} 
\vspace{\afterruleskip} 
Matching Rules:
\vspace{\afterruleskip} 
    \infrule[Match$_0$]{
                            }{ \matchp{\varepsilon, \tuple{m}}{\tuple{m}, \emptyset} }
    \vspace{\afterruleskip} 
\infrule[Match$_1$]{
      \tuple{m} = \tuple{m}_{1}\; (x<v_1 \ldots v_k>)\; \tuple{m}_{2} \andalso
      \sigma = \Subst{x_i := v_i \mid 1 \le i \le k}   \\
      \matchp{\tuple{p}, \tuple{m_1}\; \tuple{m_2}}{\tuple{m}_{r}, \sigma_r} \andalso
      \dom{\sigma} \cap \dom{\sigma_r} = \emptyset }{ \matchp{x<x_1
        \ldots x_k >\; \tuple{p}, \tuple{m}}{\tuple{m}_{r}, \sigma
        \cup \sigma_r} }
  \end{multicols}
\vspace{1em}
  \caption{Small-step Operational Semantics of \lang{}.}
\label{fig:djcsemantics}\label{fig:djcmatchingrules}
\renewcommand{\textsc}{\oldtextsc}
\renewcommand{\afterruleskip}{\oldafterrule}
\end{figure*}


\ifdef{\tecreport}{
\begin{figure*}[t]\centering
  $\figfontsize
  \begin{array}{p{.3\textwidth}@{\hspace{0.1cm}}p{.7\textwidth}}
\begin{array}[t]{l@{\hspace{1ex}}l@{\hspace{1ex}}l@{\hspace{2em}}l}
   e  & ::= & \ldots \mid \Tabs{\alpha\subtpe T}{e} \mid \Tapp{e}{T}  & \text{(Extended Expressions)}  \\[1ex]
   v  & ::= &  \ldots \mid  \Tabs{\alpha\subtpe T}{e} & \text{(Extended Values)}  \\[1ex]   
   p & ::= & x<\tuple{x\tpe T}> & \text{(Typed Join Patterns)}\\[1ex]
   \multicolumn{3}{l}{\alpha, \beta, \gamma \ldots} & \text{(Type-Variables)}
\end{array}
&
\begin{array}[t]{l@{\hspace{1ex}}l@{\hspace{1ex}}l@{\hspace{2em}}l}
   T  & ::= &  \Top \mid \Unit \mid \alpha \mid   \Tsvc{\tuple{T}} \mid \Tsrv{\tuple{ x \tpe T}} \mid \Tsrv{\bot} \mid  \Tsrvinst{T} &   \text{(Types)}   \\
      &     & \phantom{\Top}\mid   \Timg{T} \mid   \Tuni{\alpha \subtpe T}{T} \\[1ex]
 \Gamma & ::= &   \varnothing \mid   \Gamma, \alpha\subtpe T \mid   \Gamma, x \tpe T \mid   \Gamma, \<this> \tpe T & \text{(Type Contexts)}   \\[1ex]
  \Delta & ::= &   \varnothing \mid   \Delta, i \tpe T    & \text{(Location Typings)}
\end{array}
  \end{array}$
  \caption{Expression Syntax of \lang{} with Types}
\label{fig:typeddjcsyntax}
\end{figure*}


\begin{figure*}[t]
\figfontsize
\let\oldtextsc\textsc
\let\oldafterrule\afterruleskip
\renewcommand{\textsc}[1]{{\scriptsize\oldtextsc{#1}}}
\renewcommand{\afterruleskip}{2ex}
\setlength{\columnsep}{20pt}
\typicallabel{\textsc{T-Spwn}} 
\centering
\begin{multicols}{2}
  \infrule[\textsc{T-Var}] {\Gamma(x) = T \andalso x \in
    \Names\cup\{\<this>\}} { \Tjudge{\Gamma\mid\Sigma}{x}{T}}
  \infrule[\textsc{T-Par}] {
    \repeatJdgS{i}{\Tjudge{\Gamma\mid\Sigma}{e_{i}}{\Unit}}}
  {\Tjudge{\Gamma\mid\Sigma}{\prlll{\tuple{e}}}{\Unit} }
\end{multicols}  
  \vspace{\afterruleskip}
  \vspace{\afterruleskip}
  \infrule[\textsc{T-Srv}] { r_i = \tuple{p_i} \triangleright e_i
    \andalso p_{i,j} = x_{i,j}<\tuple{y_{i,j}\tpe T_{i,j}}> \andalso
    S_{i,j} = \Tsvc{\tuple{T_{i,j}}} \vspace{.5ex}\\
    T = \Tsrv{\tuple{x_{i,j}\tpe S_{i,j}}} \andalso
    (\forall\ i,j,k.\ \ j \neq k \to \tuple{y_{i,j}} \cap \tuple{y_{i,k}} = \emptyset) \vspace{.5ex}\\
    (\forall\ i,j,k,l.\ \ x_{i,j} = x_{k,l} \to T_{i,j} = T_{k,l})
    \andalso
    \ftv{T}\subseteq\ftv{\Gamma} \vspace{.5ex}\\
    \repeatJdgS{i}{\Tjudge {\Gamma, \tuple{y_{i,j} \tpe
          T_{i,j}} ,\<this> \tpe T\mid\Sigma} {e_{i}} {\Unit}} } {
    \Tjudge{\Gamma\mid\Sigma}{\srvt{\tuple{r}}}{T}} \hfil
  \infrule[\textsc{T-$\dead$}]{}{\Tjudge{\Gamma\mid\Sigma}{\dead}{\Timg{\Tsrv{\bot}}}}
\vspace{\afterruleskip}
  \infrule[\textsc{T-Img}]{\Tjudge{\Gamma\mid\Sigma}{\srvt{\tuple{r}}}{T}
    \andalso r_i = \tuple{p_i} \triangleright e_i \andalso
    p_{i,j} = x_{i,j}<\tuple{y_{i,j}\tpe T_{i,j}}>  \\
    (\forall k.\exists i,j. (m_{k} = x_{i,j}<\tuple{v_{i,j}}> \wedge
    \tuple{\Tjudge{\Gamma\mid\Sigma}{v_{i,j}}{T_{i,j}}}))
  }{\Tjudge{\Gamma\mid\Sigma}{(\srvt{\tuple{r}},
      \tuple{m})}{\Timg{T}}} 
\vspace{\afterruleskip}
\vspace{\afterruleskip}
\vspace{\afterruleskip}
  \begin{multicols}{2}
    \infrule[\textsc{T-Snap}]{\Tjudge{\Gamma\mid\Sigma}{e}{\Tsrvinst{T}}}{\Tjudge{\Gamma\mid\Sigma}{\<snap>
        e}{\Timg{T}}} 
\vspace{\afterruleskip}
    \infrule[\textsc{T-Repl}]{\Tjudge{\Gamma\mid\Sigma}{e_{1}}{\Tsrvinst{T}}
      \andalso
      \Tjudge{\Gamma\mid\Sigma}{e_{2}}{\Timg{T}}}{\Tjudge{\Gamma\mid\Sigma}{\repl{e_{1}}{e_{2}}}{\Unit}}
\vspace{\afterruleskip}
    \infrule[\textsc{T-Spwn}]
      {\Tjudge{\Gamma\mid\Sigma}{e}{\Timg{T}}}
      {\Tjudge{\Gamma\mid\Sigma}{\<spwn> e}{\Tsrvinst{T}} }
\vspace{\afterruleskip}
      \infrule[\textsc{T-Inst}]
      { 
        \Sigma{(i)} = \Timg{T}} {
        \Tjudge{\Gamma\mid\Sigma}{i}{\Tsrvinst{T}} }
\vspace{\afterruleskip}
      \infrule[\textsc{T-Svc}]
      { 
        \Tjudge{\Gamma\mid\Sigma}{e}{\Tsrvinst{(\Tsrv{\tuple{x\tpe
                T}})}}} {
        \Tjudge{\Gamma\mid\Sigma}{\svc{e}{x_i}}{T_i}}
\vspace{\afterruleskip}
      \infrule[\textsc{T-Req}]
      {\Tjudge{\Gamma\mid\Sigma}{e}{\Tsvc{T_1 \ldots T_n} }\andalso
        \repeatJdgS{i}{\Tjudge{\Gamma\mid\Sigma}{e_{i}}{T_{i}}}}
      {\Tjudge{\Gamma\mid\Sigma}{ e < e_{1}\ldots{}e_{n} > }{\Unit}}
\vspace{\afterruleskip}
        \infrule[\textsc{T-TAbs}]
        { 
          \Tjudge{\Gamma,\alpha\subtpe T\mid\Sigma{}}{e}{U}}
        {\Tjudge{\Gamma\mid\Sigma}{\Tabs{\alpha\subtpe{}T}{e}}{\
            \Tuni{\alpha\subtpe T}{U} }}
\vspace{\afterruleskip}
        \infrule[\textsc{T-TApp}]
        {\Tjudge{\Gamma\mid\Sigma}{e}{\Tuni{\alpha\subtpe{}T_2}{T}} \andalso
          \Subjudge{\Gamma{}}{T_1}{T_2}\\
          \ftv{T_1} \subseteq \ftv{\Gamma}}
        {\Tjudge{\Gamma\mid\Sigma}{\Tapp{e}{T_1}}{T\!\Subst{\alpha\!
              := \!T_1\!}}}
\vspace{\afterruleskip}
      \infrule[\textsc{T-Sub}]
      {\Tjudge{\Gamma\mid\Sigma}{e}{T} \andalso
        \Subjudge{\Gamma}{T}{U}} {\Tjudge{\Gamma{}}{e}{U}}
  \end{multicols}
  \vspace{1em}
  \caption{Typing rules of \lang.}
\label{fig:typerules}
\renewcommand{\textsc}{\oldtextsc}
\renewcommand{\afterruleskip}{\oldafterrule}
\end{figure*}


\begin{figure*}[t]
\figfontsize
\let\oldtextsc\textsc
\let\oldafterrule\afterruleskip
\renewcommand{\textsc}[1]{{\scriptsize\oldtextsc{#1}}}
\renewcommand{\afterruleskip}{2ex}
\setlength{\columnsep}{20pt}
\typicallabel{\textsc{T-Spwn}} 
\centering
\begin{multicols}{2}
  \infrule[\textsc{S-Top}] {} {\Subjudge{\Gamma{}}{T}{\Top}}
\vskip\afterruleskip
  \infrule[\textsc{S-TVar}] {\alpha \subtpe T \in \Gamma{}}
  {\Subjudge{\Gamma{}}{\alpha{}}{T}}
\vskip\afterruleskip
  \infrule[\textsc{S-Srv}] {\repeatJdgS{j}{\existsJdg{i}{(y_j = x_i
        \wedge \Subjudge{\Gamma}{T_i}{U_j})}}} {\Subjudge {\Gamma}
    {\Tsrv{ \tuple{x \tpe T}}} {\Tsrv{ \tuple{y \tpe U}}}}
\vskip\afterruleskip
  \infrule[\textsc{S-Inst}] {\Subjudge {\Gamma} {T} {U}} {\Subjudge
    {\Gamma} {\Tsrvinst{T}} {\Tsrvinst{U}}}
\vskip\afterruleskip
  \infrule[\textsc{S-Srv$_{\bot}$}] {}
  {\Subjudge{\Gamma}{\Tsrv{\bot}}{\Tsrv{T}}}
\vskip\afterruleskip
  \infrule[\textsc{S-Img}] {\Subjudge {\Gamma} {T} {U}} {\Subjudge
    {\Gamma} {\Timg{T}} {\Timg{U}}}
\vskip\afterruleskip
  \infrule[\textsc{S-Svc}]
  {\repeatJdgS{i}{\Subjudge{\Gamma}{U_i}{T_i}}}
  {\Subjudge{\Gamma}{\Tsvc{T_1,\ldots,T_n}}{\Tsvc{U_1,\ldots,U_n}}}
\vskip\afterruleskip
  \infrule[\textsc{S-Univ}] {
    \Subjudge{\Gamma,\alpha_1 \subtpe
      T}{U_1}{U_2\!\Subst{\alpha_2:=\alpha_1}}} {\Subjudge {\Gamma}
    {(\Tuni{\alpha_1 \subtpe T}{U_1})} {(\Tuni{\alpha_2 \subtpe
        T}{U_2})}}
\vskip\afterruleskip
  \infrule[\textsc{S-Refl}]{}{\Subjudge{\Gamma}{T}{T}}
\vskip\afterruleskip
  \infrule[\textsc{S-Trans}] {\Subjudge{\Gamma}{T_1}{T_2} \andalso
    \Subjudge{\Gamma}{T_2}{T_3}} {\Subjudge{\Gamma}{T_1}{T_3}}
\end{multicols}
 \vspace{1em}
  \caption{Subtyping rules of \lang{}.}
  \label{fig:subtypingrules}
\renewcommand{\textsc}{\oldtextsc}
\renewcommand{\afterruleskip}{\oldafterrule}
\end{figure*}


}{}

\noindent {\bf Notation}: In examples, 
\cplinline{'p' & 'p' }
denotes pairs of join patterns and $e \parallel e$ 
denotes pairs of parallel expressions. We sometimes omit empty buffers when spawning servers, i.e.,
we write $\<spwn> (\srvt{\tuple{r}})$ for $\<spwn> (\srvt{\tuple{r}}, \varepsilon)$.
To improve readability in larger examples, we use curly braces to indicate the lexical scope
of syntactic forms. We write service names and meta-level definitions in typewriter 
font, e.g., \cplinline{this#foo} and \cplinline{MyServer = srv \{ \}}. We write bound variables
in italic font, e.g., \cplinline{$\;$srv$\;$\{ left<'x'> & right<'y'> :> pair<'x', 'y'> \}}.

\paragraph{Example.} 
For illustration,
consider the
following server template \cplinline{Fact} for computing factorials, which defines three rules with 5 services.\footnote{For the sake of
presentation, we use ordinary notation for numbers, arithmetics and conditionals,
all of which is church-encodable on top of \lang{} (cf. Section~\ref{sec:derived-syntax}).}
\begin{lstlisting}[language=cpl]
Fact = srv {
  main<'n', 'k'> :> //initialization
    this#fac<'n'> || this#acc<1> || this#out<'k'>

  fac<'n'> & acc<'a'> :> //recursive fac computation
    if ('n' <= 1)
    then this#res<'a'>
    else (this#fac<'n' - 1> || this#acc<'a' * 'n'>)

  res<'n'> & out<'k'> :> 'k'<'n'> //send result to k
}
\end{lstlisting}

%
%
The first
rule defines a service \cplinline{main} with two arguments, an integer $n$ and a
continuation $k$. The continuation is
necessary because service requests are asynchronous and thus, the
factorial server must notify the caller when the computation finishes. Upon
receiving a \cplinline{main} request, the server sends itself three requests: \cplinline{fac}
represents the outstanding factorial computation, \cplinline{acc} is used as an
accumulator for the ongoing computation, and \cplinline{out} stores the
continuation provided by the caller.

The second rule of \cplinline{Fact} implements the factorial function and
synchronously matches and consumes requests \cplinline{fac} and \cplinline{acc} 
using join patterns. Upon termination,
the second rule sends a request \cplinline{res} to the running server instance,
otherwise it decreases the argument of \cplinline{fac} and updates the accumulator.  Finally, the
third rule of \cplinline{Fact} retrieves the user-provided continuation $k$ from the
request \cplinline{out} and the result \cplinline{res}. The rule expects the continuation to be a service reference
and sends a request to it with the final result as argument.

To compute a factorial, we create a server instance from the template
\cplinline{Fact} and request service \cplinline{main}: 
\begin{lstlisting}[numbers=none,language=cpl,mathescape=true,aboveskip=1ex,belowskip=1ex]
(spwn Fact)#main<5, 'k'>$.$
\end{lstlisting}

\ifdef{\tecreport}{An example reduction trace is in the appendix.}{}

\subsection{Operational Semantics}
\label{sec:oper-semant}

We define the semantics of \lang{} as a small-step structural
operational semantics using reduction contexts $E$ (Figure~\ref{fig:djcsyntax}) in the style of Felleisen and
Hieb~\cite{FelleisenH92}.

Figure~\ref{fig:djcsemantics} shows the reduction rules for \lang{} expressions. 
Reduction steps are atomic and take the form $e\,{\mid}\,\mu --> e'\,{\mid}\,\mu'$.
A pair $e\,{\mid}\,\mu$ represents a distributed cloud application, where expression $e$ describes
its current behavior and $\mu$ describes its current \emph{distributed} state. 
We intend $e$ as a description of the software components and resources that execute and reside 
at the cloud provider and do not model client devices.
We call the component $\mu$ a \emph{routing table}, which is a finite map.
Intuitively, $\mu$ records which addresses a cloud provider assigns to the server instances 
that the cloud application creates during its execution.\footnote{This bears similarity to lambda calculi enriched with references and a store~\cite{Wright199438}.}  
We abstract over technical details, such as the underlying network.

The first reduction rule \inflabel{Cong} defines the congruence rules of the
language and is standard.
The second rule \inflabel{Par} is technical. It flattens nested parallel expressions in order
to have a simpler representation of parallel computations.
The third rule \inflabel{Rcv} lets a server instance receive an
asynchronous service request, where the request is added to the instance's buffer
for later processing. Our semantics abstracts over the technicalities of network communication. 
That is, we consider requests $\svc{i}{x}<\tuple{v}>$ that occur in a \lang{} expression
to be in transit, until a corresponding \inflabel{Rcv} step consumes them.
The fourth rule \inflabel{React} fires reaction rules of a server. 
It selects a running server instance $(s, \tuple{m})$,
selects a reaction rule $(\tuple{p} \triangleright e_b)$ from it, 
and tries to match its join patterns $\tuple{p}$ against the pending service requests
in the buffer $\tuple{m}$. A successful match consumes the service requests,
instantiates the body $e_b$ of the selected reaction rule and executes it
independently in parallel.

Finally, let us consider the rules for $\<spwn>\!\!$, $\<snap>\!\!$ and $\<repl>\!\!$, which manage 
server instances and images. 
Reduction rule \inflabel{Spwn} creates a new server 
instance from a server image, 
where a fresh unique address is assigned to the server instance.
This is the only rule that allocates new addresses in $\mu$. One can think of
this rule as a request to the cloud provider to create a new virtual machine
and return its IP address.
Importantly, the address that $\<spwn>\!\!$ yields is only visible to the caller.
The address can only be accessed by another expression if it shares a common
lexical scope with the caller. Thus, lexical scope restricts the visibility
of addresses. This also means that the map $\mu$ is not a shared memory, but
a combined, flat view of disjoint distributed information.\footnote{This approach is comparable to sets of definitions in the chemical soup of the Join Calculus~\cite{Fournet-Gonthier:popl96}.}

Reduction rule \inflabel{Snap} yields a copy of the server image at address $i$,
provided the address is in use. Intuitively, it represents the invocation of a cloud 
management API to create a virtual machine snapshot. 
Reduction rule \inflabel{Repl} replaces the server image at address $i$ with another server image.

We define $\<spwn>\!\!$, $\<snap>\!\!$ and $\<repl>\!\!$ as atomic operations. 
At the implementation level, each operation may involve multiple communication steps with the cloud provider, 
taking noticeable time to complete and thus block execution for too long, especially when the operation translates to booting a new OS-level
virtual machine. On the other hand, as we motivated at the beginning of this section, servers
may not necessarily map to virtual machines, but in-memory computations. In this case, 
we expect our three atomic operations to be reasonably fast.
Also, we do not impose any synchronization mechanism on a server addresses, which may 
result in data races if multiple management operations access it in parallel. Instead,
programmers have to write their own synchronization mechanisms on top of \lang{} if required.

\ifdef{\tecreport}{
\noindent
Matching satisfies the following property:
\begin{proposition}[Match soundness and completeness]\label{thm:mtchsndnss}
  Let $\tuple{p}$ be a sequence of join patterns with
  $p_i = x_i<\tuple{y_i}>$, $\tuple{m}$ and $\tuple{m_r}$ sequences of
  service request values, and $\sigma$ a substitution.
  $\matchp{\tuple{p},\tuple{m}}{\tuple{m_r},\sigma}$ if and only if it exists a
  sequence $\tuple{m_c}$ such that:
\begin{enumerate}
\item\label{mtchsndnss1} Sequence $\tuple{m_c}$ represents the requests values consumed
  from \tuple{m}, that is, $\tuple{m_r}\ \tuple{m_c} = \tuple{m}$ modulo
  permutation.
\item\label{mtchsndnss2} All consumed requests $\tuple{m_c}$ 
  match the join patterns $\tuple{p}$, that is, $\tuple{m_c}$ and $\tuple{p}$
  have the same length and $m_{c,i} = x_i<\tuple{v_i}>$, where
  $\tuple{y_i}$ and $\tuple{v_i}$ have the same length.
\item\label{mtchsndnss3} $\sigma$ substitutes the parameters of the matched join
  patterns with the actual arguments, that is,
  $$\sigma = \Subst{\tuple{y_i} := \tuple{v_i} \mid 1 \le i \le k }$$ where $k$ is the length of $\tuple{p}$.
\end{enumerate}
\end{proposition}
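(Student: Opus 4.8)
The plan is to prove both implications by induction, using the fact that every derivation of $\matchp{\tuple{p},\tuple{m}}{\tuple{m_r},\sigma}$ is a linear chain: one application of \textsc{Match}$_0$ at the leaf, preceded by exactly one application of \textsc{Match}$_1$ per pattern in $\tuple{p}$. Hence the natural induction measure in both directions is $|\tuple{p}|$, which by item~\ref{mtchsndnss2} equals $|\tuple{m_c}|$. Throughout, the equation in item~\ref{mtchsndnss1} is read at the level of multisets, since \textsc{Match}$_1$ may consume a message from an arbitrary position of $\tuple{m}$ while recording it at a fixed position of $\tuple{m_c}$.

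For the ``only if'' direction I would induct on the derivation of $\matchp{\tuple{p},\tuple{m}}{\tuple{m_r},\sigma}$. In the base case \textsc{Match}$_0$ gives $\tuple{p}=\varepsilon$, $\tuple{m_r}=\tuple{m}$, $\sigma=\emptyset$, and $\tuple{m_c}=\varepsilon$ discharges all three items. In the step case the last rule is \textsc{Match}$_1$, so $\tuple{p}=x<\tuple{y}>\,\tuple{p}'$, $\tuple{m}=\tuple{m}_1\,(x<\tuple{v}>)\,\tuple{m}_2$, there is a sub-derivation of $\matchp{\tuple{p}',\tuple{m}_1\tuple{m}_2}{\tuple{m_r},\sigma_r}$, and $\sigma=\Subst{\tuple{y}:=\tuple{v}}\cup\sigma_r$ with disjoint domains. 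The induction hypothesis yields a witness $\tuple{m_c}'$ for the sub-derivation, and I claim $\tuple{m_c}=(x<\tuple{v}>)\,\tuple{m_c}'$ works: item~\ref{mtchsndnss1} holds because $\tuple{m_r}\,\tuple{m_c}'=\tuple{m}_1\tuple{m}_2$ implies $\tuple{m_r}\,(x<\tuple{v}>)\,\tuple{m_c}'=\tuple{m}_1\,(x<\tuple{v}>)\,\tuple{m}_2=\tuple{m}$ up to permutation; item~\ref{mtchsndnss2} is immediate from the shape of \textsc{Match}$_1$ together with the hypothesis; and item~\ref{mtchsndnss3} holds since $\sigma=\Subst{\tuple{y}:=\tuple{v}}\cup\Subst{\tuple{y_i}:=\tuple{v_i}\mid 2\le i\le k}$, the \textsc{Match}$_1$ disjointness side condition making this a well-defined substitution that coincides with $\Subst{\tuple{y_i}:=\tuple{v_i}\mid 1\le i\le k}$.

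For the ``if'' direction I would assume a witness $\tuple{m_c}$ and induct on $|\tuple{p}|=|\tuple{m_c}|$. If $\tuple{p}=\varepsilon$ then $\tuple{m_c}=\varepsilon$, hence $\tuple{m_r}=\tuple{m}$ up to permutation and $\sigma=\emptyset$, so \textsc{Match}$_0$ applies. If $\tuple{p}=x<\tuple{y}>\,\tuple{p}'$, then item~\ref{mtchsndnss2} forces $\tuple{m_c}=(x<\tuple{v}>)\,\tuple{m_c}'$ with $|\tuple{y}|=|\tuple{v}|$, and item~\ref{mtchsndnss1} lets me write $\tuple{m}=\tuple{m}_1\,(x<\tuple{v}>)\,\tuple{m}_2$ with $\tuple{m_r}\,\tuple{m_c}'=\tuple{m}_1\tuple{m}_2$ up to permutation. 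Restricting $\sigma$ to $\sigma_r:=\Subst{\tuple{y_i}:=\tuple{v_i}\mid 2\le i\le k}$ and applying the induction hypothesis to $\tuple{p}'$, $\tuple{m}_1\tuple{m}_2$, $\tuple{m_c}'$ gives $\matchp{\tuple{p}',\tuple{m}_1\tuple{m}_2}{\tuple{m_r},\sigma_r}$; a single \textsc{Match}$_1$ step then derives $\matchp{\tuple{p},\tuple{m}}{\tuple{m_r},\sigma}$, its premise $\dom{\Subst{\tuple{y}:=\tuple{v}}}\cap\dom{\sigma_r}=\emptyset$ being exactly the well-formedness of $\sigma$ presupposed by item~\ref{mtchsndnss3}.

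The bulk of the work is bookkeeping, and the one delicate point is the interplay between message order and the ``modulo permutation'' relaxation: \textsc{Match}$_1$ picks an occurrence of $x<\tuple{v}>$ from anywhere in $\tuple{m}$, yet item~\ref{mtchsndnss2} fixes its position in $\tuple{m_c}$, so the argument has to commute ``delete one occurrence'' with the multiset equality of item~\ref{mtchsndnss1} in both the extraction direction and the reconstruction of the split $\tuple{m}_1\,(x<\tuple{v}>)\,\tuple{m}_2$. A minor additional observation is that item~\ref{mtchsndnss3} is only meaningful when the pattern-variable groups $\tuple{y_1},\dots,\tuple{y_k}$ are pairwise disjoint, which is precisely what the side condition of \textsc{Match}$_1$ imposes on the relation (and what \textsc{T-Srv} guarantees for well-formed templates), so no extra hypothesis is required.
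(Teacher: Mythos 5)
Your proposal follows exactly the paper's own strategy---soundness by induction on the derivation of $\matchp{\tuple{p},\tuple{m}}{\tuple{m_r},\sigma}$ and completeness by induction on the number of join patterns---which the paper dismisses in one line as ``straightforward''; you have simply written out the bookkeeping it omits, and done so correctly. Your closing remarks on the permutation bookkeeping and on the linearity of pattern variables are accurate observations about the proposition's statement rather than gaps in your argument, so nothing further is needed.
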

  \begin{proof}
    Soundness ($\Rightarrow$): Straightforward induction on the derivation of the judgment
    $\matchp{\tuple{p},\tuple{m}_{1}}{\tuple{m}_{2}, \sigma}$. Completeness ($\Leftarrow$):
    Straightforward by induction on the number $k$ of service patterns in $p$.
 \end{proof}
}{}

Our semantics is nondeterminstic along 3 dimensions:
\begin{itemize}
\item If multiple server instances can fire a rule, \inflabel{React} selects one
  of them nondeterminstically. This  models concurrent
  execution of servers that can react to incoming service requests independently.
\item If multiple rules of a server instance can fire, \inflabel{React} selects one of
  them nondeterminstically. This is of lesser importance and languages building
  on ours may fix a specific order for firing rules (e.g., in the
  order of definition).
\item If multiple service request values can satisfy a join pattern,
  \inflabel{Match$_1$} selects one of them nondeterminstically. This 
  models asynchronous communication in distributed systems, i.e., the
  order in which a server serves requests is independent of the order in
  which services are requested. More concrete languages based on \lang{} may employ
  stricter ordering (e.g., to preserve the order of requests that originate from a
  single server).
\end{itemize}

\subsection{Placement of Servers.}
\label{sec:locality}
We intentionally designed the semantics of \lang{} with transparency
of server \emph{placement} in mind. That is, a single abstraction in the language, the
server instance, models all computations, irrespective of whether the instance runs on its own physical
machine or as a virtual machine hosted remotely -- indeed, placement transparency
is a distinguishing feature of cloud applications. 

However, despite the behavior of servers being invariant to placement,
placement has a significant impact in real-world scenarios and
influences communication and computation
performance~\cite{BobroffKB07,MengPZ10}. The need to account for
placement in an implementation is critical considering that -- servers
being the only supported abstraction -- every single let binding and
lambda abstraction desugars to a server spawn
(cf.~Section~\ref{sec:derived-syntax}). In our concurrent Scala implementation, we support an extended syntax for server
spawns that allows programmers to declare whether a server instance runs
in a new thread or in the thread that executes the spawn. This provides a simple
mechanism for manually implementing placement strategies. 

A viable alternative to manual specification of placement are automatic
placement strategies. Together with server migration, automatic placement
strategies can adapt the server layout to changing conditions. Based on our language, a management system
for a cloud infrastructure can formally reason about optimal placement strategies. 
In future work, we plan to implement these ideas in a distributed run-time system for \lang{} (cf. Section \ref{sec:interf-with-cloud}).

\ifdef{\tecreport}{
\subsection{Derived syntax and base operations}
\label{sec:derived-syntax}

Our core language is expressive enough to encode a wide range of typical
language constructs. To illustrate its expressiveness and for
convenience in expressing example computations in the rest of the paper, we
define derived syntax for let-expressions, first-class functions, thunks, and
base operations, all of which can be desugared to the core syntax introduced above.

\paragraph{Let bindings.} The derived syntax for \cplinline{let}
bindings desugars to the core syntax of the \lang{} as follows:
$$\<let> x = e_{1} \<in> e_{2}\ \  \rightsquigarrow\ \  \svc{(\<spwn>(\srvt{ \mathtt{let}<x> \triangleright e_{2} }))}{\mathtt{let}}<e_{1}>.   $$
Evaluating \cplinline{let} amounts to (a) spawning a new server instance that offers a
service called $\mathtt{let}$ that will run $e_2$ when requested and (b) requesting
this service with the bound expression $e_{1}$ as an argument.

We also define derived syntax for a variant of \cplinline{let} called \cplinline{letk} for
cases in which the bound expression provides its result through a
continuation. This is to account for the fact that often expressions in the
\lang{} involve asynchronous service calls that, instead of returning a
value, pass it to a continuation.
The definition of \cplinline{letk} is as follows:
$$\<letk> x = e_{1}<\tuple{e}> \<in> e_{2} \rightsquigarrow e_{1}<\tuple{e}, \svc{(\<spwn>(\srvt{ \mathtt{k}<x> \triangleright e_{2} }))}{\mathtt{k}}>.   $$
Here, we bind the variable $x$ via continuation that we add to the service
request $e_{1}<\tuple{e}>$, assuming $e_1$ takes a continuation as final
argument. When $e_1$ terminates, it calls the continuation and thus triggers
execution of $e_2$. 
For example, we can use \cplinline{letk} to bind and use the result of the \cplinline{Fact}
server shown above:
$$ \<letk> \id{n} = \svc{(\<spwn> \mathtt{Fact})}{\mathtt{main}}<\mathtt{5}> \<in> \svc{\mathtt{Log}}{\mathtt{write}}<n> $$

\noindent Note that the desugaring for both variants of \cplinline{let} wrap the body $e_{2}$ in
a server template, which changes the meaning of the self reference \cplinline{this} in
$e_{2}$. To counter this effect and to make the derived syntax transparent, the
desugaring that we actually implemented substitutes free occurrences of
\cplinline{this} in $e_2$ to the server instance surrounding the \cplinline{let}.

\paragraph{First-class functions.} We can encode first-class functions as
server instances with a single service \cplinline{app}:
$$\lambda x.\; e \rightsquigarrow \<spwn> (\srvt{\mathtt{app}<x, k> \triangleright \CPS{e}{k}}), \ \  \text{where $k$ is fresh}$$
Recall that service requests in \lang{} are asynchronous. In order to
correctly propagate argument values and the result of function bodies, we need
to transform argument expressions and function bodies into continuation-passing
style, for example using the following transformation \CPSname:
\\[1ex]
$\figfontsize\begin{array}{l@{\ }l@{\ }l}
  \CPS{\lambda x.\; e}{k} &=& k<\<spwn> (\srvt{\mathtt{app}<x, k> \triangleright \CPS{e}{k}})> \\
  \CPS{(f\ e)}{k} &=& \CPS{f}{\svc{(\<spwn> (\srvt{\mathtt{k_1}<v_f> \triangleright \hskip2em\text{where $v_f$ is fresh} \\
        && \quad \CPS{e}{\svc{(\<spwn> (\srvt{\mathtt{k_2}<v_e> \triangleright \hfill\text{where $v_e$ is fresh} \\
              && \quad\quad \svc{v_f}{\mathtt{app}}<v_e, k>}))}{\mathtt{k_2}}}}))}{\mathtt{k_1}}} \\
  \CPS{e}{k} &=& k<e> \\
\end{array}$
\\[1ex]
For example, we can define and apply a function that instantiates a server-template argument:
$$\svc{(\lambda x.\; \<spwn> x)}{\mathtt{app}}<\mathtt{Fact}, k_0> -->*  k_0<\mathtt{Fact^0}>$$

\noindent Our encoding of first-class functions is similar to the one proposed
for the Join Calculus~\cite{Fournet-Gonthier:popl96} and it also shows that our
language is Turing-complete. Moreover, it enables Church-encodings of data types
such as numbers or lists.

\paragraph{Thunks.} A thunk is a first-class value that represents a packaged,
delayed computation. Servers can force the computation of a thunk and they can
pass thunks to other servers. Thunks play a significant role in distributed
systems, because they enable servers to distribute work over other servers
dynamically.

Interestingly, lambdas as defined above do not give rise to a useful
implementation of thunks, because a computation that is encoded as a lambda is
already installed on a concrete spawned server: Every lambda expression gives
rise to exactly one server instance that solely executes the body of this
lambda. In contrast, we desire an implementation of thunks that allows us to
dynamically allocate servers for executing a thunk.  To this end, we represent
thunks as server templates:
$$\<thunk>\; e \rightsquigarrow \srvt{\mathtt{force}<k> \triangleright k<e>}$$
Since server templates are first-class in \lang{}, thunks can be passed
to other servers. A server can instantiate a thunk any number of
times and call the \cplinline{force} request with a continuation to get the result of
the thunk.

Note that similarly to \cplinline{let}, we substitute \cplinline{this} in thunks and 
lambda abstractions by the enclosing server instance to make our encodings transparent.

\paragraph{Base operations.} While we can use Church encodings to represent data
types and their operations, it is more convenient (and more efficient in
practice) to assume some built-in base operations. In particular, we can take
the liberty of assuming that base operations are synchronous and in direct
style, that is, base operations return values directly and do not require
continuation-passing style. For the remainder of the paper, we assume built-in
base operations on Booleans, integers, floating points, tuples and
lists. We added these operations in our implementation and it is easy to add
further base operations.  To distinguish synchronous calls to base operations
from asynchronous service requests, we use rounded parentheses for base
operations, for example, $\functionsymbol{max}{7,11}$.


}{
\input{derived_syntax_short}
}


\ifdef{\tecreport}{
\section{Type System}
\label{sec:type-system}

We designed and formalized a type system for \lang{} in the style of
System F with subtyping and bounded quantification~\cite{pierce2002types}. The type system ensures that all service requests in a
well-typed program refer to valid service declarations with the correct number
of arguments and the right argument types. Subtyping enables us to define
public server interfaces, where the actual server implementation defines private
services, for example, to manage internal state.

Figure \ref{fig:typeddjcsyntax} shows the syntax of types, typing contexts, location
typings as well as extensions to expressions and values. Similar to lambda calculi
with references, alongside standard typing contexts $\Gamma$ we also
record the type of server instances at each allocated address via location typings $\Sigma$.
A type $T$ is either the top type
$\Top$, the unit type $\Unit$, a type variable $\alpha$, a service type
$\Tsvc{\tuple{T}}$ representing a service with arguments of type $T_i$, a
server-template type $\Tsrv{\tuple{x\tpe T}}$ of a server with services $x_i$ of
type $T_i$, the special server-template type $\Tsrv{\bot}$ for inactive servers, a server-instance type $\Tsrvinst{T}$, 
a server-image type $\Timg{T}$ or a universal type
$\Tuni{\alpha \subtpe T_1}{T_2}$.
The syntax of typing contexts and the
extensions of expressions and values is standard: We have type abstraction 
$\Tabs{\alpha\subtpe T}{e}$, and type application $\Tapp{e}{T}$. Finally, we require type annotations in join patterns.

We define the typing judgment $\Tjudge{\Gamma\mid\Sigma}{e}{T}$ by the rules depicted in
figure~\ref{fig:typerules}. The rules are mostly
straightforward. \inflabel{T-Var} looks up the type of a variable or $\<this>$
in the context. \inflabel{T-Par} requires that all expressions of a parallel
expression have type \Unit.

\inflabel{T-Srv} is the most complicated type rule. Intuitively, the type of a
server template is the set of all services that the server offers. $r_i$
represents rule number $i$ of the server template, where $p_{i,j}$ is pattern
number $j$ of rule number $i$. Patterns $p_{i,j}$ provide services $x_{i,j}$,
which have service type $S_{i,j}$. The type $T$ of the server template then
consists of all provided services with their types. To make sure the server
template is well-typed, we check that join patterns are linear (service
parameters are distinct), services in different patterns have
consistent types and that all free type variables are bound ($\ftv{T}\subseteq \ftv{\Gamma}$). Finally, we
check the right-hand side $e_i$ of each reaction rule, where we bind all service
parameters $y_{i,j}$ as well as $\<this>$. 

Next, we define three introduction rules for server image types.
The first is \inflabel{T-$\dead$}, which specifies that $\dead$ is an image of an
inert server. The second rule  \inflabel{T-Img} types server image values $(\srvt{\tuple{r}}, \tuple{m})$,
where we require that $\srvt{\tuple{r}}$ is a well-typed server template and
 each service request value in the buffer $\tuple{m}$ is understood
by this server template. That is, each value $m_{k}$ in $\tuple{m}$
must correspond to a join pattern mentioned in $\tuple{r}$ and the arguments must have the types
which are annotated in the join pattern.
The last introduction rule for server image types is $\inflabel{T-Snap}$ for snapshots, which requires
that the argument to $\<snap>\!\!$ is actually a server instance in order to yield a corresponding server image.

Rule \inflabel{T-Repl} types replacements $\repl{e_{1}}{e_{2}}$ as $\Unit$ and requires that
replacements are preserving the interface of the server instance to be replaced.
That is, the first argument must be an instance with interface type $T$ and the second argument 
an image type for the same interface type. 

There are two introduction rules for server instances.
\inflabel{T-Spwn} requires the argument of $\<spwn>\!\!$ to be a server
image in order to yield a corresponding instance. 
Rule \inflabel{T-Inst} handles server addresses, which must be allocated
in the location typing $\Sigma$ to a server image. \todo{Double check the entire paper to get the wording on server image, instance and address right }

\inflabel{T-Svc} defines
that a service reference is well-typed if the queried server provides a service
of the required name. \inflabel{T-Req} requires that the target of a service
request indeed is a service reference and that the request has the right number of
arguments with the right types. The remaining four type rules are standard.

Figure~\ref{fig:subtypingrules} defines the subtyping relation
$\Subjudge{\Gamma}{T}{T}$. We employ width subtyping and depth subtyping for
server-template types, that is, the server subtype can provide more services
than the server supertype and the server subtype can provide specialized
versions of services promised by the server supertype. 
A special case is rule \inflabel{S-Srv$_{\bot}$}, which specifies that
the type $\Tsrv{\bot}$ for inert servers is a subtype of every other server template type.
This ensures that $\dead$ can be placed in every context requiring an image of type $\Timg{\Tsrv{T}}$.
The other subtyping rules are straightforward.

\paragraph{Preservation.}
\label{sec:type-preservation}

We prove preservation for our type system using standard substitution
lemmas~\cite{pierce2002types}. The proofs appear in the appendix at the end of the paper. 

\begin{lemma}[Substitution Lemma]
If $\Tjudge{\Gamma,x \tpe  T_1\mid\Sigma}{e_2}{T_2}$ and $\Tjudge{\Gamma\mid\Sigma}{e_1}{T_1}$
then $\Tjudge{\Gamma\mid\Sigma}{\nobreak e_{2}\!\Subst{x := e_1}}{T_2}$.
\end{lemma}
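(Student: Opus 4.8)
The plan is to proceed by induction on the derivation of $\Tjudge{\Gamma, x\tpe T_1\mid\Sigma}{e_2}{T_2}$, analyzing the last typing rule applied. Throughout, I adopt the usual Barendregt convention so that the variables bound by server templates, type abstractions, and the derived \texttt{let}/\texttt{letk} forms are chosen fresh for $x$ and for the free term and type variables of $e_1$; this makes $\Subst{x := e_1}$ commute with every binder without capture. I also take for granted the standard \emph{weakening} and \emph{permutation} lemmas for this System~F-style system: if $\Tjudge{\Gamma\mid\Sigma}{e}{T}$ and $\Gamma'$ extends or reorders $\Gamma$ by fresh bindings, then $\Tjudge{\Gamma'\mid\Sigma}{e}{T}$. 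Note that $\Sigma$ is never modified, since substituting a term variable cannot change the types assigned to server addresses.

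The base and non-binding cases are routine. For \inflabel{T-Var} with $e_2 = y$: if $y = x$ then $T_2 = T_1$ and $y\Subst{x := e_1} = e_1$, so the goal is exactly the second hypothesis $\Tjudge{\Gamma\mid\Sigma}{e_1}{T_1}$; if $y \neq x$ (in particular if $y = \<this>$) then $(\Gamma, x\tpe T_1)(y) = \Gamma(y)$ and $y\Subst{x := e_1} = y$, so \inflabel{T-Var} re-applies over $\Gamma$. \inflabel{T-Inst} and \inflabel{T-$\dead$} have no term subexpressions, and the substitution is the identity on server addresses and on $\dead$. For \inflabel{T-Par}, \inflabel{T-Svc}, \inflabel{T-Req}, \inflabel{T-Snap}, \inflabel{T-Repl}, \inflabel{T-Spwn}, and \inflabel{T-TApp}, I push the substitution to the immediate subexpressions, apply the induction hypothesis to each, and rebuild the derivation with the same rule; the type-level side conditions (argument types in \inflabel{T-Req}, the subtyping premise and $\ftv{T_1}\subseteq\ftv{\Gamma}$ in \inflabel{T-TApp}) are unaffected, since $\Subst{x := e_1}$ acts only on term variables. \inflabel{T-Img} is handled the same way, applying the induction hypothesis both to the enclosed server template and to each message argument value $v_{i,j}$ in the buffer. \inflabel{T-Sub} follows by applying the induction hypothesis to its premise and re-applying the subtyping step.

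The cases that require care are the binders \inflabel{T-Srv} and \inflabel{T-TAbs} (hence also the derived \texttt{let} forms, which desugar through \inflabel{T-Srv}). For \inflabel{T-Srv}, each rule body $e_i$ is typed in $\Gamma, x\tpe T_1, \tuple{y_{i,j}\tpe T_{i,j}}, \<this>\tpe T$. By the freshness convention the pattern variables $\tuple{y_{i,j}}$ and $\<this>$ are distinct from $x$ and do not occur in $e_1$, so $(\srvt{\tuple{r}})\Subst{x := e_1}$ merely substitutes into the bodies. I lift $\Tjudge{\Gamma\mid\Sigma}{e_1}{T_1}$ by weakening to $\Tjudge{\Gamma, \tuple{y_{i,j}\tpe T_{i,j}}, \<this>\tpe T\mid\Sigma}{e_1}{T_1}$, and then, modulo a permutation aligning the contexts, apply the induction hypothesis to each $e_i$, obtaining $\Tjudge{\Gamma, \tuple{y_{i,j}\tpe T_{i,j}}, \<this>\tpe T\mid\Sigma}{e_i\Subst{x := e_1}}{\Unit}$. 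The pattern-linearity condition, the type-consistency of shared services, and $\ftv{T}\subseteq\ftv{\Gamma}$ concern only the (unchanged) patterns and type context, so \inflabel{T-Srv} re-applies and yields the same server-template type $T$. \inflabel{T-TAbs} is analogous: weakening moves $e_1$ under the fresh bound type variable $\alpha\subtpe T$ before invoking the induction hypothesis.

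I expect the obstacle to be bookkeeping rather than insight: making \inflabel{T-Srv} (and the derived \texttt{let}) precise needs the weakening and permutation lemmas stated for the combined context $\Gamma\mid\Sigma$, together with a careful statement of the $\alpha$-conversion side conditions that justify commuting $\Subst{x := e_1}$ past the server-template and type-abstraction binders. The remaining cases are mechanical.
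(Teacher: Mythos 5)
Your overall strategy---induction on the derivation of $\Tjudge{\Gamma, x\tpe T_1\mid\Sigma}{e_2}{T_2}$, with the Barendregt convention handling the binders and weakening/permutation aligning contexts before invoking the induction hypothesis---is exactly the paper's proof, and the non-binding cases, \inflabel{T-TAbs}, \inflabel{T-TApp} and \inflabel{T-Sub} all go through as you describe (for the last two the paper makes explicit the observation you leave implicit, namely that $\Subjudge{\Gamma, x\tpe T_1}{S}{T}$ implies $\Subjudge{\Gamma}{S}{T}$ because the subtyping rules never consult term-variable bindings).

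The one step that does not survive scrutiny is your justification of the \inflabel{T-Srv} case. You write that by the freshness convention the pattern variables $\tuple{y_{i,j}}$ \emph{and} $\<this>$ are distinct from $x$. For the $y_{i,j}$ this is fine, but $\<this>$ is a fixed keyword, not an $\alpha$-convertible bound variable: it cannot be renamed apart from $x$, and the lemma genuinely must be invoked with $x = \<this>$ --- the preservation proof for \inflabel{React} uses precisely this instance to discharge the assumption $\<this>\tpe\Tsrv{T'}$ when instantiating a rule body. So your argument silently excludes a case that is actually needed. The paper handles it by an explicit case distinction inside \inflabel{T-Srv}: when $x = \<this>$, the inner binding $\<this>\tpe T$ shadows the outer assumption $\<this>\tpe T_1$, which can therefore be dropped from the premises, and since the template rebinds $\<this>$ the substitution is vacuous, $(\srvt{\tuple{r}})\Subst{\<this> := e_1} = \srvt{\tuple{r}}$, so \inflabel{T-Srv} reapplies directly. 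The fix is easy, but as written your proof fails on this case; add the case split (the analogous issue does not arise in \inflabel{T-Var}, where $y = x = \<this>$ already falls under your first subcase).
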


\begin{lemma}[Type Substitution Preserves Subtyping]\ \\
  If $\Subjudge{\Gamma, \alpha\subtpe T', \Gamma'}{S}{T}$ and $\Subjudge{\Gamma}{S'}{T'}$
  then \\ $\Subjudge{\Gamma, \Gamma'\sigma}{S\sigma}{T\sigma}$ where 
  $\sigma = \Subst{\alpha := S'}$.
\end{lemma}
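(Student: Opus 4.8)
The plan is to proceed by induction on the derivation of $\Subjudge{\Gamma, \alpha\subtpe T', \Gamma'}{S}{T}$, writing $\sigma = \Subst{\alpha := S'}$ throughout and relying on the standard well-formedness conventions that a context binds each type variable at most once and that $\alpha$ occurs neither in $\Gamma$ nor in $T'$, so that $\Gamma\sigma = \Gamma$ and $T'\sigma = T'$. Most cases are purely structural: for \inflabel{S-Top} and \inflabel{S-Refl} the conclusion is immediate since $\Top\sigma = \Top$ and $S\sigma = T\sigma$; for \inflabel{S-Trans}, \inflabel{S-Inst}, \inflabel{S-Img}, \inflabel{S-Svc} and \inflabel{S-Srv} I would apply the induction hypothesis to each premise (a subderivation) and reassemble the same rule, using that substitution commutes with the type constructors, e.g. $(\Tsrvinst{U})\sigma = \Tsrvinst{(U\sigma)}$, and likewise pointwise inside service and server-template records; \inflabel{S-Srv$_{\bot}$} is immediate because $S\sigma = \Tsrv{\bot}$ is closed and $T\sigma$ is again a server-template type.

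The two cases requiring care are \inflabel{S-TVar} and \inflabel{S-Univ}. For \inflabel{S-TVar}, $S$ is a type variable $\beta$ with $\beta \subtpe T$ in $\Gamma,\alpha\subtpe T',\Gamma'$. If $\beta \neq \alpha$, then $\beta\sigma = \beta$, and the binding of $\beta$ lies either in $\Gamma$ (where, since $\alpha$ does not occur, its bound is unchanged) or in $\Gamma'$ (where its bound becomes $T\sigma$); in both subcases $\beta \subtpe T\sigma$ is in $\Gamma, \Gamma'\sigma$ and \inflabel{S-TVar} re-derives the goal. If $\beta = \alpha$, then by the single-binding convention $T = T'$, so the goal is $\Subjudge{\Gamma, \Gamma'\sigma}{S'}{T'}$ (using $T'\sigma = T'$); this follows from the hypothesis $\Subjudge{\Gamma}{S'}{T'}$ together with a routine weakening lemma for subtyping, proved by a separate easy induction, applied to extend the context by $\Gamma'\sigma$.

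For \inflabel{S-Univ}, $S = \Tuni{\alpha_1\subtpe W}{U_1}$ and $T = \Tuni{\alpha_2\subtpe W}{U_2}$ with premise $\Subjudge{\Gamma,\alpha\subtpe T',\Gamma',\alpha_1\subtpe W}{U_1}{U_2\Subst{\alpha_2:=\alpha_1}}$. After $\alpha$-renaming I may assume $\alpha_1$ and $\alpha_2$ are distinct from $\alpha$ and do not occur in $S'$, so that $\sigma$ commutes with the renaming $\Subst{\alpha_2:=\alpha_1}$. Applying the induction hypothesis with the trailing context instantiated to $\Gamma', \alpha_1\subtpe W$ yields $\Subjudge{\Gamma,\Gamma'\sigma,\alpha_1\subtpe W\sigma}{U_1\sigma}{(U_2\sigma)\Subst{\alpha_2:=\alpha_1}}$, and \inflabel{S-Univ} then gives $\Subjudge{\Gamma,\Gamma'\sigma}{(\Tuni{\alpha_1\subtpe W}{U_1})\sigma}{(\Tuni{\alpha_2\subtpe W}{U_2})\sigma}$, which is the claim. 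I expect the main obstacle to be precisely this bookkeeping: keeping the $\alpha$-conversion side conditions straight so the two substitutions commute, and having the generalized statement carry the extra context $\Gamma'$ so the induction passes through the binder introduced by \inflabel{S-Univ}; the \inflabel{S-TVar} case with $\beta=\alpha$ additionally forces one to isolate and prove weakening for subtyping as a prerequisite.
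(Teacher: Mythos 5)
Your proof is correct and follows the same overall strategy as the paper's: induction on the derivation of $\Subjudge{\Gamma,\alpha\subtpe T',\Gamma'}{S}{T}$ with the trailing context $\Gamma'$ generalized so the induction passes through the binder in \inflabel{S-Univ}, structural reassembly for the congruence rules, and the same $\alpha$-conversion bookkeeping to commute $\sigma$ with the renaming $\Subst{\alpha_2:=\alpha_1}$ in the universal case. The one place you genuinely diverge is the \inflabel{S-TVar} case with the looked-up variable equal to $\alpha$: the paper asserts $S\sigma = T'$ and $T'=T=T\sigma$ and closes the case with \inflabel{S-Refl}, but since $\sigma=\Subst{\alpha:=S'}$ one actually has $S\sigma = S'$, so the reflexivity step only goes through if $S'=T'$, which is not assumed. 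Your treatment --- reducing the goal to $\Subjudge{\Gamma,\Gamma'\sigma}{S'}{T'}$ and discharging it from the hypothesis $\Subjudge{\Gamma}{S'}{T'}$ via an explicitly stated weakening lemma for subtyping --- is the correct argument here, at the cost of one extra (routine) auxiliary lemma that the paper never isolates. In short: same route, but your version repairs the one case the paper gets wrong.
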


\begin{lemma}[Type Substitution Lemma]
   If $\Tjudge{\Gamma, \alpha\subtpe S,\Gamma'\mid\Sigma}{e}{T}$ and $\Subjudge{\Gamma}{S'}{S}$ then $\Tjudge{\Gamma, \Gamma'\sigma\mid\Sigma\sigma}{e\sigma}{T\sigma}$
   where $\sigma = \Subst{\alpha := S'}$.
\end{lemma}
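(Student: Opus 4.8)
The plan is to proceed by induction on the derivation of $\Tjudge{\Gamma, \alpha\subtpe S,\Gamma'\mid\Sigma}{e}{T}$, in the standard style for System~F with subtyping~\cite{pierce2002types}. Before the induction I would record the routine facts it rests on. Because the context $\Gamma, \alpha\subtpe S,\Gamma'$ is well-formed, $\alpha$ does not occur free in $\Gamma$, so $\Gamma\sigma=\Gamma$ --- this is why the conclusion keeps $\Gamma$ unchanged. Well-formedness of the premise $\Subjudge{\Gamma}{S'}{S}$ gives $\ftv{S'}\subseteq\ftv{\Gamma}$, which is the key fact that lets type substitution keep free type variables inside the shrunk context $\Gamma,\Gamma'\sigma$. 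I also need the usual commutation identity for type substitutions: if $\beta\neq\alpha$ and $\beta\notin\ftv{S'}$ then $(U\Subst{\beta:=T_1})\sigma=(U\sigma)\Subst{\beta:=T_1\sigma}$; together with the Barendregt convention for bound type variables this handles the quantifier cases. Finally, I will invoke the already-proved ``Type Substitution Preserves Subtyping'' lemma --- which is conveniently stated with an arbitrary tail $\Gamma'$ --- whenever a subderivation appeals to the subtyping judgment.

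Most rules are pure bookkeeping. For \inflabel{T-Var}, the type of the variable (or of $\<this>$) is looked up in $\Gamma,\alpha\subtpe S,\Gamma'$; applying $\sigma$ to the looked-up type and to the context gives the judgment against $\Gamma,\Gamma'\sigma$. The rules \inflabel{T-Par}, \inflabel{T-$\dead$}, \inflabel{T-Img}, \inflabel{T-Snap}, \inflabel{T-Repl}, \inflabel{T-Spwn}, \inflabel{T-Inst}, \inflabel{T-Svc} and \inflabel{T-Req} all follow by applying the induction hypothesis to each subderivation and reassembling the same rule with $\sigma$ pushed through the types; the location typing travels along as $\Sigma\sigma$ (and for \inflabel{T-Inst} one uses $(\Sigma\sigma)(i)=(\Sigma(i))\sigma$), while these rules' side conditions --- argument counts, service names --- are syntactic and hence preserved. \inflabel{T-Srv} is the bulkiest such case: its subderivations extend the context with $\tuple{y_{i,j}\tpe T_{i,j}}$ and $\<this>\tpe T$, so I apply the induction hypothesis with the tail taken to be $\Gamma',\tuple{y_{i,j}\tpe T_{i,j}},\<this>\tpe T$, observe that linearity and service-consistency are syntactic, and re-establish $\ftv{T\sigma}\subseteq\ftv{\Gamma,\Gamma'\sigma}$ from $\ftv{T}\subseteq\ftv{\Gamma,\alpha\subtpe S,\Gamma'}$ using $\ftv{S'}\subseteq\ftv{\Gamma}$. \inflabel{T-Sub} is immediate from the induction hypothesis and the subtyping substitution lemma.

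The two delicate cases touch type binders. For \inflabel{T-TAbs} I $\alpha$-rename the bound variable $\beta$ so that $\beta\neq\alpha$ and $\beta\notin\ftv{S'}$, apply the induction hypothesis with $\Gamma',\beta\subtpe T_2$ as the tail, and push $\sigma$ under the $\Lambda$. For \inflabel{T-TApp}, where the premises give $e:\Tuni{\beta\subtpe T_2}{T}$, $\Subjudge{\Gamma,\alpha\subtpe S,\Gamma'}{T_1}{T_2}$ and $\ftv{T_1}\subseteq\ftv{\Gamma,\alpha\subtpe S,\Gamma'}$, with conclusion $\Tapp{e}{T_1}:T\Subst{\beta:=T_1}$: the induction hypothesis yields $e\sigma:\Tuni{\beta\subtpe T_2\sigma}{T\sigma}$; the subtyping substitution lemma yields $\Subjudge{\Gamma,\Gamma'\sigma}{T_1\sigma}{T_2\sigma}$; and $\ftv{T_1\sigma}\subseteq\ftv{\Gamma,\Gamma'\sigma}$ follows from $\ftv{S'}\subseteq\ftv{\Gamma}$, discharging the last side condition. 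Applying \inflabel{T-TApp} then gives $\Tapp{e\sigma}{T_1\sigma}:(T\sigma)\Subst{\beta:=T_1\sigma}$, and the commutation identity rewrites the type to $(T\Subst{\beta:=T_1})\sigma$, as required.

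I expect the main obstacle to be precisely this accumulation of free-type-variable side conditions together with the commuting of the two substitutions in \inflabel{T-TApp}: nothing here is conceptually deep, but aligning the $\ftv$ inclusions, the freshness conditions on bound type variables, and confirming that ``Type Substitution Preserves Subtyping'' is available in exactly the shape needed (non-empty tail $\Gamma'$, applied to a subtyping \emph{premise} rather than the main judgment) is where the real care goes. Everything else is a mechanical structural induction.
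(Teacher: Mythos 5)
Your proposal is correct and follows essentially the same route as the paper's proof: induction on the typing derivation with an arbitrary context tail $\Gamma'$, the ``Type Substitution Preserves Subtyping'' lemma discharging the subtyping premises in \inflabel{T-TApp} and \inflabel{T-Sub}, the variable convention handling the binder cases, and the substitution commutation identity aligning the result type in \inflabel{T-TApp}. The free-type-variable bookkeeping you flag (deriving $\ftv{T\sigma}\subseteq\ftv{\Gamma,\Gamma'\sigma}$ from $\ftv{S'}\subseteq\ftv{\Gamma}$) is exactly the step the paper also relies on in the \inflabel{T-Srv} and \inflabel{T-TApp} cases.
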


\begin{lemma}[Location Typing Extension Preserves Types]\ \\
  If $\Tjudge{\Gamma\mid\Sigma}{e}{T}$ and $\Sigma\subseteq\Sigma'$, then $\Tjudge{\Gamma\mid\Sigma'}{e}{T}$.
\end{lemma}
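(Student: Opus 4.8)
The plan is to proceed by induction on the derivation of $\Tjudge{\Gamma\mid\Sigma}{e}{T}$. The crucial observation is that the location typing is inspected by exactly one typing rule, namely \textsc{T-Inst}, which derives $\Tjudge{\Gamma\mid\Sigma}{i}{\Tsrvinst{T}}$ from the side condition $\Sigma(i) = \Timg{T}$. Since $\Sigma\subseteq\Sigma'$ means that $\Sigma'$ agrees with $\Sigma$ on $\dom{\Sigma}$ and possibly types further addresses, we still have $\Sigma'(i) = \Timg{T}$, so \textsc{T-Inst} applies verbatim with $\Sigma'$ in place of $\Sigma$. Every other typing rule either ignores the location typing entirely (\textsc{T-Var}, \textsc{T-}$\dead$) or threads the same location typing unchanged into all of its premises; in those cases we invoke the induction hypothesis on each premise and reassemble the derivation with $\Sigma'$.

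A few cases deserve an explicit remark, though each is routine. The binder-introducing rules — \textsc{T-Srv}, \textsc{T-Img}, \textsc{T-TAbs} — extend the ordinary context $\Gamma$ (with the service parameters and $\<this>$, with buffer-value typings, or with a bounded type variable, respectively) but leave the location typing untouched, so the induction hypothesis still applies to each premise with the enlarged $\Gamma$ and the same $\Sigma'$. The rules involving subtyping — \textsc{T-Sub} and \textsc{T-TApp} — are unproblematic because the subtyping judgment $\Subjudge{\Gamma}{T}{U}$ does not mention the location typing at all; we reuse the subtyping premises unchanged and apply the induction hypothesis only to the typing premise. The matching side conditions that occur in the operational semantics play no role here, since they do not appear in typing derivations.

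I expect no genuine obstacle: this is a standard weakening lemma for the store-style component of the context, and the only mildly delicate point is purely notational, namely reading $\Sigma\subseteq\Sigma'$ as set inclusion of finite maps so that lookups are preserved, which is exactly what makes the \textsc{T-Inst} case go through. The lemma is then used together with the substitution lemmas in the proof of preservation, where reductions such as \textsc{Spwn} enlarge the routing table and correspondingly the location typing.
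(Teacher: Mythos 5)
Your proposal is correct and matches the paper's approach exactly: the paper proves this lemma by ``straightforward induction on the derivation,'' which is precisely the weakening argument you spell out, with \textsc{T-Inst} as the only rule that actually consults $\Sigma$ and the inclusion $\Sigma\subseteq\Sigma'$ guaranteeing that the lookup still succeeds. Your elaboration of the remaining cases is a faithful expansion of what the paper leaves implicit.
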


 \begin{lemma}[Replacement]
   If $\mathcal{D}$ is a derivation with root \Tjudge{\Gamma}{\context{E}{e}}{U}, $\mathcal{D}' \subdereq \mathcal{D} $
a derivation with root \Tjudge{\Gamma'}{e}{U'} and \Tjudge{\Gamma'}{e'}{U'}, then $\Tjudge{\Gamma}{\context{E}{e'}}{U}$.
 \end{lemma}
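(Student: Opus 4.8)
The plan is to prove the Replacement Lemma by induction on the structure of the evaluation context $E$, following the standard pattern used for proving type preservation in the presence of evaluation contexts (cf.~\cite{pierce2002types}). The statement says: whenever $\mathcal{D}$ derives $\Tjudge{\Gamma}{\context{E}{e}}{U}$ and $\mathcal{D}' \subdereq \mathcal{D}$ is a subderivation with root $\Tjudge{\Gamma'}{e}{U'}$, and moreover $\Tjudge{\Gamma'}{e'}{U'}$ holds, then $\Tjudge{\Gamma}{\context{E}{e'}}{U}$. The intuition is that the derivation $\mathcal{D}$ contains, somewhere inside it, a ``plug point'' where $e$ is typed at $\Gamma' \vdash e : U'$; we may cut out that subderivation and splice in the assumed derivation of $e'$ at the same judgment, obtaining a valid derivation of the same conclusion with $e$ replaced by $e'$.

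\textbf{Base case.} When $E = \hole$, we have $\context{E}{e} = e$, so $\mathcal{D}$ is a derivation of $\Tjudge{\Gamma}{e}{U}$. Since $\mathcal{D}' \subdereq \mathcal{D}$ has root $\Tjudge{\Gamma'}{e}{U'}$, and $\mathcal{D}$ also concludes with $e$, a short argument on where $\mathcal{D}'$ sits (it must be $\mathcal{D}$ itself, possibly up to a chain of \inflabel{T-Sub} and \inflabel{T-TApp} uses that do not change the subject term) shows that $\Gamma' = \Gamma$ and that $U'$ and $U$ are related by the same trailing subtyping/type-application steps; replacing the top portion of $\mathcal{D}$ with $\Tjudge{\Gamma}{e'}{U'}$ and re-applying those trailing steps yields $\Tjudge{\Gamma}{e'}{U}$. (In most treatments one actually takes $\mathcal{D}'$ to be exactly the root, so $\Gamma' = \Gamma$, $U' = U$, and the base case is immediate.)

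\textbf{Inductive step.} For each non-trivial context former the argument is uniform. Suppose, say, $E = \<spwn>\ E''$. Then the last rule of $\mathcal{D}$ concluding $\Tjudge{\Gamma}{\<spwn>\,\context{E''}{e}}{U}$ must be \inflabel{T-Spwn} (possibly preceded by \inflabel{T-Sub}), so $U = \Tsrvinst{T}$, and $\mathcal{D}$ has an immediate subderivation $\mathcal{D}''$ with root $\Tjudge{\Gamma}{\context{E''}{e}}{\Timg{T}}$. The plug-point subderivation $\mathcal{D}'$ lies inside $\mathcal{D}''$, so by the induction hypothesis we obtain $\Tjudge{\Gamma}{\context{E''}{e'}}{\Timg{T}}$; re-applying \inflabel{T-Spwn} (and the trailing \inflabel{T-Sub}, if any) gives $\Tjudge{\Gamma}{\<spwn>\,\context{E''}{e'}}{U}$. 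The remaining cases --- $\svc{E''}{x}$, $E''<\tuple e>$, $e_0<\tuple{e_1}\,E''\,\tuple{e_2}>$, $\prlll{\tuple{e_1}\,E''\,\tuple{e_2}}$, $\<snap>\,E''$, $\repl{E''}{e_2}$, $\repl{e_1}{E''}$ --- are handled identically, each time inverting the corresponding typing rule (\inflabel{T-Svc}, \inflabel{T-Req}, \inflabel{T-Par}, \inflabel{T-Snap}, \inflabel{T-Repl}), appealing to the induction hypothesis on the subderivation of the hole's subexpression, and re-applying the rule. Note that the hole is never under a binder in any evaluation context of Figure~\ref{fig:djcsyntax}, so $\Gamma$ is unchanged as we descend, which is what makes the bare induction go through without a strengthened hypothesis.

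\textbf{The main obstacle} is purely bookkeeping around the derivation-subtree relation $\subdereq$ and the non-syntax-directed rules \inflabel{T-Sub}, \inflabel{T-TAbs}, \inflabel{T-TApp}: one must be careful that inversion of the last syntax-directed rule ``sees through'' a possible stack of subsumption steps, and that the identified subderivation $\mathcal{D}'$ really is the one sitting at the hole rather than some unrelated occurrence. This is the standard subtlety in System~F$_{<:}$-style replacement lemmas and is resolved in the usual way --- either by phrasing the lemma so that $\mathcal{D}'$ is the root when $E = \hole$, or by a separate sub-induction peeling off \inflabel{T-Sub}. Everything else is a routine rule-by-rule traversal with no calculation.
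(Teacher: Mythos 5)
Your proof is correct and is the standard replacement argument; note, however, that the paper never actually proves this lemma --- it is stated in the type-system section, the appendix of proofs omits it, and the preservation theorem's proof does not invoke it by name, instead inlining exactly this reasoning inside each of its \inflabel{Cong} subcases (apply the induction hypothesis to the subderivation for the redex, then re-apply the enclosing typing rule). So there is no paper proof to diverge from, and your induction on the structure of $E$ would serve as the missing one. Your handling of the two genuine subtleties --- peeling a trailing chain of \inflabel{T-Sub} off the syntax-directed rule before inverting it, and pinning $\mathcal{D}'$ to the occurrence of $e$ at the hole rather than to an accidental second occurrence of the same term --- is the right way to make the statement precise, and your observation that no evaluation context places the hole under a binder (so $\Gamma' = \Gamma$ throughout and no strengthening of the hypothesis is needed) is exactly what makes the bare induction go through. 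Two small adjustments to match the paper's setting: the typed language extends evaluation contexts with type application, $\context{E}{} ::= \ldots \mid \Tapp{\context{E}{}}{T}$, so you need one further case inverting \inflabel{T-TApp}; and the judgment in this system is really $\Tjudge{\Gamma\mid\Sigma}{e}{T}$ with a location typing $\Sigma$, which the lemma statement elides --- it is carried through unchanged in every case, since no context former binds or modifies $\Sigma$, but it should appear in the judgments you splice.
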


\begin{theorem}[Preservation]\ \\
   If $\Tjudge{\Gamma\mid\Sigma}{e}{T}$ and  $\Gamma\mid\Sigma\vdash\mu$ and $e\mid\nobreak\mu --> e'\mid\nobreak\mu'$, then $\Tjudge{\Gamma\mid\Sigma'}{e'}{T}$ for some $\Sigma'$, where $\Sigma\subseteq\Sigma'$ and $\Gamma\mid\Sigma'\vdash \mu'$.
\end{theorem}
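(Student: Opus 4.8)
The plan is to argue by induction on the derivation of the reduction step $e\mid\mu \longrightarrow e'\mid\mu'$, with a case analysis on the last rule used. In each case I would first apply a generation (inversion) argument to the derivation of $\Tjudge{\Gamma\mid\Sigma}{e}{T}$: peel off any trailing uses of \inflabel{T-Sub}, expose the typing of the redex, rebuild a typing derivation for the reduct, and reattach the subsumption steps at the end. In every case except \inflabel{Spwn} I expect to take $\Sigma' = \Sigma$; for \inflabel{Spwn} I would extend $\Sigma$ with the type of the freshly allocated address, and for \inflabel{Cong} I would take the $\Sigma'$ delivered by the induction hypothesis. Throughout I would maintain the invariant $\Gamma\mid\Sigma'\vdash\mu'$ alongside the typing of $e'$. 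For \inflabel{Cong}, where $e=\context{E}{e_1}$ and $e_1\mid\mu\longrightarrow e_1'\mid\mu'$, I would use the Replacement lemma to obtain a subderivation typing $e_1$ at some $U'$ under some context $\Gamma'$, apply the induction hypothesis to the smaller reduction to get $\Tjudge{\Gamma'\mid\Sigma'}{e_1'}{U'}$ with $\Sigma\subseteq\Sigma'$ and $\Gamma'\mid\Sigma'\vdash\mu'$, invoke ``Location Typing Extension Preserves Types'' to lift the surrounding context derivation from $\Sigma$ to $\Sigma'$, and then re-apply the Replacement lemma to reassemble $\Tjudge{\Gamma\mid\Sigma'}{\context{E}{e_1'}}{T}$.

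The remaining cases are base reductions. \inflabel{Par} only re-associates a parallel expression, so the reduct is retyped by recombining the premises of \inflabel{T-Par}. For \inflabel{Rcv} and \inflabel{Repl} the reduct is $\prlll{\varepsilon}$ of type $\Unit$, so the only work is checking that the updated routing table still type-checks: the affected buffer is extended by (resp.\ replaced with) an image whose messages carry exactly the argument types demanded by \inflabel{T-Img}, which follows by inverting \inflabel{T-Req} and \inflabel{T-Svc} (resp.\ \inflabel{T-Repl}) together with $\Gamma\mid\Sigma\vdash\mu$. For \inflabel{Snap}, \inflabel{T-Snap} forces the scrutinee $i$ to have type $\Tsrvinst{T}$, hence $\Sigma(i)=\Timg{T}$, and $\Gamma\mid\Sigma\vdash\mu$ then yields $\Tjudge{\Gamma\mid\Sigma}{s}{\Timg{T}}$ for the returned image. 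For \inflabel{Spwn}, \inflabel{T-Spwn} forces its argument $s$ to have type $\Timg{T}$; I would set $\Sigma'=\Sigma,\,i\tpe\Timg{T}$, type the returned address $i$ by \inflabel{T-Inst}, retype the unchanged part of $\mu$ using the location-typing extension lemma, and type the new entry $i\mapsto s$ from $\Tjudge{\Gamma\mid\Sigma'}{s}{\Timg{T}}$.

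The substantive case is \inflabel{React}, which reduces $\prlll{e}\mid\mu$ to $\prlll{e\ \sigma_b(e_b)}\mid\mu;i\mapsto(s,\tuple{m}')$. Inverting \inflabel{T-Par} reduces the goal to proving $\Tjudge{\Gamma\mid\Sigma}{\sigma_b(e_b)}{\Unit}$ together with well-formedness of the updated table. Well-formedness of $\mu$ at $i$ provides $\Tjudge{\Gamma\mid\Sigma}{(\srvt{\tuple{r}},\tuple{m})}{\Timg{T}}$; inverting \inflabel{T-Img} and then \inflabel{T-Srv} exposes a derivation of $e_b$ in the context extended with the fired rule's pattern parameters at their annotated types and with the self-reference, while the same \inflabel{T-Img} derivation records that every message in $\tuple{m}$ carries the type annotated in its matching pattern. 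Combining this with the match-soundness proposition --- which identifies the consumed messages and guarantees they line up one-for-one with the pattern parameters --- I would discharge the pattern parameters by repeated use of the Substitution Lemma (legitimate because the match rules force the substituted variables to be pairwise distinct) and discharge $\<this>$ by substituting the instance address $i$, whose type is the one \inflabel{T-Inst} reads off $\Sigma$. Finally, match soundness also tells me that $\tuple{m}'$ is a sub-multiset of $\tuple{m}$, so the rewritten image is still typed by \inflabel{T-Img} and $\Gamma\mid\Sigma\vdash\mu'$ holds. I expect \inflabel{React} to be the main obstacle: it is where the pattern annotation types must be threaded through two nested inversions (\inflabel{T-Img} over \inflabel{T-Srv}), where the match-soundness proposition must be used to align consumed messages with parameters, and where the single-variable Substitution Lemma must be promoted to the simultaneous substitution $\sigma_b$ including the self-reference. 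The \inflabel{Cong} case, by contrast, is routine once the Replacement lemma is combined with location-typing weakening.
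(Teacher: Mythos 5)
Your proposal is correct, but it is organized around a different induction than the paper's: you induct on the derivation of the reduction step $e\mid\mu \longrightarrow e'\mid\mu'$ and recover the typing information by generation/inversion, whereas the paper inducts on the typing derivation $\Tjudge{\Gamma\mid\Sigma}{e}{T}$ and, for each typing rule, enumerates which reduction rules can apply at the root. The practical consequences are these. First, the paper never needs inversion lemmas: \inflabel{T-Sub} is simply one more case of its induction, dispatched by the IH followed by re-application of subsumption. Your route instead requires you to peel subsumption off the typing of the redex, which in the presence of subtyping means the exposed syntax-directed type is only subtype-related to $T$ (e.g.\ inverting $\Tjudge{\Gamma\mid\Sigma}{\<spwn> s}{T}$ yields $s:\Timg{T''}$ with $\Subjudge{\Gamma}{\Tsrvinst{T''}}{T}$); you acknowledge this by ``reattaching the subsumption steps,'' but be aware these generation lemmas must be stated and proved separately, and they are the one piece of infrastructure the paper's proof does not need. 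Second, you factor the congruence case through the Replacement lemma once and for all, correctly supplementing it with the location-typing extension lemma so the surrounding derivation can be lifted from $\Sigma$ to the $\Sigma'$ returned by the IH; the paper states the Replacement lemma but in its appendix proof actually handles \inflabel{Cong} inline inside each typing case, re-deriving the context typing by hand each time. Your factoring is cleaner and is what the Replacement lemma is for. Third, both proofs coincide on the substantive \inflabel{React} case: invert \inflabel{T-Img} over \inflabel{T-Srv}, use the match soundness/completeness proposition to align consumed messages with pattern parameters and to show $\tuple{m}'$ is a sub-multiset of $\tuple{m}$, then iterate the Substitution Lemma over the parameters and $\<this>$ --- your account of this matches the paper's. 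Finally, a small point in your favor: organizing the case analysis by reduction rule forces you to treat \inflabel{Rcv} explicitly (updating the buffer and re-establishing \inflabel{T-Img} via inversion of \inflabel{T-Req} and \inflabel{T-Svc}), whereas the paper's \inflabel{T-Req} case as written only considers \inflabel{Cong} and silently elides the \inflabel{Rcv} redex; your decomposition makes that omission harder to commit.
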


\noindent Note that the proof of the preservation theorem requires the match soundness property
from proposition~\ref{thm:mtchsndnss}, in order to verify that after the reduction step
of rule $\inflabel{React}$ (fig.~\ref{fig:djcsyntax}), consumption of service requests
and instantiation of rule bodies preserves the type of the enclosing parallel composition.

\paragraph{Progress.}
\label{sec:progress}

Our type system does not satisfy progress. For example, the following program is
well-typed and not a value but cannot reduce:
\\[1ex]$
\begin{array}{ll}
\svc{(\<spwn> (\srvt{\mathtt{foo}<> \joined \mathtt{bar}<> \triangleright  \prlll{\varepsilon}}))}{\mathtt{foo}}<>.
 \end{array}
$\\[1ex]
The service request \cplinline{foo} resolves fine, but the server's rule cannot fire
because it is lacking a request \cplinline{bar} joined with \cplinline{foo}. Since our type
system does not correlate service requests, it cannot guarantee that join
patterns must succeed eventually.
The integration of such a property 
is an interesting
direction of future work, but orthogonal to the main contributions of this work.

\paragraph{Auxiliary Notation}
\label{sec:notation-types}

We adopt the following conventions.
We omit the type bound if it is $\Top$, e.g.,
 $\Tabs{\alpha\subtpe\Top}{}$ becomes  $\Tabs{\alpha}{e}$.
It is sometimes convenient to abbreviate longer type expressions.
We introduce an abbreviation syntax for faux type constructors, e.g,
$\Type{SVC}[\alpha_{1}, \ldots \alpha_{n}] := <\alpha_{1}, \Tsvc{\alpha_{2}, \ldots, \alpha_{n}}>$
defines a type $\Type{SVC}$  with $n$
free type variables. Writing $\Type{SVC}[T_{1}, \ldots T_{n}]$
denotes the type obtained by substituting
the free occurrences of $\alpha_{1}, \ldots, \alpha_{n}$ with the provided types $T_{1}, \ldots T_{n}$.
Instead of repeating type annotations of service parameters in every
join pattern, we declare the service types once at the beginning of server templates.
For example, $\srvt{( \mathtt{a}<x\tpe\Type{Int}> \trr \mathtt{foo}<>)\ \ (\mathtt{a}<x\tpe\Type{Int}> \joined \mathtt{b}<y\tpe\Tsvc{\Type{Bool}}> \trr \mathtt{bar}<>     )}$
becomes 
\begin{align*}
        \srvt{&\mathtt{a}\tpe <\Type{Int}>, \mathtt{b}\tpe < <\Type{Bool} > >\\
        & (\mathtt{a}<x>\trr \mathtt{foo}<>)\ \ (\mathtt{a}<x>\joined \mathtt{b}<x> \trr \mathtt{bar}<> )}.
\end{align*}
We define function types as $$\Tfun{T_{1},\ldots T_{n}}{T} := <T_{1},\ldots, T_{n}, <T> >$$
following our function encoding in Section \ref{sec:derived-syntax}.
We define the union $(\Tsrv{\tuple{x\tpe T}}) \cup
(\Tsrv{\tuple{y\tpe U}})$ of two server-template types as the server template that
contains all services of both types. The union is only defined if service names
that occur in both types have identical type annotations -- they are merely for syntactic convenience and do not represent real union
types, which we leave for future work.

}{\input{typesystem_short}}

\section{CPL at Work}
\label{sec:case-studies}

We present two case studies to demonstrate the adequacy of \lang{} for solving the deployment  
issues identified in Section~\ref{sec:motivation}. 
The case studies will also be subsequently used to answer research questions about \lang{}'s features.

Firstly, we developed a number of reusable {\it server combinators}, expressing deployment  patterns found in cloud computing. Our examples focus on load balancing and fault tolerance, demonstrating that programmers
can define their own cloud services as strongly-typed, composable 
modules and address nonfunctional requirements with \lang{}.
Secondly, we use our language to model MapReduce~\cite{Lammel20081} deployments for distributed batch computations.
Finally, we apply our server combinators to MapReduce, effortlessly obtaining a type-safe composition of services.

\subsection{Server Combinators}
\label{sec:server-combinators}

In Section~\ref{sec:motivation}, we identified extensibility issues
with deployment languages, which prevents programmers from integrating
their own service implementations. We show how to implement
custom service functionality with \emph{server combinators} in a type-safe and
composable way. Our combinators are similar in spirit to higher-order functions
in functional programming.

As the basis for our combinators, we introduce {\bf workers}, i.e.,
servers providing computational resources. A worker accepts work packages as thunks. 
Concretely, a worker models a managed virtual machine in a cloud and thunks
model application services.

Following our derived syntax for thunks (Section~\ref{sec:derived-syntax}),
given an expression $e$ of type $\alpha$, the type of $\<thunk> e$ is:
\begin{lstlisting}[language=cpl,numbers=none,mathescape=true,aboveskip=1ex,belowskip=1ex]
             !TThunk![%a] $:=$ srv force:<%a>$.$
\end{lstlisting}
Service \cplinline{force} accepts a continuation and calls it with the result of
evaluating $e$. A worker accepts a thunk and executes it. At the type level,
workers are values of a polymorphic type
\begin{lstlisting}[language=cpl,numbers=none,mathescape=true,aboveskip=1ex,belowskip=1ex]
 !TWorker![$\alpha$] $:=$ srv init:<>, work:!TThunk![$\alpha$]$\;$->$\;\alpha$$.$
\end{lstlisting}
That is, to execute a thunk on a worker, clients request the \cplinline{work} service
which maps the thunk to a result value. In
addition, we allow workers to provide initialization logic via a service
\cplinline{init}. Clients of a worker should request \cplinline{init} before they issue
\cplinline{work} requests. Figure~\ref{fig:worker_factory}  defines a factory for creating 
basic workers, which have no initialization logic and execute thunks in their own instance scope. 
In the following, we define server combinators that enrich
workers with more advanced features.    
\begin{figure}[t]
  \centering
\begin{cplcode}
MkWorker[
 make: () -> !TWorker![
 make<'k'> :>
  let 'worker' = spwn srv {
   init: <>, work: !TThunk![
   init<> :> par 
   work<'thnk', 'k'> :> (spwn 'thnk')#force<'k'>
  } in 'k'<'worker'>
}
\end{cplcode}
  \caption{Basic worker factory.}
  \label{fig:worker_factory}
\end{figure}


To model {\bf locality} -- a worker uses its own computational
resources to execute thunks -- the spawn of a thunk should in fact \emph{not}
yield a new remote server instance. As discussed in Section~\ref{sec:locality}, 
to keep the core language minimal the operational semantics does not 
distinguish whether a server is local or remote to another server. 
 However, in our concurrent
implementation of \lang, we allow users to annotate spawns as being
remote or local, which enables us to model worker-local execution of thunks. 

The combinators follow a common design principle. 
{\it (i)} The combinator is a factory for
server templates, which is a server instance with a single 
\cplinline{make} service. The service accepts one or more server
templates which implement the \Type{TWorker} interface, among possibly other arguments.
{\it (ii)} Our combinators produce
\emph{proxy} workers. That is, the resulting workers
implement the worker interface but forward requests of the
$\id{work}$ service to an internal instance of the argument worker.

\subsubsection{Load Balancing}
\label{sec:load-balancing}

A common feature of cloud computing is on-demand scalability of services by
dynamically acquiring server instances and distributing load among them.
\lang{} supports the encoding of on-demand scalability in form of a server combinator,
that distributes load over multiple workers dynamically, given a user-defined decision algorithm.

Dynamically distributing load requires a means to approximate worker utilization. 
Our first combinator \linebreak\cplinline{MkLoadAware} enriches workers with the ability to
answer \cplinline{getLoad} requests, which sends the current number of pending requests
of the \cplinline{work} service, our measure for utilization. Therefore, the corresponding type\footnote{The union $\cup$ on server types is for notational convenience at the meta level and \emph{not} part of the type language.}
for load aware workers is
\begin{lstlisting}[language=cpl,numbers=none,mathescape=true,aboveskip=1ex,belowskip=1ex]
 !TLAWorker![$\alpha$] $:=$ !TWorker![%a] $\cup$ srv getLoad:<<!Int!>>$.$
\end{lstlisting}
\begin{figure}[t]
  \centering
\begin{lstlisting}[language=cpl,mathescape=true]
MkLoadAware[%a, %w <: !TWorker![%a]] = srv {
 make: %w -> !TLAWorker![%a]
 make<'worker', 'k'> :>
  let 'lWorker' = srv {
   instnc: <inst %w>, getLoad: () -> !Int!, load: <!Int!> 
   work: !TThunk![%a] -> %a, init: <>   
   //... initialization logic omitted 
   
   //forwarding logic for work
   work<'thnk', 'k'> & instnc<'w'> & load<n> :> (*@\label{lst:mkloadaware:fwdstart}@*)
    this#load<'n'+1> || this#instnc<'w'> 
    || letk 'res' = 'w'#work<'thnk'> 
       in ('k'<'res'>$\;$||$\;$this#done<>) (*@\label{lst:mkloadaware:fwdend}@*)

   //callback logic for fullfilled requests
   done<> & load<'n'> :> this#load<'n'-1> (*@\label{lst:mkloadaware:callback}@*)
   
   getLoad<'k'> & load<'n'> :> 'k'<'n'> || this#load<'n'>
  } in 'k'<'lWorker'>
}
\end{lstlisting}
  \caption{Combinator for producing load-aware workers.}
  \label{fig:mkloadaware}
\vspace{-0.8em}
\end{figure}


The \cplinline{make} service of the combinator accepts a server template \cplinline{'worker'} implementing the \Type{TWorker} interface and
returns its enhanced version (bound to \cplinline{'lWorker'}) back on the given continuation \cplinline{'k'}.
Lines \ref{lst:mkloadaware:fwdstart}-\ref{lst:mkloadaware:fwdend} implement the core idea of forwarding and counting the pending requests.
Continuation passing style enables us to intercept and hook on to the responses of \cplinline{'worker'} after finishing work requests, 
which we express in Line~\ref{lst:mkloadaware:fwdend} by the \cplinline{letk} construct. 

By building upon load-aware workers, we can define a polymorphic combinator \cplinline{MkBalanced} that
transparently introduces load balancing over a list of load-aware workers. The combinator is flexible in that
it abstracts over the scheduling algorithm, which is an impure polymorphic function of type
\begin{lstlisting}[language=cpl,numbers=none,mathescape=true,belowskip=1ex,aboveskip=1ex]
!Choose![%w] $:=$ !List![inst %w] -> !Pair![inst %w, !List![inst %w]]$.$
\end{lstlisting}
Given a (church-encoded) list of possible worker instances, such a function returns a (church-encoded) pair consisting
of the chosen worker and an updated list of workers, allowing for dynamic adjustment of the available worker pool (\emph{elastic load balancing}).

Figure~\ref{fig:mkbalanced} shows the full definition of the \cplinline{MkBalanced} combinator.
\begin{figure}[t]
  \centering
\begin{lstlisting}[language=cpl,mathescape=true]
MkBalanced[%a, %w <: !TLAWorker![%a]] = srv {
 make: (!List![%w], !Choose![%w]) -> !TWorker![%a]
 make<'workers', 'choose', 'k'> :>
  let 'lbWorker' = srv {
   insts: <!List![inst %w]>, 
   work: !TThunk![%a] -> %a, init: <>
   
   init<> :> //spawn and init all child workers
    letk 'spawned' = mapk<'workers', %l'w':%w.$\;$spwn$\;$'w'>  
     in (this#insts<'spawned'> 
         ||foreach<'spawned', %l'inst':inst$\;$%w.$\;$'inst'#init<>>)

   //forward to the next child worker
   work<'thnk', 'k'> & insts<'l'> :> 
    letk ('w', $l'$) = 'choose'<'l'> (*@\label{lst:mkbalanced:fwdstart}@*)
     in ('w'#work<'thnk', 'k'> || this#insts<$l'$>) (*@\label{lst:mkbalanced:fwdend}@*)
  } in 'k'<'lbWorker'>
}
\end{lstlisting}
  \caption{Combinator for producing load-balanced workers.}
  \label{fig:mkbalanced}
\end{figure}
%

%
Similarly to Figure~\ref{fig:mkloadaware}, the combinator is a factory which produces a decorated worker. The only
difference being that now there is a list of possible workers to forward requests to.
Choosing a worker is just a matter of querying the scheduling algorithm \cplinline{'choose'} (Lines \ref{lst:mkbalanced:fwdstart}-\ref{lst:mkbalanced:fwdend}).
Note that this combinator is only applicable to server templates implementing the $\Type{TLAWorker}[\alpha]$ interface (Line 1), 
since \cplinline{'choose'} should be able to base its decision on the current load of the workers. 

In summary, mapping a list of workers with \cplinline{MkLoadAware} and passing the
result to \cplinline{MkBalanced} yields a composite, load-balancing worker.
It is thus easy to define hierarchies of load balancers programmatically by 
repeated use of the two combinators. Continuation passing style and the type system 
enable flexible, type-safe compositions of workers.

\subsubsection{Failure Recovery}
\label{sec:failure-recovery}

Cloud platforms monitor virtual machine instances to ensure their continual
availability. We model failure recovery for crash/omission, permanent, fail-silent
failures~\cite{Tanenbaum:2006:DSP:1202502}, where a failure makes a
virtual machine unresponsive and is recovered by a restart.

Following the same design principles of the previous section, we
can define a failure recovery combinator \linebreak$\cplinline{MkRecover}$, that produces fault-tolerant workers.
\ifdef{\tecreport}{Its definition is in the appendix of this report.}{We omit its definition and refer to our technical report~\cite{techreport}.}

Self-recovering workers follow a basic protocol. Each time a $\cplinline{work}$ request
is processed, we store the given thunk and continuation in a list
until the underlying worker confirms the request's completion. If the wait time exceeds a
timeout, we replace the worker with a fresh new instance
and replay all pending requests.
Crucial to this combinator is the \cplinline{repl} syntactic form, which swaps the
running server instance at the worker's address: \cplinline{repl$\;$'w'$\;$('worker',$\;\varepsilon$)}.
Resetting the worker's state amounts to setting the empty buffer $\varepsilon$ in the
server image value we pass to \cplinline{repl}.


\ifdef{\tecreport}
{
\subsection{MapReduce}
\label{sec:cloud-calculus-at}

In this section, we illustrate how to implement the MapReduce~\cite{Dean:2008:MSD:1327452.1327492} programming model with typed combinators in \lang,
taking fault tolerance and load balancing into account. 
MapReduce facilitates parallel data processing -- cloud platforms are a desirable deployment target.
The main point we want to make with this
example is that \lang{} programs do not exhibit the unsafe composition, non-extensibility and staging problems
we found in Section~\ref{sec:motivation}.
Our design is inspired by Lämmel's formal presentation in Haskell~\cite{Lammel20081}.

Figure~\ref{fig:mapreduce_factory} shows the main combinator for creating a MapReduce deployment,
which is a first-class server.
\begin{figure}[t]
  \centering
\begin{lstlisting}[language=cpl,mathescape=true]
MapReduce[%k$_{1}$,%v$_{1}$,%k$_{2}$,%v$_{2}$,%v$_{3}$] = spwn srv {
 make: (!TMap![%k$_{1}$,%v$_{1}$,%k$_{2}$,%v$_{2}$], 
        !TReduce![%k$_{2}$,%v$_{2}$,%v$_{3}$],
        !TPartition![%k$_{2}$], 
        $\forall\alpha.$() -> !TWorker![%a],
        Int) -> !TMR![%k$_{1}$,%v$_{1}$,%k$_{2}$,%v$_{3}$]

 make<'Map', 'Reduce', 'Partition', 'R', 'mkWorker', 'k'> :> (*@\label{lst:mapreduce:make}@*)
  let 'sv' = srv { (*@\label{lst:mapreduce:appstart}@*)
   app<'data', 'k'$_{0}$> :> let  (*@\label{lst:mapreduce:append}@*)
    'mworker' = (*@\label{lst:mapreduce:mapallocstart}@*)
      mapValues('data', %l'v'.$\;$'mkWorker'[List[Pair[%k$_{1}$,$\;$%v$_{2}$]]])
    'rworker' =
      mkMap(map(range(1, 'R'), %l'i'. ('i', 'mkWorker'[%v$_{3}$]))) (*@\label{lst:mapreduce:mapallocend}@*)
    'grouper' =
      MkGrouper<'Partition', 'R', 'Reduce'
                size('mworker'), 'rworker', 'k'$_{0}$>
   in foreach<'data', %l'key', 'val'. { (*@\label{lst:mapreduce:foreachstart}@*)
    let 'thnk' = thunk 'Map'<'key', 'val'> (*@\label{lst:mapreduce:thunk}@*)
     in$\;$get('mworker', 'key')#work<'thnk', 'grouper'#group>}>   (*@\label{lst:mapreduce:foreachend}@*)
  } in 'k'<'sv'>
}     
\end{lstlisting}
  \caption{MapReduce factory.}
  \label{fig:mapreduce_factory}
\end{figure}

Following Lämmel's presentation, the combinator is generic in the key and value types.
$\kappa_{i}$  denotes type parameters for keys and $\nu_{i}$ denotes type parameters for values.

The combinator takes as parameters the \id{Map} function for decomposing input key-value pairs into
an intermediate list of intermediate pairs, the \id{Reduce} function for transforming grouped
intermediate values into a final result value, the \id{Partition} function which controls grouping
and distribution among reducers and the number \id{R} of reducers to allocate (Line~\ref{lst:mapreduce:make}).
Parameter \id{mkWorker} is a polymorphic factory of type \cplinline{$\forall\alpha.$() -> !TWorker![$\alpha$]}.
It produces worker instances for both the map and reduce stage.

Invoking \cplinline{make} creates a new server template that on invocation of its \cplinline{app} service
deploys and executes a distributed MapReduce computation for a given set of (church-encoded) key-value pairs $\id{data}$
and returns the result on continuation $k_{0}$ (Lines \ref{lst:mapreduce:appstart}-\ref{lst:mapreduce:append}).

Firstly, workers for mapping and reducing are allocated and stored in the local map data structures $\id{mworker}$ and $\id{rworker}$,
where we assume appropriate cps-encoded functions that create and transform maps and sequences (Lines \ref{lst:mapreduce:mapallocstart}-\ref{lst:mapreduce:mapallocend}).
Each key in the input $\id{data}$ is assigned a new mapping worker 
and each partition from $1$ to $\id{R}$ is assigned a reducing worker. 
Additionally, a component for grouping and distribution among reducers (\id{grouper}) is allocated. 

Secondly, the \cplinline{foreach} invocation (Lines \ref{lst:mapreduce:foreachstart}-\ref{lst:mapreduce:foreachend}) distributes key-value pairs in parallel among mapping workers.
For each pair, the corresponding worker should invoke the \id{Map} function, which we express as a thunk (Line~\ref{lst:mapreduce:thunk}, cf.\ section~\ref{sec:derived-syntax}).
All resulting intermediate values are forwarded to the grouper's \cplinline{group} service. 

\begin{figure}[t]
  \centering
\begin{lstlisting}[language=cpl,mathescape=true]
MkGrouper[%k$_{2}$,%v$_{2}$,%v$_{3}$] = spwn srv {
 make<'Partition': (%k$_{2}$, Int) -> Int, 'R': Int,
      'Reduce': !TReduce![%k$_{2}$,%v$_{2}$,%v$_{3}$],
      'R': Int,
      'rworker': Map[Int, !TWorker![%v$_{3}$]],
      'k'$_{r}$: <Pair[%k$_{2}$,%v$_{3}$]>,
      'k': <inst srv (group: <List[Pair[%k$_{2}$,%v$_{2}$]]>)> :>
 let 'grpr' = spwn srv { 
  //accumulated per-partition values for reduce
  state: <Map[Int, Map[%k$_{2}$,%v$_{2}$]]>,
  
  //result callback invoked by mappers
  group: <List[Pair[%k$_{2}$,%v$_{2}$]]>,
  
  //waiting state for phase one
  await: <Int>, 

  //trigger for phase two 
  done: <Map[Int, Map[%k$_{2}$,%v$_{2}$]]>

  //phase one: wait for mapper results
  state<'m'> & group<'kvs'> & await<'n'> :> (*@\label{lst:grouper:awaitstart}@*)
   letk 'm'$'$ = foldk<'kvs', 'm', 
         %l'm'$''$, 'kv'. letk 'i' = 'Partition'<fst('kv'), 'R'>
                  in updateGroup('m'$''$, 'i', 'kv')>
   in if ('n' $\gt$ 0)
      then (this#await<'n' - 1> || this#state<'m'$'$>)
      else this#done<'m'$'$> (*@\label{lst:grouper:awaitend}@*)

  //phase two: distribute to reducers
  done<'m'> :> (*@\label{lst:grouper:reducestart}@*)
   foreach<'m', %l'i'. 'data2'.
    foreach<'data2', %l'key','vals'. 
     let 'thnk' = thunk 'Reduce'<'key', 'vals'>
     in get('rworker', 'i')#work<'thnk', 'k'$_{r}$>>> (*@\label{lst:grouper:reduceend}@*)
  } in 'grpr'#state<emptyMap> || 'grpr'#await<'R'> || 'k'<'grpr'>
}
\end{lstlisting}
  \caption{Grouper factory.}
  \label{fig:mapreduce_grouper_factory}
\end{figure}

The grouper (Figure~\ref{fig:mapreduce_grouper_factory}) consolidates 
multiple intermediate values with the same key and forwards them to the reducer workers.
It operates in phases: (1) wait for all mapper workers to finish, meanwhile grouping incoming results (Lines~\ref{lst:grouper:awaitstart}-\ref{lst:grouper:awaitend}) and (2),
assign grouped results to reducer workers with the \id{Partition} function and distribute as thunks, 
which invoke the \id{Reduce} function (Lines~\ref{lst:grouper:reducestart}-\ref{lst:grouper:reduceend}).
All reduction results are forwarded to the continuation $k_{r}$. For brevity we omit the final merge of the results.

Thanks to our service combinators, we can easily address non-functional requirements 
and non-intrusively add new features. The choice of the \id{mkWorker} parameter 
determines which variants of MapReduce deployments we obtain: The default variant just
employs workers without advanced features, i.e.,
\begin{lstlisting}[numbers=none,language=cpl,mathescape=true]
let 'make' = %L%a.(spwn MkWorker[%a])#make
in MapReduce[%k$_{1}$,%v$_{1}$,%k$_{2}$,%v$_{2}$,%v$_{3}$]#make<'f',$\;$'r',$\;$'p',$\;$'R',$\;$'make',$\;$'k'>
\end{lstlisting}

for appropriate choices of the other MapReduce parameters.

In order to obtain a variant, where worker nodes are elastically load-balanced, 
one replaces \cplinline{'make'} with \cplinline{'makeLB'} below, which composes the combinators from the previous section:
\begin{lstlisting}[numbers=none,language=cpl,mathescape=true,aboveskip=.5ex,belowskip=.5ex]
let$\;$'choose' = ...//load balancing algorithm
   $\;$'makeLB' = %L%a.%l'k'. {
    letk$\;$'w' = (spwn MkWorker[%a])#make<>
     $\;$'lw' = (spwn$\;$MkLoadAware[%a,$\;$!TWorker![%a]])#make<'w'>     
     in$\;$(spwn MkBalanced[%a, !TLAWorker![%a]])
            #make<mkList('lw'),$\;$'choose',$\;$'k'> 
    }
in 'makeLB'
\end{lstlisting}

A similar composition with the fault tolerance combinator yields fault tolerant
MapReduce, where crashed mapper and reducer workers are automatically recovered.

}
{\input{mapreduce_short}}

\subsection{Discussion}
\label{sec:case-study-discussion}

We discuss how \lang performed in the case studies answering the following research questions:
\begin{itemize}
\item[Q1]{\sf (Safety)}: {\it Does \lang{} improve safety of cloud deployments?}

\item[Q2] {\sf (Extensibility)}: {\it Does \lang{} enable custom and extensible service implementations?}

\item[Q3] {\sf (Dynamic self-adjustment)}: {\it Does \lang{} improve
    flexibility in dynamic reconfiguration of deployments ?}

\end{itemize}

\paragraph{Safety}
\lang{} is a strongly-typed language. 
As such, it provides internal safety (Section~\ref{sec:motivation}). 
The issue of cross-language safety (Section~\ref{sec:motivation}) does not occur in \lang{} programs, because 
configuration and deployment code are part of the same application.
In addition, the interconnection of components is well-typed. 
For example, in the MapReduce case study, it is guaranteed that worker invocations 
cannot go wrong due to wrongly typed arguments. 
It is also guaranteed that workers yield values of the required types.
 As a result, all mapper and reducer workers are guaranteed to be compatible with the grouper component. 
In a traditional deployment program, interconnecting components amounts to referring to each others attributes, 
but due to the plain syntactic expansion, there is no guarantee of compatibility.

\paragraph{Extensibility} The possibility to define combinators in \lang{}
supports extensible, custom service implementations. At the type system level, 
bounded polymorphism and subtyping ensure that service implementations 
implement the required interfaces. The load balancing example enables 
nested load balancing trees, since the combinator implements the 
well-known Composite design pattern  from object-oriented programming.
At the operational level, continuation passing style enables flexible
composition of components, e.g., for stacking multiple features.


\paragraph{Dynamic Self-Adjustment}

In the case studies, we encountered the need of dynamically adapting the deployment configuration of an application,
which is also known as ``elasticity''.
For example, the load balancer combinator can easily support dynamic growth or shrinkage of the list of available workers: 
New workers need to be dynamically deployed in new VMs (growth) and certain VMs must be halted and removed from the 
cloud configuration when the respective workers are not needed (shrinkage).
Dynamic reconfiguration is not directly expressible in configuration languages, due to the two-phase staging.
For example, configurations can refer to external elastic load balancer services provided
by the cloud platform, but such services only provide a fixed set of balancing strategies,
which may not suit the application. The load balancer service can be regarded as a black box, which
happens to implement  elasticity features.
Also, a configuration language can request load balancing services 
only to the fixed set of machines which is specified in a configuration, but it is not possible
if the number of machines is unknown before execution, as in the MapReduce case study.
In contrast, \lang users can specify their own load balancing strategies and apply them programmatically.

\subsection{Interfacing with Cloud Platforms}
\label{sec:interf-with-cloud}

A practical implementation of \lang requires (1) a mapping of its
concepts to real-world cloud platforms and (2) integrate existing
cloud APIs and middleware services written in other languages.  In the
following, we sketch a viable solution; we leave a detailed implementation
for future work.

For (1), \lang programs can be compiled to bytecode and be interpreted by a distributed 
run time hosted on multiple virtual machines. 

Concerning (2), we envision our structural server types as the interface of \lang{}'s run time with the external world, i.e.,
pre-existing cloud services and artifacts written in other languages. 
\lang developers must write wrapper libraries to implement typed language bindings.
Indeed, \lang{}'s first-class servers resemble (remote) objects, where services are their methods 
and requests are asynchronous method invocations (returning results on continuations). 
\lang{} implementations hence can learn from work
on language bindings in existing object-oriented language run times, e.g., the Java ecosystem.
To ensure type safety, dynamic type checking is necessary at the boundary between
our run time and components written in dynamically or weakly typed languages.

Note that the representation of external services and artifacts as servers requires \emph{immutable}
addresses. That is, the run time should forbid \cplinline{snap} and \cplinline{repl} on such objects, because it is in general impossible to reify a state snapshot of the external world.

For the primitives \cplinline{spwn}, \cplinline{snap},
and \cplinline{repl}, the run time must be able to orchestrate the virtualization facilities of the cloud provider
via APIs.
Following our annotation-based
approach to placement (Section \ref{sec:locality}), these primitives
either map to local objects or to fresh virtual machines.
Thus, invoking \cplinline{spwn$\;$'v'} may create a new virtual machine hosting the 
\lang{} run time, which allocates and runs $v$. For
local servers, $v$ is executed by the run time that invoked \cplinline{spwn}.
One could extend the primitives to allow greater control of infrastructure-level concerns, such as
machine configuration and geographic distribution.
From these requirements and \lang{}'s design targeting extensible services and distributed applications, 
it follows that \lang{} is cross-cutting the three abstraction layers in contemporary cloud 
platforms: Infrastructure as a Service (IaaS), Platform as a Service (PaaS) and Software as a Service (SaaS)~\cite{Vaquero:2008:BCT:1496091.1496100}.





\section{Related Work}\label{sec:related}

\paragraph{Programming Models for Cloud Computing.}

The popularity of cloud computing
infrastructures~\cite{Vaquero:2008:BCT:1496091.1496100} has encouraged
the investigation of programming models that can benefit from
on-demand, scalable computational power and feature location
transparency. Examples of these languages, often employed in the
context of big data analysis, are Dryad~\cite{Isard:2009:DDC:1559845.1559962},
PigLatin~\cite{Olston:2008:PLN:1376616.1376726} and
FlumeJava~\cite{Chambers:2010:FEE:1806596.1806638}. These languages
are motivated by refinements and generalizations of the original
MapReduce~\cite{Dean:2008:MSD:1327452.1327492} model. 

Unlike \lang{}, these models specifically target only certain kinds of cloud 
computations, i.e., massive parallel computations and derivations thereof. 
They deliberately restrict the programming model to enable 
automated deployment, and do not address deployment programmability in 
the same language setting as \lang{} does. In this paper, we showed that 
the server abstraction of \lang{} can perfectly well model MapReduce 
computations in a highly parametric way, but it covers at the same time a 
more generic application programming model as well as deployment 
programmability. Especially, due to its join-based synchronization, \lang{}
is well suited to serve as a core language for modeling cloud-managed 
stream processing. 

Some researchers have investigated by means of formal methods specific
computational models or specific aspects of cloud computing. The
foundations in functional programming of MapReduce have been studied
by Lämmel~\cite{Lammel20081}. In \lang it is possible to
encode higher-order functions and hence we can model MapReduce's
functionality running on a cloud computing platform. Jarraya et
al.~\cite{6261089} extend the Ambient calculus to account for firewall
rules and permissions to verify security properties of cloud
platforms. To the best of our knowledge, no attempts have been done in
formalizing cloud infrastructures in their generality.

\paragraph{Formal Calculi for Concurrent and Distributed Services.} 

Milner's CCS~\cite{Milner:1982:CCS:539036}, the $\pi$
calculus~\cite{Milner19921} and Hoare's
CSP~\cite{Hoare:1978:CSP:359576.359585} have been studied as the
foundation of parallel execution and process synchronization.

Fournet's and Gonthier's Join Calculus~\cite{Fournet-Gonthier:popl96}
introduced join patterns for expressing the interaction 
among a set of processes that communicate by asynchronous message passing
over communication channels. 
The model of communication channels in this calculus 
more adequately reflects communication primitives in real world
computing systems which allows for a simpler implementation.
In contrast, the notion of channel in the previously mentioned
process calculi would require expensive global consensus protocols
in implementations. 

The design of \lang{} borrows join patterns from the Join Calculus. 
Channels in the Join Calculus are similar 
to services in \lang, but the Join Calculus does not have first-class and 
higher-order servers with qualified names. Also, there is no support for 
deployment abstractions.

The Ambient calculus~\cite{Cardelli2000177} has been developed by
Cardelli and Gordon to model concurrent systems that include both
mobile devices and mobile computation. Ambients are a notion of named, bounded places
where computations occur and can be moved as a whole to other places. Nested
ambients model administrative domains and capabilities control access to ambients. 
\lang, in contrast, is location-transparent, which is
faithful to the abstraction of a singular entity offered by cloud applications.

\paragraph{Languages for Parallel Execution and Process Synchronization.} Several languages have been successfully
developed/extended to support features studied in formal calculi. 

JoCaml is an ML-like implementation of Join Calculus which adopts
state machines to efficiently support join
patterns~\cite{LeFessant1998205}. Polyphonic
C\#~\cite{Benton-Cardelli-Fournet:2004} extends C\# with join-like
concurrency abstractions for asynchronous programming that are
compiler-checked and optimized. Scala
Joins~\cite{Haller-VanCutsem:coordination2008} uses Scala's
extensible pattern matching to express joins. The Join Concurrency
Library~\cite{Russo:2007ux} is a more portable implementation of
Polyphonic C\# features by using C\# 2.0
generics. JEScala~\cite{VanHam:2014:JMC:2577080.2577082} combines
concurrency abstraction in the style of the Join Calculus with
implicit invocation.

Funnel~\cite{AnOverviewofFunct:2002uz} uses the Join Calculus as its
foundations and supports object-oriented programming with classes and
inheritance. Finally, JErlang~\cite{Plociniczak:2010ed} extends the
Erlang actor-based concurrency model. Channels are messages exchanged
by actors, and received patterns are extended to express matching of
multiple subsequent messages. Turon and Russo~\cite{Turon-Russo:oopsla2011} propose an efficient, lock-free implementation of the join matching algorithm 
demonstrating that  declarative specifications with joins can scale to complex coordination problems with good performance --
even outperforming specialized algorithms. 
Fournet et al.~\cite{Fournet:2000:ADI:647318.723475} provide an
implementation of the Ambient calculus. The implementation is obtained
through a formally-proved translation to JoCaml.

\lang{} shares some features with these languages, basically those 
built on the Join Calculus. In principle, the discussion about the relation 
of \lang{} to Join Calculus applies to these languages as well, since the Join Calculus is 
their shared foundation. Implementations of \lang can benefit from the techniques
developed in this class of works, especially \cite{Russo:2007ux}.


\ifdef{\tecreport}{}{\vspace{-1ex}}
\section{Conclusions and Future Work}
\label{sec:conclusion}
We presented \lang, a statically typed core language for defining asynchronous 
cloud services and their deployment on cloud platforms. \lang{}
improves over the state of the art for cloud deployment DSLs:
It enables (1) statically safe service composition, (2) custom implementations
of cloud services that are composable and extensible and (3) dynamic
changes to a deployed application.
In future work, we will implement and expand core \lang to a practical programming language for cloud applications and deployment.


\ifdef{\tecreport}{
\acks This work has been supported by the European Research Council, grant No. 321217.
}
{\acks We thank the anonymous reviewers for their helpful comments. This work has been supported by the European Research Council, grant No. 321217.}
\bibliographystyle{abbrvnat} 
\bibliography{report} 

\ifdef{\tecreport}{
\newpage

\appendix

\section{Example Reduction}
\label{sec:example-reduction}

To illustrate the small-step operational semantics of the
\lang{}, we investigate the reduction trace of a service request for
computing the factorial of $3$ using the server template \cplinline{Fact} defined
above. Here, we assume $k_0$ is some continuation interested in the result of
the computation. We write $\varnothing$ to denote the empty routing table
and  $(\mathtt{Fact}^n, \tuple{m})$ to denote a routing table entry at address $n$ for a server instance of the template \cplinline{Fact} with buffer \tuple{m}.
Further we write $n\textnormal{/}{r_i}$ to refer
to rule number $i$ of the server at $n$. For brevity, reductions of the rule \inflabel{Par}, as well as 
reductions of if-then-else and
arithmetic expressions are omitted. Multiple subsequent reduction steps of a rule \inflabel{R} are denoted by $\inflabel{R}^{*}$.
\[
\figfontsize
\begin{array}{p{1.8cm}p{2.5cm}l}
     \multicolumn{2}{l}{\svc{(\<spwn> \mathtt{Fact})}{\mathtt{main}}<3, k_0>} & \mid\varnothing \\[1ex]
     $\xrightarrow{\textsc{Spwn}}$ & $\svc{0}{\mathtt{main}} <3, k_0>$ &\mid \Set{(\mathtt{Fact}^{0}, \varepsilon)} \\[1ex]
$\xrightarrow{\textsc{Rcv}}\ $ & $\prlll{\varepsilon}$ &\mid \Set{(\mathtt{Fact}^{0}, \mathtt{main}<3, k_{0}>} \\[1ex]
$\xrightarrow{\textsc{React}\ 0\textnormal{/}{r_1}}\ $  & $\svc{0}{\mathtt{fac}}< 3 >$   &\mid\Set{(\mathtt{Fact}^{0}, \varepsilon)} \\
& $\parallel \svc{0}{\mathtt{acc}}<1>$ & \\
& $\parallel \svc{0}{\mathtt{out}}<k_0>$ & \\[1ex]
$\xrightarrow{\textsc{Rcv}^{*}}\ $  & $\prlll{\varepsilon}$  &\mid\{(\mathtt{Fact}^{0}, \mathtt{fac}<3>\; \mathtt{acc}<1>\\
& & \quad\quad\quad\quad\ \ \;\mathtt{out}<k_{0})\} \\[1ex]
$\xrightarrow{\textsc{Rcv}^{*}, \textsc{React}\ 0\textnormal{/}{r_2}}\ $  & $\svc{0}{\mathtt{fac}}<2> \parallel \svc{0}{\mathtt{acc}}<3>$  &\mid \Set{(\mathtt{Fact}^{0}, \mathtt{out}<k_{0}>)}  \\[1ex]
$\xrightarrow{\textsc{Rcv}^{*},\textsc{React}\ 0\textnormal{/}{r_2}}\ $  & $\svc{0}{\mathtt{fac}}<1> \parallel \svc{0}{\mathtt{acc}}<6>$  &\mid \Set{(\mathtt{Fact}^{0}, \mathtt{out}<k_{0}>)} \\[1ex]
$\xrightarrow{\textsc{Rcv}^{*},\textsc{React}\ 0\textnormal{/}{r_2}}\ $ & $\svc{0}{\mathtt{res}}<6>$  &\mid\Set{(\mathtt{Fact}^{0}, \mathtt{out}<k_{0}>)} \\[1ex]
$\xrightarrow{\textsc{Rcv},\textsc{React}\ 0\textnormal{/}{r_3}}\ $ & $k_0<6>$  &\mid \Set{(\mathtt{Fact}^{0}, \varepsilon)}.
\end{array}
\]
The reduction starts with the instantiation of the server template
\cplinline{Fact}. Next, we serve service \cplinline{main} on the server instance
$\mathtt{Fact}^0$, which yields $3$ service requests to $\mathtt{Fact}^0$. We can fire
the second rule of $\mathtt{Fact}^0$ three times by matching services \cplinline{fac} and
\cplinline{acc}. In fact, the second rule of $\mathtt{Fact}^0$ is the only rule that can
fire. The first two times, the argument of \cplinline{fac} is larger than $1$, so we
execute the else branch, which yields updated values for \cplinline{fac} and \cplinline{acc}. The
third time, the argument of \cplinline{fac} is $1$, so we execute the then branch,
which yields a service request \cplinline{res} with the final result. Finally, we execute
the third rule of $\mathtt{Fact}^0$, which matches services \cplinline{res} and \cplinline{out} 
to forward the final result to the continuation $k_0$.

We can spawn multiple instances of \cplinline{Fact} and compute factorials in parallel:
$$\svc{(\<spwn> \mathtt{Fact})}{\mathtt{main}}<3, k_0> \parallel \svc{(\<spwn> \mathtt{Fact})}{\mathtt{main}}<5, k_1>$$
Due to the nondeterminism of our semantics, some of the possible reduction
traces interleave computations of both factorials. However, 
since requests always contain the target address and rule \textsc{React}
operates solely on a server instance's buffer, there cannot be any interference
between two different instances of \cplinline{Fact}.
This way, e.g., each instance of \cplinline{Fact} has its own accumulator.

\newpage

\section{Case Studies}
\label{sec:appendix:case-studies-1}

In the following, we give the full definition of the server combinators and actor supervision case studies,
which we omitted in section~\ref{sec:case-studies} due to space limitations.

\subsection{Server Combinators for Cloud Computing}
\label{sec:appendix-c}

\subsubsection{Failure Recovery}
\label{sec:appendix:failure-recovery-1}
The combinator for failure recovery:
\begin{lstlisting}[language=cpl,mathescape=true]
MkRecover[%a,%w$\;$<:!TWorker![%a]] = spwn srv {
 make: (%w, !Int!) -> !TWorker![%a]

 make<'worker', 'timeout', 'k'> :>
  let 'self-recovering' = srv {
   init: <>, 
   work: !TThunk![%a] -> %a, 
   instnc: <inst %w>
   pending: <!List![(!Int!, !Int!, !TThunk![%a], <%a>)]>, 
   done: <!Int!> 

   //initialization
   init<> :> let 'w' = (spwn 'worker') in
    'w'#init<> || this#inst<'w'> || this#pending<'Nil'>

   //store and forward work requests to instnc
   work<'thnk', 'k'> & instnc<'w'> & pending<'xs'> :>
    let 'ID' = freshID() in
     let 'now' = localTime() in
      this#pending<('ID', 'now', 'thnk', 'k') :: 'xs'>
      || this#instnc<'w'> 
      || (letk 'r' = 'w'#work<'thnk'> 
         $\;$in ('k'<'r'> || this#done<'ID'>))

   //work completion by instnc
   done<'ID'> & pending<'xs'> :>
    filterk('xs', $\lambda$'p'.$\;$fst('p') $\neq$ 'ID', this#pending>)

   //check for timeouts, restart instnc if needed
   pending<'xs'> & instnc<'w'> :>
    let 'now' = localTime() in
     letk 'late' = exists('xs', 
                      $\lambda$'p'.$\,$'now' - snd('p')$\,\gt\,$'timeout') in
      if ('late') then
       (repl 'w' ('worker',%e);$\,$('w'#init<> || this#instnc<'w'>))
       || foreach<xs, $\lambda$'p'.$\;$this#work<thrd(p), frth(p)>>
      else 
       (this#pending<'xs'> || this#instnc<'w'>)
  } in 'k'<'self-recovering'>
}
\end{lstlisting}
Service \cplinline{make} accepts a stoppable worker and an integer timeout
parameter. The first rule of the self-recovering worker initializes the list of
pending requests to the empty list \id{Nil}. The second rule accepts \cplinline{work}
requests. It generates a fresh ID for the request and adds the request to the
list of pending requests together with the ID, the local timestamp and the
continuation (we assume functions \cplinline{freshID} and \cplinline{localTime}). The rule
forwards the work request to the underlying worker and installs a continuation
that notifies the proxy that the request completed using service \cplinline{done}. The
third rule accepts this request and removes the corresponding request from the
list of pending requests.

Finally, the last rule checks if any of the pending requests has a timeout. If
this happens, the rule replaces the old worker instance by a new one via $\<repl>\!\!$, effectively
resetting the state of the worker and re-initializing it.
In parallel, all pending requests are replayed.

\section{Type System Proofs}
\label{sec:append-d:-type}

\begin{definition}
The typed language extends evaluation contexts with type applications:
$$ \context{E}{} ::= \ldots \mid \Tapp{\context{E}{}}{T}.  $$
The reduction relation is extended by an additional contraction rule:
 \infax[TAppAbs]{ \Tapp{(\Tabs{\alpha\subtpe U}{e})}{T} --> e\Subst{\alpha := T}.}
\ 
\end{definition}

\begin{definition}
  We write $\Sigma\subseteq\Sigma'$ if for all $(i\tpe T) \in \Sigma$,
  $(i\tpe T)\in\Sigma'$ holds.
\end{definition}

\begin{definition}\label{def:well-typed-rt}
  A routing table $\mu$ is \emph{well typed} with respect to $\Gamma$, $\Sigma$ (written $\Gamma\mid\Sigma\vdash\mu$),
  if $\dom{\mu} = \dom{\Sigma}$ and for all $i\in\dom{\mu}$, $\Tjudge{\Gamma\mid\Sigma}{\mu(i)}{\Sigma(i)}$ holds.
\end{definition}

\textbf{Note.} In the proofs we use the standard variable convention. That is, 
bound variables are assumed to be distinct and can be renamed if necessary
so that no variable capture can occur in substitutions.

\begin{lemma}[Substitution Lemma]\label{lem:substition}
If $\Tjudge{\Gamma,x \tpe  T_1\mid\Sigma}{e_2}{T_2}$ and $\Tjudge{\Gamma\mid\Sigma}{e_1}{T_1}$
then $\Tjudge{\Gamma\mid\Sigma}{\nobreak e_{2}\!\Subst{x := e_1}}{T_2}$.
\end{lemma}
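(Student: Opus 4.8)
The plan is to prove the statement by induction on the derivation of the first hypothesis, $\Tjudge{\Gamma, x\tpe T_1\mid\Sigma}{e_2}{T_2}$, with a case split on its last rule. To make the induction go through under binders I first generalize the statement to allow an arbitrary context suffix $\Gamma'$: namely, if $\Tjudge{\Gamma, x\tpe T_1, \Gamma'\mid\Sigma}{e_2}{T_2}$ and $\Tjudge{\Gamma\mid\Sigma}{e_1}{T_1}$, then $\Tjudge{\Gamma,\Gamma'\mid\Sigma}{e_2\Subst{x := e_1}}{T_2}$; the lemma is the instance $\Gamma' = \varnothing$. Throughout I use the standard variable convention, so that the service parameters $\tuple{y_{i,j}}$ and the self-reference bound by a server template, and the type variable bound by a type abstraction, are distinct from $x$ and not free in $e_1$, ruling out capture.

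The leaf cases are immediate. For \inflabel{T-Var}, $e_2$ is some $y\in\Names\cup\{\<this>\}$: if $y = x$ then $T_2 = T_1$ and $e_2\Subst{x := e_1} = e_1$, so the goal is exactly the second hypothesis; if $y\neq x$ then $e_2\Subst{x := e_1} = y$ and $(\Gamma, x\tpe T_1, \Gamma')(y) = (\Gamma,\Gamma')(y) = T_2$, so \inflabel{T-Var} still applies. For \inflabel{T-$\dead$} and \inflabel{T-Inst} the term is a value with no free occurrence of $x$ (\inflabel{T-Inst} consults only $\Sigma$), so these derivations carry over verbatim. The congruence rules \inflabel{T-Par}, \inflabel{T-Snap}, \inflabel{T-Repl}, \inflabel{T-Spwn}, \inflabel{T-Svc}, \inflabel{T-Req}, \inflabel{T-Img}, and \inflabel{T-TApp} are routine: substitution commutes with the term constructor, so I apply the induction hypothesis to each immediate subexpression (for \inflabel{T-Img}, to the enclosed server template and to every buffer value) and re-apply the same rule; the premises that constrain only types or shapes — argument counts in \inflabel{T-Req}, the requirement that each buffered message match some join pattern in \inflabel{T-Img}, $\ftv{T_1}\subseteq\ftv{\Gamma}$ in \inflabel{T-TApp} — are untouched by a term-level substitution. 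For \inflabel{T-Sub}, the IH yields $\Tjudge{\Gamma,\Gamma'\mid\Sigma}{e_2\Subst{x := e_1}}{T'}$ with $\Subjudge{\Gamma, x\tpe T_1,\Gamma'}{T'}{T_2}$; since subtyping derivations mention neither term variables nor $\Sigma$, the same judgment holds over $\Gamma,\Gamma'$, and \inflabel{T-Sub} reassembles the conclusion.

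The substantive case is \inflabel{T-Srv}, where $e_2 = \srvt{\tuple{r}}$ with $r_i = \tuple{p_i}\triangleright e_i$, and each body satisfies $\Tjudge{\Gamma, x\tpe T_1, \Gamma', \tuple{y_{i,j}\tpe T_{i,j}}, \<this>\tpe T\mid\Sigma}{e_i}{\Unit}$. Taking $\Gamma'' = \Gamma', \tuple{y_{i,j}\tpe T_{i,j}}, \<this>\tpe T$ as the suffix, the generalized induction hypothesis gives $\Tjudge{\Gamma, \Gamma', \tuple{y_{i,j}\tpe T_{i,j}}, \<this>\tpe T\mid\Sigma}{e_i\Subst{x := e_1}}{\Unit}$ for every $i$ — this is precisely why the suffix $\Gamma'$ was needed, since it keeps $e_1$ typed under the fixed prefix $\Gamma$ while the induction descends through the binders. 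The remaining premises of \inflabel{T-Srv} — linearity of the join patterns, agreement of repeated service annotations, and $\ftv{T}\subseteq\ftv{\Gamma}$ — are structural conditions on the (unchanged) patterns and on $T$, hence still hold, and $\srvt{\tuple{r}}\Subst{x := e_1}$ receives the same type $T$. \inflabel{T-TAbs} is treated identically, the suffix now being the type-variable bound $\alpha\subtpe T$ rather than term bindings.

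I expect the main obstacle to be purely a matter of care rather than of depth: setting up the generalization over $\Gamma'$ so that the \inflabel{T-Srv} and \inflabel{T-TAbs} cases go through without a separate context-weakening lemma, and keeping the simultaneous binding of the service parameters $\tuple{y_{i,j}}$ across all rules together with $\<this>$ straight. Every other case is a mechanical congruence step or a direct appeal to one of the two hypotheses.
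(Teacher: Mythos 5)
Your overall strategy matches the paper's: induction on the derivation of $\Tjudge{\Gamma, x\tpe T_1\mid\Sigma}{e_2}{T_2}$, with the leaf and congruence cases handled essentially as the paper does. Your device of generalizing over a context suffix $\Gamma'$ is a genuine (and arguably cleaner) variation: the paper instead permutes the context to move $x\tpe T_1$ past the newly bound $\tuple{y_{i,j}\tpe T_{i,j}}, \<this>\tpe T$ and re-instantiates the induction hypothesis with that enlarged prefix, which silently requires weakening $\Tjudge{\Gamma\mid\Sigma}{e_1}{T_1}$ to the larger context; your formulation keeps $e_1$ typed under the fixed prefix $\Gamma$ throughout and so avoids that implicit appeal.

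There is, however, one concrete gap in your \inflabel{T-Srv} case. You dispose of capture by invoking the variable convention to make ``the self-reference bound by a server template'' distinct from $x$. But $x$ ranges over $\Names\cup\Set{\<this>}$ --- the context form $\Gamma, \<this>\tpe T$ is part of the syntax, \inflabel{T-Var} looks up $\<this>$, and the preservation proof applies this very lemma with $x = \<this>$ to eliminate the $\<this>\tpe\Tsrv{T'}$ assumption after a \inflabel{React} step. Since $\<this>$ is a fixed keyword and not an $\alpha$-renameable name, you cannot arrange for the inner binding of $\<this>$ to be ``distinct from $x$'' when $x = \<this>$. The paper therefore splits the \inflabel{T-Srv} case on $x = \<this>$ versus $x \neq \<this>$: in the former, the template's own binding of $\<this>$ shadows the outer assumption $\<this>\tpe T_1$, the substitution does not descend into the template body (so $e_2\Subst{x := e_1} = e_2$), and the outer assumption is simply dropped from the body derivations. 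Your proof needs this extra subcase (and the analogous observation wherever a server template is typed, e.g., inside \inflabel{T-Img}); once it is added, the rest goes through as you describe.
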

\begin{proof}
  By induction on the typing derivation $\mathcal{D}$ of $\Tjudge{\Gamma,x \tpe  T_1\mid\Sigma}{e_2}{T_2}$. In each
  case we assume $\Tjudge{\Gamma\mid\Sigma}{e_{1}}{T_{1}}$.
  \newline
  
  \noindent \textbf{Basis:}

  \begin{description}
  \item[\inflabel{T-Var}:] Therefore $e_{2} = y$ for $y\in\Names\cup\Set{\<this>}$ and $(\Gamma, x\tpe T_{1})(y) = T_{2}$.
     Case distinction:
     \begin{description}
     \item[$x = y$:]
       Therefore $e_{2} = x$, $T_{2} = T_{1}$ and $e_{2}\Subst{x := e_{1}} = e_{1}$.
       From this and $\Tjudge{\Gamma\mid\Sigma}{e_{1}}{T_{1}}$  we obtain a derivation of $\Tjudge{\Gamma\mid\Sigma}{ e_{2}\Subst{x := e_{1}}}{T_{2}}$.
     \item[$x \neq y$:]
       Therefore $e_{2}\Subst{x := e_{1}} = y\Subst{x := e_{1}} = y = e_{2}$, hence $\Tjudge{\Gamma\mid\Sigma}{e_{2}\Subst{x := e_{1}}}{T_{2}}$,
       since the assumption $x\tpe T_{1}$ can be dropped.
     \end{description}
     
   \item[\inflabel{T-Inst}:] 
     Immediate, since the context $\Gamma$ is not considered in the premise.
   
   \item[\inflabel{T-$\dead$}:] 
     Immediate.
  \end{description}
  
  \noindent \textbf{Inductive step:}

  \noindent \textit{Induction hypothesis (IH)}: The property holds for all proper subderivations of the derivation $\mathcal{D}$ of $\Tjudge{\Gamma,x \tpe  T_1\mid\Sigma}{e_2}{T_2}$.

  \begin{description}
  \item[\inflabel{T-Par}:] From the conclusion of the rule it holds that $e_{2} = \<par> \tuple{e'_{2}}$, $T_{2} = \Unit$
    and from its premises $\Tjudge{\Gamma,x\tpe T_{1}\mid\Sigma}{e'_{2,i}}{T_{2}}$ for each $e'_{2,i}$ in the sequence $\tuple{e'_{2}}$.
    Applying (IH) to each of the derivations in the premise yields $\Tjudge{\Gamma\mid\Sigma}{e'_{2,i}\Subst{x := e_{1}}}{T_{2}}$ for each $i$.
    Together with rule \inflabel{T-Par} we obtain a derivation for $\Tjudge{\Gamma\mid\Sigma}{\<par> \tuple{e'_{2}\Subst{x := e_{1}}}}{T_{2}}$,
    which is also a derivation for $\Tjudge{\Gamma\mid\Sigma}{e_{2}\Subst{x := e_{1}}}{T_{2}}$ as desired, since 
    $\<par> \tuple{e'_{2}\Subst{x := e_{1}}} = (\<par> \tuple{e'_{2}})\Subst{x := e_{1}} = e_{2}\Subst{x:=e_{1}}$.

  \item[\inflabel{T-Srv}:]  It holds that $e_{2} = \srvt{\tuple{r}}$, $r_{i} = \tuple{p}_{i}\trr e'_{i}$, $T_{2} = \Tsrv{\tuple{x_{i,j}\tpe S_{i,j}}}$,
     $\ftv{T_{2}} \subseteq \ftv{\Gamma, x : T_{1}}$,  
     \linebreak $p_{i,j} =\allowbreak x_{i,j}<\tuple{y_{i,j}:T_{i,j}} >$, $S_{i,j} = <\tuple{T_{i,j}}>$ and\linebreak $\Tjudge{\Gamma, x\tpe T_{1}, \tuple{y_{i,j}\tpe T_{i,j}}, \<this>\tpe T\mid\Sigma}{e'_{i}}{\Unit}$
     for each $r_{i}$ in $\tuple{r}$.
     Note that $\ftv{T_{2}}\subseteq\ftv{\Gamma}$, since $\ftv{\Gamma, x : T_{1}} = \ftv{\Gamma}$ by definition of $\ftv{}$.

     Case distinction:
     \begin{description}
     \item[$x = \<this>$:]
       From the derivations of $\Tjudge{\Gamma, x\tpe T_{1}, \tuple{y_{i,j}\tpe T_{i,j}}, \linebreak\<this>\tpe T\mid\Sigma}{e'_{i}}{\Unit}$ we
       obtain derivations for $\Tjudge{\Gamma, \tuple{y_{i,j}\tpe T_{i,j}}, \<this>\tpe T\mid\Sigma}{e'_{i}}{\Unit}$ since the
       assumption $\<this> : T$ shadows $x : T_{1}$. Since server templates bind $\<this>$,
       it follows that $e_{2}\Subst{x := e_{1}} = e_{2}$. Together with the other assumptions from the original derivation we 
       obtain a derivation for $\Tjudge{\Gamma\mid\Sigma}{e_{2}\Subst{x := e_{1}}}{T_{2}}$ with rule \inflabel{T-Srv} as desired.

     \item[$v\neq\<this>$:] 
       For each $r_{i}$ in $\tuple{r}$ it holds that $x$ distinct from $\tuple{y}_{i,j}$ by the variable convention.
       From the derivation of
        $\Tjudge{\Gamma, x\tpe T_{1}, \tuple{y_{i,j}\tpe T_{i,j}}, \<this>\tpe T\mid\Sigma}{e'_{i}}{\Unit}$
        we obtain by permutation a derivation of
        $\Tjudge{\Gamma, \tuple{y_{i,j}\tpe T_{i,j}},\allowbreak \<this>\tpe T,  x\tpe T_{1}\mid\Sigma}{e'_{i}}{\Unit}$. 
        With (IH) we obtain a derivation for \linebreak$\Tjudge{\Gamma, \tuple{y_{i,j}\tpe T_{i,j}},\allowbreak \<this>\tpe T\mid\Sigma}{e'_{i}\Subst{x := e_{1}}}{\Unit}$.

              From these intermediate derivations and the assumptions from the original \inflabel{T-Srv} derivation we
       obtain by \inflabel{T-Srv} a derivation of
 $\Tjudge{\Gamma\mid\Sigma}{\Tsrv{(\tuple{p}_{i}\triangleright  e'_{i}\Subst{x := e_{1}})}}{T_{2}}$,
       which is also a derivation of $\Tjudge{\Gamma\mid\Sigma}{e_{2}\Subst{x := e_{1}}}{T_{2}}$ as desired.
     \end{description}

  \item[\inflabel{T-Spwn}:] Therefore $e_{2} = \<spwn> e'_{2}$, $T_{2} = \Tsrvinst{T'_{2}}$,  and $\Tjudge{\Gamma, x\tpe T_{1}\mid\Sigma}{e'_{2}}{\Timg{T'_{2}}}$.
   Applying (IH) yields $\Tjudge{\Gamma\mid\Sigma}{e'_{2}\Subst{x := e_{1}}}{\Timg{T'_{2}}}$. Together with rule \inflabel{T-Spwn} we obtain a derivation of
   $\Tjudge{\Gamma\mid\Sigma}{\<spwn> (e'_{2}\Subst{x := e_{1}}) }{T_{2}}$, which is also a derivation of $\Tjudge{\Gamma\mid\Sigma}{e_{2}\Subst{x := e_{1}}}{T_{2}}$ as desired,
   since $\<spwn> (e'_{2}\Subst{x := e_{1}}) = (\<spwn> e'_{2})\Subst{x := e_{1}}\linebreak = e_{2}\Subst{x := e_{1}} $.

  \item[\inflabel{T-Svc}, \inflabel{T-Req}, \inflabel{T-Img}, \inflabel{T-Snap}, \inflabel{T-Repl} :] Straight\-for\-ward application of (IH) and substitution.
   
  \item[\inflabel{T-TAbs}:] Therefore $e_{2} = \Tabs{\alpha\subtpe T'_{2}}{e'_{2}}$, $T_{2} = \Tuni{\alpha\subtpe T'_{2}}{T''_{2}}$ and
     $\Tjudge{\Gamma, x : T_{1}, \alpha \subtpe T'_{2}\mid\Sigma}{e'_{2}}{T''_{2}}$.
     By the variable convention, it holds that $\alpha$ is not free in $T_{1}$.
     Therefore, by permutation we obtain a derivation of  $\Tjudge{\Gamma, \alpha \subtpe T'_{2}, x : T_{1}\mid\Sigma}{e'_{2}}{T''_{2}}$.
     Together with (IH) we obtain a derivation for $\Tjudge{\Gamma, \alpha \subtpe T'_{2}\mid\Sigma}{e'_{2}\Subst{x := T_{1}}}{T''_{2}}$.
     Extending this derivation with rule \inflabel{T-TAbs}, we obtain a derivation of 
     $\Tjudge{\Gamma\mid\Sigma}{\Tabs{\alpha \subtpe T'_{2}}{e'_{2}\Subst{x := T_{1}}}}{T_{2}}$, which
     is also a derivation of $\Tjudge{\Gamma\mid\Sigma}{e_{2}\Subst{x := e_{1}}}{T_{2}}$ as desired.

  \item[\inflabel{T-TApp}:] Therefore $e_{2} = \Tapp{e'_{2}}{T'_{2}}$, $T_{2} = T''_{2}\Subst{\alpha := T'_{2}}$, $\ftv{T'_{2}}\subseteq \ftv{\Gamma}$, 
$\Subjudge{\Gamma, x\tpe T_{1}}{T'_{2}}{T'''_{2}}$ and   $\Tjudge{\Gamma, x\tpe T_{1}\mid\Sigma}{e'_{2}}{\Tuni{\alpha\subtpe T'''_{2}}{T''_{2}}}$.
Applying (IH) yields a derivation of  $\Tjudge{\Gamma\mid\Sigma}{e'_{2}\Subst{x := e_{1}}}{\Tuni{\alpha\subtpe T'''_{2}}{T''_{2}}}$. 
Note that $\Subjudge{\Gamma, x\tpe T_{1}}{T'_{2}}{T'''_{2}}$ implies $\Subjudge{\Gamma}{T'_{2}}{T'''_{2}}$, since
assumptions on variables do not play a role in subtyping rules.
With rule \inflabel{T-TApp} we obtain a derivation of $\Tjudge{\Gamma\mid\Sigma}{\Tapp{e'_{2}\Subst{x := e_{1}}}{T'_{2}}}{T_{2}}$,
which is also a derivation of $\Tjudge{\Gamma\mid\Sigma}{e_{2}\Subst{x := e_{1}}}{T_{2}}$ as desired.

  \item[\inflabel{T-Sub}:] By premise of the rule, $\Tjudge{\Gamma, x\tpe T_{1}\mid\Sigma}{e_{2}}{T'_{2}}$ and $\Subjudge{\Gamma,x\tpe T_{1}}{T'_{2}}{T_{2}}$.
   The latter implies $\Subjudge{\Gamma}{T'_{2}}{T_{2}}$,  since
assumptions on variables are not required in subtyping rules.
 Apply (IH) to obtain a derivation of $\Tjudge{\Gamma\mid\Sigma}{e_{2}\Subst{x := e_{1}}}{T'_{2}}$. Together
with the previously established facts and \inflabel{T-Sub} we
   obtain a derivation of $\Tjudge{\Gamma\mid\Sigma}{e_{2}\Subst{x := e_{1}}}{T_{2}}$ as desired.
  \end{description}
\end{proof}

\begin{lemma}[Type Substitution Preserves Subtyping]\label{lem:typesubsubst}\ \\
  If $\Subjudge{\Gamma, \alpha\subtpe T', \Gamma'}{S}{T}$ and $\Subjudge{\Gamma}{S'}{T'}$
  then \\ $\Subjudge{\Gamma, \Gamma'\sigma}{S\sigma}{T\sigma}$ where 
  $\sigma = \Subst{\alpha := S'}$.
\end{lemma}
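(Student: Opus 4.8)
The plan is to argue by induction on the derivation $\mathcal{D}$ of $\Subjudge{\Gamma, \alpha\subtpe T', \Gamma'}{S}{T}$, writing $\sigma = \Subst{\alpha := S'}$. Before the case analysis I would record two facts used throughout. First, subtyping weakens freely: inspecting the subtyping rules (Figure~\ref{fig:subtypingrules}), the context is consulted only by \inflabel{S-TVar}, and only to read off a bound, so any derivation valid under $\Gamma$ stays valid under any extension of $\Gamma$; in particular the hypothesis $\Subjudge{\Gamma}{S'}{T'}$ gives $\Subjudge{\Gamma, \Gamma'\sigma}{S'}{T'}$. Second, by the variable convention $\alpha$ occurs free neither in $\Gamma$ nor in its own bound $T'$, so $T'\sigma = T'$ and $U\sigma = U$ for every $(\beta \subtpe U)\in\Gamma$; moreover $\sigma$ commutes with every type constructor. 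With these in hand the structural cases become bookkeeping.

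The case analysis then runs as follows. The rules that ignore the context — \inflabel{S-Top}, \inflabel{S-Refl}, \inflabel{S-Srv$_{\bot}$} — close immediately, since $\Top\sigma = \Top$ and $(\Tsrv{\bot})\sigma = \Tsrv{\bot}$, so re-applying the same rule to the $\sigma$-images of $S$ and $T$ yields the goal. For \inflabel{S-TVar}, $S$ is a variable with $(S\subtpe T)$ in the context: if $S = \alpha$ then $T = T'$, and the goal is $\Subjudge{\Gamma,\Gamma'\sigma}{S'}{T'}$, which follows from the hypothesis by weakening; if $S = \beta\neq\alpha$ with the binding in $\Gamma$, then $\beta\sigma=\beta$, $U\sigma=U$ and \inflabel{S-TVar} reapplies; if the binding is in $\Gamma'$, then $(\beta\subtpe U\sigma)\in\Gamma'\sigma$ and again \inflabel{S-TVar} reapplies. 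The structural rules \inflabel{S-Srv}, \inflabel{S-Inst}, \inflabel{S-Img}, \inflabel{S-Svc} are routine: apply (IH) to each premise ($\Gamma'$ unchanged) and recompose; \inflabel{S-Trans} with middle type $T_2$ is likewise routine, applying (IH) to both premises to obtain $\Subjudge{\Gamma,\Gamma'\sigma}{S\sigma}{T_2\sigma}$ and $\Subjudge{\Gamma,\Gamma'\sigma}{T_2\sigma}{T\sigma}$, then \inflabel{S-Trans}. The case I would handle most carefully is \inflabel{S-Univ}, where $S = \Tuni{\alpha_1\subtpe T_0}{U_1}$, $T = \Tuni{\alpha_2\subtpe T_0}{U_2}$, and the premise lives in context $\Gamma,\alpha\subtpe T',\Gamma',\alpha_1\subtpe T_0$: by the variable convention $\alpha_1$ is distinct from $\alpha$ and absent from $S'$, so $\sigma$ fixes $\alpha_1$ and commutes with the $\alpha_2\mapsto\alpha_1$ renaming; applying (IH) with $\Gamma'$ replaced by $\Gamma',\alpha_1\subtpe T_0$ and reapplying \inflabel{S-Univ} gives the claim.

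The main obstacle I anticipate is the \inflabel{S-TVar} subcase $S = \alpha$: this is the only place the hypothesis $\Subjudge{\Gamma}{S'}{T'}$ is actually consumed, and getting it to line up requires both the weakening observation and the scoping fact $\alpha\notin\ftv{T'}$ (so that $T'\sigma = T'$). The only other delicate point is the substitution bookkeeping in \inflabel{S-Univ} — freshness of the quantified variable and commutation of $\sigma$ with the bound-variable renaming — which the variable convention is exactly designed to discharge. Everything else is a mechanical push of $\sigma$ through the rule.
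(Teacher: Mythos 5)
Your proposal is correct and follows essentially the same route as the paper: induction on the subtyping derivation, with the context-independent rules closing immediately, the structural rules recomposed after applying the induction hypothesis, \inflabel{S-Univ} handled via the variable convention so that $\sigma$ commutes with the bound-variable renaming, and \inflabel{S-TVar} as the one case that consumes the hypothesis $\Subjudge{\Gamma}{S'}{T'}$. If anything, your treatment of the $S=\alpha$ subcase (weakening the hypothesis into the larger context and using $\alpha\notin\ftv{T'}$ so that $T'\sigma=T'$) is more explicit and more clearly correct than the paper's terse appeal to \inflabel{S-Refl} there.
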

\begin{proof}
  By induction on the typing derivation $\mathcal{D}$ of $\Subjudge{\Gamma, \alpha\subtpe T', \Gamma'}{S}{T}$. In each
  case we assume $\Subjudge{\Gamma}{S'}{T'}$ and $\sigma = \Subst{\alpha := S'}$.
  \newline
  
  \noindent \textbf{Basis:}

  \begin{description}
  \item[\inflabel{S-Top}:] Therefore $T = \Top$.
    By rule \inflabel{S-Top} it holds that $\Subjudge{\Gamma, \Gamma'\sigma}{S\sigma}{\Top}$, i.e., $\Subjudge{\Gamma, \Gamma'\sigma}{S\sigma}{T\sigma}$ as desired.
  \item[\inflabel{S-Refl}:]
    Therefore  $S = T$. $\Subjudge{\Gamma, \Gamma'\sigma}{S\sigma}{T\sigma}$ holds by rule \inflabel{S-Refl}.
  \item[\inflabel{S-TVar}:] 
    Therefore $S = \alpha'$, $\alpha' \subtpe T \in (\Gamma, \alpha \subtpe T', \Gamma')$.
    Case distinction:
    \begin{description}
    \item[$\alpha' \neq \alpha$:] Immediate by rule \inflabel{S-TVar}.
    \item[$\alpha' = \alpha$:] Therefore $S = \alpha$, $S\sigma = T'$, and  $T' = T = T\sigma$. 
      Apply rule \inflabel{S-Refl}.
    \end{description}
  \item[\inflabel{S-Srv$_{\bot}$}:] Immediate by rule \inflabel{S-Srv$_{\bot}$}.
  \end{description}
  
  \noindent \textbf{Inductive step:}

  \noindent \textit{Induction hypothesis (IH)}: The property holds for all proper subderivations of the derivation $\Subjudge{\Gamma, \alpha\subtpe T', \Gamma'}{S}{T}$.

  \begin{description}
  \item[\inflabel{S-Srv}:]
    Therefore $S = \srvt{\tuple{x\tpe S_{2}}}$, $T = \srvt{\tuple{y\tpe T_{2}}}$ and 
    for each $j$ there is $i$ such that $y_{j} = x_{i}$  and  $\Subjudge{\Gamma, \alpha \subtpe T', \Gamma'}{S_{2,i}}{T_{2,j}}$.
    Applying (IH) yields $\Subjudge{\Gamma, \Gamma'\sigma}{S_{2,i}\sigma}{T_{2,j}\sigma}$.
    Together with rule \inflabel{S-Srv} we obtain a derivation of $\Subjudge{\Gamma,\Gamma'\sigma}{\srvt{\tuple{x\tpe S_{2}\sigma}}}{\srvt{\tuple{y\tpe T_{2}\sigma}}}$,
    i.e., $\Subjudge{\Gamma,\Gamma'\sigma}{S\sigma}{T\sigma}$ as desired.

  \item[\inflabel{S-Inst}, \inflabel{S-Img}, \inflabel{S-Svc}, \inflabel{S-Trans}:]
    Straightforward application of (IH).

  \item[\inflabel{S-Univ}:]
    Therefore $S = \Tuni{\alpha_{1}\subtpe U}{S_{2}}$, $T = \Tuni{\alpha_{2}\subtpe U}{T_{2}}$ and
    $\Subjudge{\Gamma, \alpha\subtpe T', \Gamma', \alpha_{1}\subtpe U}{S_{2}}{T_{2}\Subst{\alpha_{2} := \alpha_{1}}}$.
    Together with (IH) we obtain a derivation of 
    $\Subjudge{\Gamma, \Gamma'\sigma, \alpha_{1}\subtpe U\sigma}{S_{2}\sigma}{T_{2}\Subst{\alpha_{2} := \alpha_{1}}\sigma}$, i.e.,
    $\Subjudge{\Gamma, \Gamma'\sigma, \alpha_{1}\subtpe U\sigma}{S_{2}\sigma}{T_{2}\sigma\Subst{\alpha_{2} := \alpha_{1}}}$ (since by our variable convention, we
    may assume $\alpha_{2} \neq \alpha$ and $\alpha_{1}\neq \alpha$). 
    Together with rule \inflabel{S-Univ} we obtain a derivation of
    $\Subjudge{\Gamma, \Gamma'\sigma}{\Tuni{\alpha_{1}\subtpe U\sigma}{S_{2}\sigma}}{ \Tuni{\alpha_{2}\subtpe U\sigma}{T_{2}\sigma}     }$, i.e.,
    $\Subjudge{\Gamma,\Gamma'\sigma}{S\sigma}{T\sigma}$ as desired.
  \end{description}
\end{proof}

 \begin{lemma}[Type Substitution Lemma]\label{lem:typesubst}
   If $\Tjudge{\Gamma, \alpha\subtpe S,\Gamma'\mid\Sigma}{e}{T}$ and $\Subjudge{\Gamma}{S'}{S}$ then $\Tjudge{\Gamma, \Gamma'\sigma\mid\Sigma\sigma}{e\sigma}{T\sigma}$
   where $\sigma = \Subst{\alpha := S'}$.
 \end{lemma}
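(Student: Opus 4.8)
The plan is to proceed by induction on the derivation $\mathcal{D}$ of $\Tjudge{\Gamma,\alpha\subtpe S,\Gamma'\mid\Sigma}{e}{T}$, in the same style as the Substitution Lemma (Lemma~\ref{lem:substition}), invoking Lemma~\ref{lem:typesubsubst} (Type Substitution Preserves Subtyping) for every case whose rule carries a subtyping premise. Write $\sigma = \Subst{\alpha := S'}$ throughout. Note that $\sigma$ acts only on types, so $e\sigma$ has the same term structure as $e$ and differs only in its type annotations, and $\Sigma\sigma$ is $\Sigma$ with $\sigma$ applied pointwise to its codomain. By the standard variable convention we may assume that $\alpha$ does not occur free in $\Gamma$ (so all types recorded in $\Gamma$ are untouched by $\sigma$), that $\alpha$ is distinct from every term variable, from $\<this>$, and from every bound type variable occurring in $\mathcal{D}$. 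We also use the routine well-formedness fact that $\Subjudge{\Gamma}{S'}{S}$ entails $\ftv{S'}\subseteq\ftv{\Gamma}$.

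The base and ``congruence'' cases are mechanical. For \inflabel{T-Var} with $e = x$ and $(\Gamma,\alpha\subtpe S,\Gamma')(x) = T$: if $x$ is recorded in $\Gamma$ then $T\sigma = T$ and $x$ is still recorded in $\Gamma,\Gamma'\sigma$ with type $T$; if $x$ is recorded in $\Gamma'$ then it is recorded in $\Gamma'\sigma$ with type $T\sigma$. \inflabel{T-Inst} uses only $\Sigma(i) = \Timg{T}$, and $(\Sigma\sigma)(i) = \Timg{T}\sigma = \Timg{(T\sigma)}$. \inflabel{T-$\dead$} is immediate since $(\Timg{\Tsrv{\bot}})\sigma = \Timg{\Tsrv{\bot}}$. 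For \inflabel{T-Par}, \inflabel{T-Svc}, \inflabel{T-Req}, \inflabel{T-Img}, \inflabel{T-Snap}, \inflabel{T-Repl}, \inflabel{T-Spwn} one pushes $\sigma$ inward, applies the induction hypothesis to each premise, and reassembles with the same rule, using $\Unit\sigma = \Unit$, $(\Timg{T})\sigma = \Timg{(T\sigma)}$, $(\Tsrvinst{T})\sigma = \Tsrvinst{(T\sigma)}$, $(\Tsvc{\tuple{T}})\sigma = \Tsvc{\tuple{T\sigma}}$.

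The load-bearing cases are \inflabel{T-Sub}, \inflabel{T-TAbs}, \inflabel{T-TApp} and \inflabel{T-Srv}. For \inflabel{T-Sub}, Lemma~\ref{lem:typesubsubst} turns the subtyping premise over $\Gamma,\alpha\subtpe S,\Gamma'$ into one over $\Gamma,\Gamma'\sigma$, and the induction hypothesis handles the typing premise. For \inflabel{T-TApp} we additionally need $\ftv{T_1\sigma}\subseteq\ftv{\Gamma,\Gamma'\sigma}$, which follows from the original side condition together with $\ftv{S'}\subseteq\ftv{\Gamma}$, and the standard commutation of type substitution, $(T''\Subst{\beta := T_1})\sigma = (T''\sigma)\Subst{\beta := T_1\sigma}$ (legal since $\beta\neq\alpha$ and $\beta\notin\ftv{S'}$ by the variable convention), to rewrite the result type $T\!\Subst{\alpha := T_1}$. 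For \inflabel{T-TAbs} the body is typed under $\Gamma,\alpha\subtpe S,(\Gamma',\beta\subtpe T_2')$, which is already in the shape required by the induction hypothesis (with $\Gamma''=\Gamma',\beta\subtpe T_2'$); applying it and reattaching $\Lambda$ gives the claim, using $(\Tuni{\beta\subtpe T_2'}{T_2''})\sigma = \Tuni{\beta\subtpe T_2'\sigma}{T_2''\sigma}$.

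The genuine obstacle is \inflabel{T-Srv}, where $e = \srvt{\tuple{r}}$ with $r_i = \tuple{p_i}\trr e_i$, $p_{i,j} = x_{i,j}<\tuple{y_{i,j}\tpe T_{i,j}}>$ and $T = \Tsrv{\tuple{x_{i,j}\tpe S_{i,j}}}$, and each body $e_i$ is typed under the context $\Gamma,\alpha\subtpe S,\Gamma',\tuple{y_{i,j}\tpe T_{i,j}},\<this>\tpe T$. The linearity conditions and the cross-pattern consistency condition $x_{i,j}=x_{k,l}\Rightarrow T_{i,j}=T_{k,l}$ are purely syntactic and are preserved once $\sigma$ is applied uniformly. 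To type the bodies we apply the induction hypothesis with $\Gamma'' = \Gamma',\tuple{y_{i,j}\tpe T_{i,j}},\<this>\tpe T$, obtaining $\Tjudge{\Gamma,\Gamma''\sigma\mid\Sigma\sigma}{e_i\sigma}{\Unit}$, i.e.\ a derivation under $\Gamma,\Gamma'\sigma,\tuple{y_{i,j}\tpe T_{i,j}\sigma},\<this>\tpe T\sigma$. Finally we must re-establish the side condition, now in the form $\ftv{T\sigma}\subseteq\ftv{\Gamma,\Gamma'\sigma}$: since $\ftv{T\sigma}\subseteq(\ftv{T}\setminus\{\alpha\})\cup\ftv{S'}$, every variable of $\ftv{T}\setminus\{\alpha\}$ lies (by the original condition $\ftv{T}\subseteq\ftv{\Gamma}\cup\{\alpha\}\cup\ftv{\Gamma'}$ and being distinct from $\alpha$) in $\ftv{\Gamma}\cup\ftv{\Gamma'\sigma}$, while $\ftv{S'}\subseteq\ftv{\Gamma}$; hence $\ftv{T\sigma}\subseteq\ftv{\Gamma,\Gamma'\sigma}$. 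Reassembling with \inflabel{T-Srv} yields $\Tjudge{\Gamma,\Gamma'\sigma\mid\Sigma\sigma}{(\srvt{\tuple{r}})\sigma}{T\sigma}$. I expect essentially all the real work to be this tracking of free type variables through the $\ftv$ side conditions of \inflabel{T-Srv} and \inflabel{T-TApp}; the remaining cases are routine.
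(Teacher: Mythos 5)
Your proposal is correct and follows essentially the same route as the paper's own proof: induction on the typing derivation, the variable convention to keep $\alpha$ out of $\Gamma$, Lemma~\ref{lem:typesubsubst} for the subtyping premises of \inflabel{T-Sub} and \inflabel{T-TApp}, and explicit tracking of the $\ftv{}$ side conditions in \inflabel{T-Srv} and \inflabel{T-TApp}. Your treatment of the free-type-variable bookkeeping is in fact somewhat more explicit than the paper's, which simply asserts the implication $\ftv{T\sigma}\subseteq\ftv{\Gamma,\Gamma'\sigma}$.
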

\begin{proof}
  By induction on the typing derivation $\mathcal{D}$ of $\Tjudge{\Gamma, \alpha\subtpe S, \Gamma'\mid\Sigma}{e}{T}$. In each
  case we assume $\Subjudge{\Gamma}{S'}{S}$ and $\sigma = \Subst{\alpha := S'}$.
  \newline
  
  \noindent \textbf{Basis:}

  \begin{description}
  \item[\inflabel{T-Var}:] Therefore $e = x$ for $x\in\Names\cup\Set{\<this>}$ and $(\Gamma, \alpha \subtpe S, \Gamma')(x) = T$.
    By the variable convention, it holds that  $\alpha$ is not bound in $\Gamma$, therefore
    $(\Gamma, \Gamma'\sigma)(x) = T\sigma$, which holds by a structural induction on $\Gamma'$.
    Together with \inflabel{T-Var} we obtain $\Tjudge{\Gamma,\Gamma'\sigma\mid\Sigma\sigma}{e\sigma}{T\sigma}$ as desired.

  \item[\inflabel{T-Inst}:] Immediate, since $\Gamma$ is not considered in the premises.
  \item[\inflabel{T-$\dead$}:] Immediate by rule \inflabel{T-$\dead$}.
   
  \end{description}
  
  \noindent \textbf{Inductive step:}

  \noindent \textit{Induction hypothesis (IH)}: The property holds for all proper subderivations of the derivation $\mathcal{D}$ of $\Tjudge{\Gamma, \alpha\subtpe S,\Gamma'\mid\Sigma}{e}{T}$.

  \begin{description}
  \item[\inflabel{T-Par}:] From the conclusion of the rule it holds that $e = \<par> \tuple{e_{2}}$, $T = \Unit$
    and from its premises $\Tjudge{\Gamma,\alpha\subtpe S,\Gamma'\mid\Sigma}{e_{2,i}}{T}$ for each $e_{2,i}$ in the sequence $\tuple{e_{2}}$.
    Applying (IH) yields $\Tjudge{\Gamma,\Gamma'\sigma\mid\Sigma\sigma}{e_{2,i}\sigma}{T\sigma}$ for each $i$. Together with $T\sigma = \Unit\sigma = \Unit = T$,
    by \inflabel{T-Par} we obtain a derivation of $\Tjudge{\Gamma,\Gamma'\sigma\mid\Sigma\sigma}{\<par>\tuple{e_{2}\sigma}}{T}$, which is also
    a derivation of $\Tjudge{\Gamma,\Gamma'\mid\Sigma\sigma}{e\sigma}{T\sigma}$ as desired.

  \item[\inflabel{T-Srv}:] Therefore $e = \srvt{\tuple{r}}$, $r_{i} = \tuple{p}_{i}\trr e_{i}$, $T = \Tsrv{\tuple{x_{i,j}\tpe S_{i,j}}}$,
     $\ftv{T} \subseteq \ftv{\Gamma, \alpha\subtpe S, \Gamma'}$,  
     $p_{i,j} = x_{i,j}<\tuple{y_{i,j}:T_{i,j}} >$, $S_{i,j} = <\tuple{T_{i,j}}>$ and $\Tjudge{\Gamma, \alpha\subtpe S, \Gamma' , \tuple{y_{i,j}\tpe T_{i,j}}, \<this>\tpe T\mid\Sigma}{e_{i}}{\Unit}$
     for each $r_{i}$ in $\tuple{r}$.
     Applying (IH) yields $\Tjudge{\Gamma,\Gamma'\sigma, \tuple{y_{i,j}\tpe T_{i,j}\sigma}, \<this>\tpe T\sigma \mid\Sigma\sigma    }{e_{i}\sigma}{\Unit}$ for each $i$ in $\tuple{r}_{i}$.
     Note that $\ftv{T} \subseteq \ftv{\Gamma, \alpha\subtpe S, \Gamma'}$, $\Subjudge{\Gamma}{S'}{S}$ and $\sigma = \Subst{\alpha := S'}$ imply
     $\ftv{T\sigma}\subseteq \ftv{\Gamma, \Gamma'\sigma}$.
     By applying $\sigma$ to the types in the assumptions of the original derivation $\mathcal{D}$, we obtain with the previously established facts
     a derivation of $\Tjudge{\Gamma,\Gamma'\sigma\mid\Sigma\sigma}{e\sigma}{T\sigma}$ as desired by rule \inflabel{T-Srv}.

  \item[\inflabel{T-Img}, \inflabel{T-Snap}, \inflabel{T-Repl}, \inflabel{T-Spwn},\inflabel{T-Svc},\inflabel{T-Req}, \inflabel{T-TAbs}:] Straightforward application of (IH) and substitution.

  \item[\inflabel{T-TApp}:] Therefore $e = \Tapp{e_{2}}{T_{2}}$, $T = T'\Subst{\alpha' := T_{2}}$, $\ftv{T_{2}}\subseteq \ftv{\Gamma \alpha \subtpe S, \Gamma'}$, 
    $\Subjudge{\Gamma, \alpha\subtpe S, \Gamma'}{T_{2}}{T_{3}}$ and   $\Tjudge{\Gamma, \alpha\subtpe S, \Gamma'\mid\Sigma}{e_{2}}{\Tuni{\alpha'\subtpe T_{3}}{T'}}$.
    Applying (IH) yields  $\Tjudge{\Gamma, \Gamma'\sigma\mid\Sigma\sigma}{e_{2}\sigma}{(\Tuni{\alpha'\subtpe T_{3}}{T'})\sigma}$,
    hence $\Tjudge{\Gamma, \Gamma'\sigma\mid\Sigma\sigma}{e_{2}\sigma}{\Tuni{\alpha'\subtpe T_{3}\sigma}{T'\sigma}}$. By lemma~\ref{lem:typesubsubst}
    and $\Subjudge{\Gamma, \alpha\subtpe S, \Gamma'}{T_{2}}{T_{3}}$ it holds that $\Subjudge{\Gamma,  \Gamma'\sigma}{T_{2}\sigma}{T_{3}\sigma}$.
    From $\ftv{T_{2}}\subseteq \ftv{\Gamma, \alpha \subtpe S, \Gamma'}$ it holds that $\ftv{T_{2}\sigma}\subseteq \ftv{\Gamma, \Gamma'\sigma}$.
    Applying rule \inflabel{T-TApp} yields a derivation of $\Tjudge{\Gamma,\Gamma'\sigma\mid\Sigma\sigma}{\Tapp{e_{2}\sigma}{T_{2}\sigma}}{(T'\sigma)\Subst{\alpha' := T_{2}\sigma}}$,
    i.e., $\Tjudge{\Gamma,\Gamma'\sigma\mid\Sigma\sigma}{(\Tapp{e_{2}}{T_{2}})\sigma}{(T'\Subst{\alpha' := T_{2}})\sigma}$, i.e., $\Tjudge{\Gamma,\Gamma'\sigma\mid\Sigma\sigma}{e\sigma}{T\sigma}$ as desired.

  \item[\inflabel{T-Sub}:] By premise of the rule, $\Tjudge{\Gamma, \alpha\subtpe S, \Gamma'\mid\Sigma}{e}{T'}$ and $\Subjudge{\Gamma,\alpha\subtpe S, \Gamma'}{T'}{T}$.
    Applying (IH) to the former yields $\Tjudge{\Gamma,\Gamma'\sigma\mid\Sigma\sigma}{e\sigma}{T'\sigma}$. By lemma~\ref{lem:typesubsubst}
    and $\Subjudge{\Gamma,\alpha\subtpe S, \Gamma'}{T'}{T}$ it holds that $\Subjudge{\Gamma,\Gamma'\sigma}{T'\sigma}{T\sigma}$.
    Thus, by rule \inflabel{T-Sub} we obtain a derivation of $\Tjudge{\Gamma,\Gamma'\sigma\mid\Sigma\sigma}{e\sigma}{T\sigma}$ as desired.
  \end{description}
\end{proof}

\begin{lemma}[Location Typing Extension Preserves Types]\label{lem:location-typing}\ \\
  If $\Tjudge{\Gamma\mid\Sigma}{e}{T}$ and $\Sigma\subseteq\Sigma'$, then $\Tjudge{\Gamma\mid\Sigma'}{e}{T}$.
\end{lemma}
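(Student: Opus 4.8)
The plan is to proceed by induction on the typing derivation $\mathcal{D}$ of $\Tjudge{\Gamma\mid\Sigma}{e}{T}$, keeping the hypothesis $\Sigma\subseteq\Sigma'$ fixed throughout. The guiding observation is that among all the typing rules of Figure~\ref{fig:typerules}, the location typing $\Sigma$ is actually \emph{inspected} only in rule \inflabel{T-Inst}; in every other rule it is merely threaded unchanged from the conclusion to the premises, and no typing rule ever \emph{extends} $\Sigma$. Consequently, enlarging $\Sigma$ to a superset $\Sigma'$ can never invalidate an inference step, and the induction is essentially a mechanical re-assembly of the derivation.

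Concretely, first I would dispatch the base cases. Rules \inflabel{T-Var} and \inflabel{T-$\dead$} do not mention $\Sigma$ at all, so they apply verbatim with $\Sigma'$ substituted for $\Sigma$. For \inflabel{T-Inst} we have $e = i$ with $\Sigma(i) = \Timg{T}$, i.e.\ $(i \tpe \Timg{T}) \in \Sigma$; since $\Sigma\subseteq\Sigma'$ and a location typing is a finite map, $(i \tpe \Timg{T}) \in \Sigma'$ and hence $\Sigma'(i) = \Timg{T}$, so \inflabel{T-Inst} yields $\Tjudge{\Gamma\mid\Sigma'}{i}{\Tsrvinst{T}}$ as required.

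For the inductive step I would treat the remaining rules uniformly. Each of \inflabel{T-Par}, \inflabel{T-Srv}, \inflabel{T-Img}, \inflabel{T-Snap}, \inflabel{T-Repl}, \inflabel{T-Spwn}, \inflabel{T-Svc}, \inflabel{T-Req}, \inflabel{T-TAbs}, \inflabel{T-TApp} and \inflabel{T-Sub} has typing premises over the \emph{same} $\Sigma$ (possibly with an extended $\Gamma$, as in \inflabel{T-Srv} and \inflabel{T-TAbs}, which is irrelevant here) plus, in the cases of \inflabel{T-TApp} and \inflabel{T-Sub}, a subtyping premise $\Subjudge{\Gamma}{\cdot}{\cdot}$ that does not depend on $\Sigma$ at all. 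Applying the induction hypothesis to each typing subderivation replaces $\Sigma$ by $\Sigma'$ in its conclusion, and re-applying the original rule --- carrying over the unchanged subtyping premises where present --- produces the desired derivation of $\Tjudge{\Gamma\mid\Sigma'}{e}{T}$.

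I do not expect a genuine obstacle here: this is the standard store-typing weakening lemma, and the only case demanding more than routine bookkeeping is \inflabel{T-Inst}, which follows immediately from $\Sigma$ being a function together with $\Sigma\subseteq\Sigma'$. The mild care needed elsewhere is purely notational, namely observing that the context extensions in \inflabel{T-Srv}/\inflabel{T-TAbs} touch $\Gamma$ and not $\Sigma$, so the induction hypothesis applies to the subderivations without modification.
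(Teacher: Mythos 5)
Your proposal is correct and follows essentially the same route as the paper, which proves this lemma by a straightforward induction on the typing derivation; your observation that \textsc{T-Inst} is the only rule consulting $\Sigma$ (so that $\Sigma\subseteq\Sigma'$ immediately gives $\Sigma'(i)=\Timg{T}$ there, while every other rule merely threads $\Sigma$ through) is exactly the content the paper leaves implicit. No gaps.
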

\begin{proof}
 Straighforward induction on the derivation for $\Tjudge{\Gamma\mid\Sigma}{e}{T}$.
\end{proof}

\begin{theorem}[Preservation]
   If $\Tjudge{\Gamma\mid\Sigma}{e}{T}$ and  $\Gamma\mid\Sigma\vdash\mu$ and $e\mid\nobreak\mu --> e'\mid\nobreak\mu'$, then $\Tjudge{\Gamma\mid\Sigma'}{e'}{T}$ for some $\Sigma'$, where $\Sigma\subseteq\Sigma'$ and $\Gamma\mid\Sigma'\vdash \mu'$.
\end{theorem}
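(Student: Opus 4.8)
The plan is an induction on the derivation of the step $e\mid\mu --> e'\mid\mu'$, with a case analysis on the last reduction rule of Figure~\ref{fig:djcsemantics} (extended by \inflabel{TAppAbs} in the typed calculus). Because the type system has subsumption, I would first record the routine inversion-modulo-\inflabel{T-Sub} facts: a well-typed expression of a given non-variable shape has a derivation ending in the syntax-directed rule for that shape followed by zero or more \inflabel{T-Sub} steps; e.g.\ a well-typed $\svc{i}{x}<\tuple{v}>$ decomposes through \inflabel{T-Req}, \inflabel{T-Svc}, \inflabel{T-Inst}, with the types of $\tuple{v}$ being subtypes of the declared argument types of service $x$. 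These are immediate from the subtyping rules (Figure~\ref{fig:subtypingrules}) and lemma~\ref{lem:typesubsubst}. The workhorses for the induction are the Substitution Lemma (lemma~\ref{lem:substition}), the Type Substitution Lemma (lemma~\ref{lem:typesubst}), the Location Typing Extension lemma (lemma~\ref{lem:location-typing}), the Replacement lemma, and the soundness half of Proposition~\ref{thm:mtchsndnss}.

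The congruence case \inflabel{Cong} is precisely what the Replacement lemma is for: with $e = \context{E}{e_0}$ and a step $e_0\mid\mu --> e_0'\mid\mu'$, the induction hypothesis yields $\Sigma'\supseteq\Sigma$ with $\Tjudge{\Gamma\mid\Sigma'}{e_0'}{U'}$ at the type $U'$ of the sub-derivation for $e_0$ and with $\Gamma\mid\Sigma'\vdash\mu'$; lemma~\ref{lem:location-typing} lifts the surrounding derivation to $\Sigma'$, and the Replacement lemma re-assembles $\Tjudge{\Gamma\mid\Sigma'}{\context{E}{e_0'}}{T}$. The \inflabel{Par} case is immediate: both sides are typed only by \inflabel{T-Par} over the same collection of components, and $\mu'=\mu$. \inflabel{TAppAbs} is closed by lemma~\ref{lem:typesubst}. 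For the four state-changing rules I keep $\Sigma'=\Sigma$, except for \inflabel{Spwn}, where I take $\Sigma' = \Sigma, i\tpe\Timg{T_0}$ for the inverted image type $\Timg{T_0}$ of the spawned image; in each of these cases the contracted expression is trivially typeable ($\prlll{\varepsilon}$, or the address $i$, or the image $s$ itself), so the only real obligation is to re-establish $\Gamma\mid\Sigma'\vdash\mu'$, which reduces to the single modified (or added) routing-table entry. For \inflabel{Rcv} one uses the inverted typing of the received request to see that appending $x<\tuple{v}>$ to the buffer still satisfies the side-condition of \inflabel{T-Img}; for \inflabel{Snap} and \inflabel{Repl} it follows from the definition of a well-typed routing table together with \inflabel{T-Inst} (and \inflabel{T-Repl}'s requirement that the new image has the matching interface), modulo subsumption; for \inflabel{Spwn} one additionally invokes lemma~\ref{lem:location-typing} to re-type the old entries and the new image under the enlarged $\Sigma'$, and notes $\dom{\mu'} = \dom{\Sigma'}$ since $i$ is fresh. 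Via the inversion facts, the contracted address $i$ then also carries the original type $T$.

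The case \inflabel{React} is the crux and the step I expect to demand real care. Here $\prlll{e}\mid\mu --> \prlll{e\;\sigma_b(e_b)}\mid\mu;i\mapsto(s,\tuple{m}')$, so by \inflabel{T-Par} it suffices to show $\Tjudge{\Gamma\mid\Sigma}{\sigma_b(e_b)}{\Unit}$ and that $(s,\tuple{m}')$ is still well typed. The latter is easy, since $\tuple{m}'$ is a sub-multiset of $\tuple{m}$ and the side-condition of \inflabel{T-Img} is universally quantified over buffer entries, hence inherited. For the former, I invert the typing of $\mu(i) = (s,\tuple{m})$ through \inflabel{T-Img} to get that $s = \srvt{\tuple{r}}$ is a well-typed server template and that every buffered message is understood by it; inverting \inflabel{T-Srv} on $s$ then gives $\Tjudge{\Gamma,\tuple{y_j\tpe T_j},\<this>\tpe T_{\mathrm{this}}\mid\Sigma}{e_b}{\Unit}$ for the firing rule, where $T_{\mathrm{this}}$ is the type \inflabel{T-Srv} records for $\<this>$. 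By Proposition~\ref{thm:mtchsndnss} the consumed messages are exactly $x_j<\tuple{v_j}>$ with $\sigma = \Subst{\tuple{y_j} := \tuple{v_j}}$; since these messages occur in $\tuple{m}$, the \inflabel{T-Img} condition together with the consistency side-condition of \inflabel{T-Srv} ($x_{i,j} = x_{k,l}\to T_{i,j}=T_{k,l}$) yields $\Tjudge{\Gamma\mid\Sigma}{v_j}{T_j}$ (possibly after a \inflabel{T-Sub} step that reinstates the subsumption used on the request arguments). The substitution $\<this>\mapsto i$ is justified because well-typedness of $\mu$ forces $\Sigma(i)$ to be the image type for the same interface, so \inflabel{T-Inst} assigns $i$ exactly $T_{\mathrm{this}}$. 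Iterating the Substitution Lemma over $\tuple{y_j}\mapsto\tuple{v_j}$ and then over $\<this>\mapsto i$ delivers $\Tjudge{\Gamma\mid\Sigma}{\sigma_b(e_b)}{\Unit}$, and re-applying \inflabel{T-Par} closes the case; throughout I assume the variable convention so that no substitution captures. The genuine subtlety, and where I would spend most effort, is aligning the argument types: the buffered message witnessing a given service name need not originate from the reaction rule that is firing, so the consistency condition of \inflabel{T-Srv} must be used carefully, and any subsumption interposed before \inflabel{T-Req} must be carried through.
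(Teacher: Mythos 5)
Your proposal is correct, but it organizes the induction dually to the paper: you induct on the derivation of the reduction step $e\mid\mu --> e'\mid\mu'$ and case-split on the last reduction rule, whereas the paper inducts on the typing derivation $\Tjudge{\Gamma\mid\Sigma}{e}{T}$ and, inside each typing case, asks which reduction rules the shape of $e$ admits. Both use the same arsenal (substitution, type substitution, location-typing extension, match soundness for \inflabel{React}), and your treatment of the crux case \inflabel{React} --- inverting \inflabel{T-Img} and \inflabel{T-Srv}, using Proposition~\ref{thm:mtchsndnss} to recover the consumed messages and the substitution, then iterating the Substitution Lemma over the pattern variables and $\<this>$ --- matches the paper's almost verbatim, including the observation that the residual buffer stays well typed because \inflabel{T-Img}'s side condition is inherited by sub-multisets. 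The structural differences have real consequences, though. Your route requires the inversion-modulo-\inflabel{T-Sub} facts you list up front (and the contravariance of service argument types under \inflabel{S-Svc}/\inflabel{S-Srv} must be threaded through them, e.g.\ when re-establishing \inflabel{T-Img} after \inflabel{Rcv} appends a message typed against $\Sigma(i)$ rather than against the stored template); the paper avoids inversion entirely by making \inflabel{T-Sub} its own induction case. In exchange, you discharge all congruence reasoning once via the Replacement lemma --- which the paper states but then effectively re-proves inline in each typing case --- and you get exactly one case per contraction, which makes it harder to lose a redex: notably, your \inflabel{Rcv} case is explicit, whereas the paper's \inflabel{T-Req} case only considers \inflabel{Cong} and silently skips the contraction of a fully evaluated request $\svc{i}{x}<\tuple{v}>$. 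Your version is therefore a legitimate, arguably more systematic, alternative; just be sure to actually state and prove the inversion lemmas and the decomposition property (that $\context{E}{e_0}$ well typed yields a typing subderivation for $e_0$ under the same $\Gamma$), since the Replacement lemma presupposes them.
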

\begin{proof}
  By induction on the typing derivation $\mathcal{D}$ of $\Tjudge{\Gamma\mid\Sigma}{e}{T}$. We always assume
  $e\mid\mu --> e'\mid\mu'$ for some $e'$, $\mu'$. Otherwise, $e$ is stuck ($e\not\longrightarrow$) and the property trivially holds.
  We also assume $\Gamma\mid\Sigma\vdash\mu$ in each case.
  \newline
  
  \noindent \textbf{Basis:}

  \begin{description}
  \item[\inflabel{T-Var}, \inflabel{T-Inst}, \inflabel{T-$\dead$}:] Immediate, since $e$ is stuck.
  \end{description}
  
  \noindent \textbf{Inductive step:}

  \noindent \textit{Induction hypothesis (IH)}: The property holds for all proper subderivations of the derivation $\mathcal{D}$ of $\Tjudge{\Gamma\mid\Sigma}{e}{T}$.

  \begin{description}
  \item[\inflabel{T-Srv}, \inflabel{T-Img}, \inflabel{T-TAbs}:] The property trivially holds, since in each case,
   $e$ is stuck.

  \item[\inflabel{T-Par}:] From the conclusion of the rule it holds that $e = \<par> \tuple{e_{1}}$, $T = \Unit$
    and from its premises $\Tjudge{\Gamma\mid\Sigma}{e_{1,i}}{\Unit}$ for each $e_{1,i}$ in the sequence $\tuple{e_{1}}$.
    By the structure of $e$, there are three possible rules which can be at the root of the derivation for $e\mid\mu --> e'\mid\mu'$:
    \begin{description}
    \item[\inflabel{Par}:] Therefore $e = \<par> \tuple{e_{11}}\; (\<par> \tuple{e_{12}})\; \tuple{e_{13}}$ and
      $e' = \<par> \tuple{e_{11}}\;\tuple{e_{12}}\;\tuple{e_{13}}$ and $\mu' = \mu$. From the premises of \inflabel{T-Par}
      it holds that $\Tjudge{\Gamma\mid\Sigma}{e_{12,k}}{\Unit}$ for each $e_{12,k}$ in the sequence $\tuple{e_{12}}$.
      Choose $\Sigma' = \Sigma$. Together with the previously established facts we obtain a derivation of $\Tjudge{\Gamma\mid\Sigma'}{\<par> \tuple{e_{11}}\;\tuple{e_{12}}\;\tuple{e_{13}}}{T}$
      by \inflabel{T-Par}, $\Sigma\subseteq\Sigma'$ and $\Gamma\mid\Sigma'\vdash \mu'$ as desired.

    \item[\inflabel{React}:] Therefore $e = \<par> e''$, $e' = \prlll{e''\;\sigma_{b}(e_{b})}$, $\mu' = \mapadd{\mu}{i \mapsto (s, \tuple{m}')}$.
     By the premises of \inflabel{React}, $\mu(i) = (\srvt{\tuple{r}_{1}\;(\tuple{p}\trr e_{b})\;\tuple{r}_{2}}, \tuple{m})$,
     $\matchp{\tuple{p}, \tuple{m}}{\tuple{m}', \sigma}$ and $\sigma_{b} = \sigma \cup \Subst{\<this> := i}$. Choose $\Sigma' = \Sigma$.
     Since $\Gamma\mid\Sigma\vdash\mu$, $\mu(i)$ is well typed. From its shape it is typed by rule \inflabel{T-Img} as some $\Timg{T'}$.
     Thus, by the premises of \inflabel{T-Img}, $s$ is typed as $\Tsrv{T'}$ and each element in the buffer $\tuple{m}$ 
     is a valid request value for the server template $s$.
     By the match soundness and completeness lemma from the paper, each request value in the buffer $\tuple{m}'$ occurs in $\tuple{m}$.
     Hence, we obtain a derivation for \Tjudge{\Gamma,\Sigma}{(s, \tuple{m}')}{\Timg{T'}}. Thus, $\Gamma\mid\Sigma\vdash \mu'$ 
     and also $\Gamma\mid\Sigma'\vdash\mu'$.

     From the shape of server template $s$, it must be typed by rule \inflabel{T-Srv} as the last step in a derivation $D_{s}$. Therefore, from the premises of this
     rule, we obtain a derivation for $\Tjudge{\Gamma, \tuple{y_{l,k}\tpe T_{l,k}}, \<this>\tpe \Tsrv{T'}\mid\Sigma}{e_{b}}{\Unit}$,
     i.e., $\Tjudge{\Gamma, \tuple{y_{l,k}\tpe T_{l,k}}, \<this>\tpe \Tsrv{T'}\mid\Sigma'}{e_{b}}{\Unit}$, since $\Sigma' = \Sigma$.
     The $\tuple{y_{l,k}\tpe T_{l,k}}$ are the arguments in the join pattern $\tuple{p}$. Applying the substitution lemma~\ref{lem:substition}
     multiple times to the latter derivation yields a derivation of $\Tjudge{\Gamma\mid\Sigma'}{e_{b}}{\Unit}$.
     $\Tjudge{\Gamma\mid\Sigma'}{\sigma_{b}(e_{b})}{\Unit}$. The first application of the lemma to eliminate $\<this>\tpe \Tsrv{T'}$
     is justified by the derivation $\mathcal{D}_{s}$. The other applications to eliminate $\tuple{y_{l,k}\tpe T_{l,k}}$ are justified by
     the match soundness and completeness lemma, which guarantees that the selection of argument values in the substitution $\sigma$ are from
     matching service request values in $\tuple{m}$, which is well-typed under $\Gamma\mid\Sigma'$. Hence $\tuple{\Tjudge{\Gamma\mid\Sigma'}{\sigma(y_{l,k})}{T_{l,k}}}$ holds.

     Finally, since $\Sigma' = \Sigma$, $\Tjudge{\Gamma\mid\Sigma}{e''}{\Unit}$, we have $\Tjudge{\Gamma\mid\Sigma'}{e''}{\Unit}$.
     Together with $\Tjudge{\Gamma\mid\Sigma'}{e_{b}}{\Unit}$ by rule \inflabel{T-Par}, we obtain $\Tjudge{\Gamma\mid\Sigma'}{\prlll{e''\;e_{b}}}{\Unit}$,
     i.e., $\Tjudge{\Gamma\mid\Sigma'}{e'}{T}$. This together with the previously established $\Gamma\mid\Sigma'\vdash \mu'$ is the property we wanted
     to show.          

    \item[\inflabel{Cong}:] Therefore $e = \context{E}{e_{1,j}}$ for an expression $e_{1,j}$ in the sequence $\tuple{e_{1}}$ and
     $e' = \context{E}{e'_{1,j}}$ for some $e'_{1,j}$, where $e_{1,j}\mid\mu --> e'_{1,j}\mid\mu'$.
     Since $\Tjudge{\Gamma\mid\Sigma}{e_{1,j}}{\Unit}$, it follows from (IH) that there is a derivation of $\Tjudge{\Gamma\mid\Sigma'}{e'_{1,j}}{\Unit}$
     with $\Sigma\subseteq\Sigma'$ and $\Gamma\mid\Sigma\vdash\mu'$
     From $\Tjudge{\Gamma\mid\Sigma}{e_{1,i}}{\Unit}$ for each $i\neq j$, $\Sigma\subseteq\Sigma'$ and lemma~\ref{lem:location-typing},
     we obtain derivations $\Tjudge{\Gamma\mid\Sigma'}{e_{1,i}}{\Unit}$. Together with $\Tjudge{\Gamma\mid\Sigma'}{e'_{1,j}}{\Unit}$     
     we obtain a derivation for \Tjudge{\Gamma\mid\Sigma'}{e'}{T} by rule \inflabel{T-Par} as desired.
    \end{description}

  \item[\inflabel{T-Snap}:] Therefore $e = \<snap> e_{1}$, $T = \Timg{T'}$ and \Tjudge{\Gamma\mid\Sigma}{e_{1}}{\Tsrvinst{T'}}. By the structure of $e$, there are two possible rules which can be at the root of the derivation for $e\mid\mu --> e'\mid\mu'$:
    \begin{description}
    \item[\inflabel{Snap}:] 
      Therefore, $e_{1} = i \in \Nat$, $e' = (\srvt{\tuple{r}}, \tuple{m})$ or $e' = \dead$, and $\mu' = \mu$. From \Tjudge{\Gamma\mid\Sigma}{e_{1}}{\Tsrvinst{T'}} and the shape of $e_{1}$, rule \inflabel{T-Inst} is the root
      of the corresponding derivation. Thus, $i \in \Sigma$ and $\Sigma(i) = \Tsrvinst{T'}$ by the premises of this rule. Choose $\Sigma' = \Sigma$. Since $\Gamma\mid\Sigma\vdash\mu$, $\mu = \mu'$ and $\Sigma' = \Sigma$,
      it also holds that $\Gamma\mid\Sigma'\vdash\mu'$. Hence $\Tjudge{\Gamma\mid\Sigma'}{e'}{T}$ as desired.

    \item[\inflabel{Cong}:] 
      Therefore, $e' = \<snap> e_{2}$ for some $e_{2}$, and $e_{1}\mid\mu --> e_{2}\mid\mu'$ holds. Applying this together with \Tjudge{\Gamma{\mid\Sigma}}{e_{1}}{\Tsrvinst{T'}}
      to (IH) yields \Tjudge{\Gamma\mid\Sigma''}{e_{2}}{\Tsrvinst{T'}} for $\Sigma''$, where $\Sigma\subseteq\Sigma''$ and $\Gamma\mid\Sigma''\vdash \mu'$.
      Together with rule \inflabel{T-Snap} we obtain \Tjudge{\Gamma\mid\Sigma''}{\<snap> e_{2}}{\Timg{T'}}, i.e., \Tjudge{\Gamma\mid\Sigma''}{e'}{T}.
      Choose $\Sigma' = \Sigma''$.      
    \end{description}

  \item[\inflabel{T-Repl}:] Therefore $e = \repl{e_{1}\;e_{2}}$, $T = \Unit$,\linebreak \Tjudge{\Gamma{\mid\Sigma}}{e_{1}}{\Tsrvinst{T'}} and \Tjudge{\Gamma{\mid\Sigma}}{e_{2}}{\Timg{T'}}. 
    By the structure of $e$, there are two possible rules which can be at the root of the derivation for $e\mid\mu --> e'\mid\mu'$:
    \begin{description}
    \item[\inflabel{Repl}:] 
      Therefore, $e_{1} = i \in \Nat$, $i\in\dom{\mu}$, $e_{2} = (\srvt{\tuple{r}}, \tuple{m})$ or $e_{2} = \dead$, $e' = \prlll{\varepsilon}$ and $\mu' = \mapadd{mu}{i \mapsto s}$. 
      Hence $i$ is well typed under $\Gamma\mid\Sigma$ as $\Tsrvinst{T'}$. Together with \Tjudge{\Gamma{\mid\Sigma}}{e_{2}}{\Timg{T'}} and $\Gamma\mid\Sigma\vdash \mu$
      it holds that $\Gamma\mid\Sigma\vdash\mu'$. Choose $\Sigma' = \Sigma$. Apply rule \inflabel{T-Par} to obtain \Tjudge{\Gamma\mid\Sigma'}{e'}{T} as desired.

    \item[\inflabel{Cong}:] 
      Therefore, $e' = \<repl> e'_{1}\; e'_{2}$ for some $e'_{1}$, $e'_{2}$ and either $e_{1}\mid\mu --> e'_{1}\mid\mu'$, $e_{2} = e'_{2}$ or $e_{2}\mid\mu --> e'_{2}\mid\mu'$, $e_{1} = e'_{1}$ holds. 
      We only show the first case, the other is similar.
      Apply (IH) to \Tjudge{\Gamma{\mid\Sigma}}{e_{1}}{\Tsrvinst{T'}} and $e_{1}\mid\mu --> e'_{1}\mid\mu'$ to obtain $\Sigma''$ with $\Sigma\subseteq\Sigma''$ and $\Gamma\mid\Sigma''\vdash\mu'$
      and \Tjudge{\Gamma\mid\Sigma''}{e'_{1}}{\Tsrvinst{T'}}. Choose $\Sigma' = \Sigma''$. Apply lemma~\ref{lem:location-typing} to \Tjudge{\Gamma{\mid\Sigma}}{e_{2}}{\Timg{T'}} in order 
      to obtain \Tjudge{\Gamma{\mid\Sigma'}}{e_{2}}{\Timg{T'}}. Finally, apply rule \inflabel{T-Repl} to obtain \Tjudge{\Gamma\mid\Sigma'}{e'}{T} as desired.      
    \end{description}

  \item[\inflabel{T-Spwn}:] Therefore $e = \<spwn> e''$, $T = \Tsrvinst{T'}$ and $\Tjudge{\Gamma\mid\Sigma}{e''}{\Timg{T'}}$.
    By the structure of $e$, there are two possible rules which can be at the root of the derivation of $e --> e'$:
    \begin{description}
    \item[\inflabel{Spwn}:] Therefore $e'' = (\srvt{\tuple{r}}, \tuple{m})$ or $e'' = \dead$, $e' = i \in \Nat$, $i\notin\dom{\mu}$ and $\mu' = \mapadd{\mu}{i \mapsto e''}$.
      With $\Gamma\mid\Sigma\vdash \mu$ and definition~\ref{def:well-typed-rt} it follows that $i\notin\dom{\Sigma}$.
      Choose $\Sigma' = \mapadd{\Sigma}{i \mapsto \Timg{T'}}$. By rule \inflabel{T-Inst} and definition of $\Sigma'$, it holds that $\Tjudge{\Gamma\mid\Sigma'}{i}{\Tsrvinst{T'}}$,
      i.e., $\Tjudge{\Gamma\mid\Sigma'}{e'}{T}$. By construction, $\Sigma \subseteq \Sigma'$. 

      What is left to show is $\Gamma\mid\Sigma'\vdash\mu'$:

      First note $\dom{\mu} = \dom{\Sigma}$ and for all $j \in \dom{\Sigma}$,
      $\Sigma(j) = \Sigma'(j)$, $\mu(j) = \mu'(j)$  and $\Tjudge{\Gamma\mid\Sigma}{\mu(j)}{\Sigma(j)}$.
      Hence $\Tjudge{\Gamma\mid\Sigma}{\mu'(j)}{\Sigma'(j)}$ and by lemma~\ref{lem:location-typing}, 
      $\Tjudge{\Gamma\mid\Sigma'}{\mu'(j)}{\Sigma'(j)}$ for each $j \in \dom{\Sigma}$.
      
      From $\Tjudge{\Gamma\mid\Sigma}{e''}{\Timg{T'}}$ , $\Sigma \subseteq \Sigma'$ and lemma~\ref{lem:location-typing} we obtain
      $\Tjudge{\Gamma\mid\Sigma'}{e''}{\Timg{T'}}$. Together with the definitions of $\Sigma'$, $\mu'$, this is a derivation for
      $\Tjudge{\Gamma\mid\Sigma'}{\mu'(i)}{\Sigma'(i)}$. 
      
      In summary, we have established that\linebreak $\Tjudge{\Gamma\mid\Sigma'}{\mu'(j)}{\Sigma'(j)}$ for all $j \in \dom{\mu}\cup\Set{i} = \dom{\mu'} = \dom{\Sigma'}$.
      By definition~\ref{def:well-typed-rt}, this means $\Gamma\mid\Sigma'\vdash\mu'$, what was left to show.
            
    \item[\inflabel{Cong}:] Therefore, by the structure of $e$, it holds that $e = \context{E}{e''}$ for the context $\context{E}{\cdot} = \<spwn>\hole$. By the premise of \inflabel{Cong}
     we obtain $e''\mid\mu --> e'''\mid\mu'$, hence $e' = \context{E}{e'''} = \<spwn> e'''$. Applying the (IH) to $\Tjudge{\Gamma\mid\Sigma}{e''}{\Timg{T'}}$ and  $e''\mid\mu --> e'''\mid\mu'$     yields a derivation of
     $\Tjudge{\Gamma\mid\Sigma'}{e'''}{\Timg{T'}}$ for some $\Sigma'$ with $\Sigma\subseteq\Sigma'$ and $\Gamma\mid\Sigma'\vdash\mu'$. 
     From the previous typing derivation and rule \inflabel{T-Spwn} we obtain a derivation for $\Tjudge{\Gamma\mid\Sigma'}{\<spwn> e'''}{\Tsrvinst{T'}}$, i.e.,
     $\Tjudge{\Gamma\mid\Sigma'}{e'}{T}$ as desired.
    \end{description}

  \item[\inflabel{T-Svc}:] Therefore $e = \svc{e''}{x_{i}}$, $T = T_{1,i}$, \linebreak$\Tjudge{\Gamma\mid\Sigma}{e''}{\Tsrvinst{\Tsrv{\tuple{x\tpe T_{1}}}}}$, where
    $x_{i}\tpe T_{1,i}$ occurs in the sequence $\tuple{x\tpe T_{1}}$. Since $e\mid\mu --> e'\mid\mu'$ by assumption, $e''$ cannot be a value, otherwise $e$ too would be a value and hence stuck.
    Together with the structure of $e$,  reduction rule \inflabel{Cong} is the only possible root of the derivation of  $e\mid\mu --> e'\mid\mu'$, where $e = \context{E}{e''}$.
    Hence $e' = \context{E}{e'''} = \svc{e'''}{x_{i}}$ and  $e''\mid\mu --> e'''\mid\mu'$ by the premise of \inflabel{Cong}. Applying the (IH) to $e''\mid\mu --> e'''\mid\mu'$ and
    $\Tjudge{\Gamma\mid\Sigma}{e''}{\Tsrvinst{\Tsrv{\tuple{x\tpe T_{1}}}}}$ yields a derivation of $\Tjudge{\Gamma\mid\Sigma'}{e'''}{\Tsrvinst{\Tsrv{\tuple{x\tpe T_{1}}}}}$,
    for some $\Sigma'$ with $\Sigma\subseteq\Sigma'$ and $\Gamma\mid\Sigma'\vdash\mu'$.
    By rule \inflabel{T-Svc} and the previously established facts, we obtain a derivation for $\Tjudge{\Gamma\mid\Sigma'}{\svc{e'''}{x_{i}}}{T_{1,i}}$, which is also
    a derivation of $\Tjudge{\Gamma\mid\Sigma'}{e'}{T}$ as desired.

  \item[\inflabel{T-Req}:] Therefore $e = e''<e_{1}\ldots e_{n}>$, $T = \Unit$, \linebreak$\Tjudge{\Gamma}{e''}{<T_{1}\ldots T_{n}>}$ and $(\Tjudge{\Gamma}{e_{i}}{T_{i}})_{1\in\oto{n}}$.
    Since $e --> e'$ by assumption, there is an expression in the set $\Set{e'', e_{1}, \ldots, e_{n}}$ which is not a value, otherwise $e$ is a value and stuck.
    Together with the structure of $e$, reduction rule \inflabel{Cong} is the only possible root of the derivation of $e --> e'$. Therefore $e = \context{E}{e'''}$,
    where $\context{E}{\cdot} = \hole<e_{1}\ldots e_{n}>$ or $\context{E}{\cdot} = e''<\tuple{e_{11}}\;\hole\; \tuple{e_{22}}>$. For any of the possible shapes of $\context{E}{}$,
    we can straightforwardly apply the (IH) to obtain  a derivation of $\Tjudge{\Gamma}{e'}{T}$ as desired.   

  \item[\inflabel{T-TApp}:] Therefore $e = \Tapp{e''}{T_{1}}$, $T = T'\Subst{\alpha := T_{1}}$, $\ftv{T_{1}}\subseteq \ftv{\Gamma}$, $\Subjudge{\Gamma}{T_{1}}{T_{2}}$ and
   $\Tjudge{\Gamma\mid\Sigma}{e''}{\Tuni{\alpha\subtpe T_{2}}{T'}}$.
    By the structure of $e$, there are two possible rules which can be at the root of the derivation of $e\mid\mu --> e'\mid\mu'$:
    \begin{description}
    \item[\inflabel{TAppAbs}:] Therefore $e'' = \Tabs{\alpha\subtpe T_{2}}{e'''}$ and hence $e' = e'''\Subst{\alpha := T_{1}}$.
      From the structure of $e$, $e''$ and the available rules, 
      there is a proper subderivation in $\mathcal{D}$ of $\Tjudge{\Gamma, \alpha\subtpe T_{2}\mid\Sigma }{e'''}{T'}$.
      Together with $\Subjudge{\Gamma}{T_{1}}{T_{2}}$ and the type substitution lemma~\ref{lem:typesubst}, we obtain a derivation
      for $\Tjudge{\Gamma\mid\Sigma}{e'''\Subst{\alpha := T_{1}}}{T'\Subst{\alpha := T_{1}}}$. Choose $\Sigma' = \Sigma$, then the previous derivation
      also is a derivation of $\Tjudge{\Gamma\mid\Sigma'}{e'}{T}$ as desired.
       
    \item[\inflabel{Cong}:] Straightforward application of the (IH) similar to the previous cases.
    \end{description}

  \item[\inflabel{T-Sub}:] By premise of the rule, $\Tjudge{\Gamma\mid\Sigma}{e}{T'}$ and $\Subjudge{\Gamma}{T'}{T}$. Apply (IH) to the former and then \inflabel{T-Sub} to
   obtain a derivation of $\Tjudge{\Gamma\mid\Sigma'}{e'}{T}$ and an appropriate $\Sigma'$.
  \end{description}
\end{proof}


}
{}

\end{document}
